\documentclass[11pt,reqno]{amsart}

\usepackage{tikz}
\usetikzlibrary{arrows,positioning, calc}
\tikzstyle{vertex}=[draw,fill=black!15,circle,minimum size=20pt,inner sep=0pt]

\title[Parametric set-theoretic YBE: $p$-racks, solutions $\&$ quantum algebras]{Parametric set-theoretic Yang-Baxter equation: $p$-racks, solutions $\&$ quantum algebras}
\author[Anastasia Doikou]{Anastasia Doikou}

\address[Anastasia Doikou] {Department of Mathematics, Heriot-Watt University,
Edinburgh EH14 4AS $\&$ Maxwell Institute for Mathematical Sciences, Edinburgh EH8 9BT, UK}
\email{a.doikou@hw.ac.uk}

  \usepackage{latexsym} 
  \usepackage[all]{xy}
  \usepackage{amsfonts} 
  \usepackage{amsthm} 
  \usepackage{amsmath} 
  \usepackage{amssymb}
  \usepackage{pifont}  
  \usepackage{enumerate}
  \usepackage{dcolumn}
  \usepackage{comment}
  \usepackage{hyperref}  
  \usepackage{tikz-cd}
\usepackage[english]{babel}
  \usepackage{amsfonts} 
\usepackage[utf8]{inputenc}
\usepackage{graphicx}
\usepackage{subcaption}
\usepackage[colorinlistoftodos]{todonotes}
\usepackage{indentfirst}
\usepackage{pifont}  
  \usepackage{enumerate}
  \usepackage{dcolumn}
  \usepackage{comment}
\usepackage[capitalise]{cleveref}

 \usepackage{tikz}
\usetikzlibrary{arrows}

 \newcolumntype{2}{D{.}{}{2.0}}
  \xyoption{2cell}

\newcommand{\hiddenpower}[2] { \ifnum \numexpr#2=1 #1 \else #1^#2 \fi }
\numberwithin{equation}{section}

\def\be{\begin{equation}}
\def\ee{\end{equation}}
\def\ba{\begin{eqnarray}}
\def\ea{\end{eqnarray}}

\newcommand{\cal}{\mathcal}

\usepackage{xparse}
\ExplSyntaxOn
\newcounter{diff_order}
\newcounter{diff_power}

\newcommand{\rawdiff}[3]
{
	\setcounter{diff_order}{0}
	\clist_map_inline:nn{#3}{\stepcounter{diff_order}}
	
	\frac{\hiddenpower{#1}{\thediff_order} #2}
	{
		\def\old_var{DefaultValue}
		\setcounter{diff_power}{0}
		
		\clist_map_inline:nn{#3}
		{
			\def\new_var{##1}
			\ifnum \thediff_power=0
				\stepcounter{diff_power}
			\else
				\tl_if_eq:NNTF \new_var \old_var
				{\stepcounter{diff_power}}
				{
					#1 \hiddenpower{\old_var}{\thediff_power}
					\setcounter{diff_power}{1}
				}
			\fi

			\def\old_var{##1}
		}
		
		#1 \hiddenpower{\old_var}{\thediff_power}
	}
}
\setlength{\textwidth}{16.0cm}
\setlength{\textheight}{21.8cm}
\setlength{\evensidemargin}{0.50cm}
\setlength{\oddsidemargin}{0.50cm}
\setlength{\topmargin}{-7pt}
\setlength{\headsep}{25pt}
\frenchspacing

\DeclareMathOperator{\EEnd}{End}

\def\Label#1{\label{#1}\ifmmode\llap{[#1] }\else 
  \marginpar{\smash{\hbox{\tiny [#1]}}}\fi} 
  \def\Label{\label} 

\ExplSyntaxOff

\newlength{\bibitemsep}\setlength{\bibitemsep}{.2\baselineskip plus .05\baselineskip minus .05\baselineskip}
\newlength{\bibparskip}\setlength{\bibparskip}{0pt}
\let\oldthebibliography\thebibliography
\renewcommand\thebibliography[1]{%
  \oldthebibliography{#1}%
  \setlength{\parskip}{\bibitemsep}%
  \setlength{\itemsep}{\bibparskip}%
}

\newtheorem{thm}{Theorem}[section]
\newtheorem{lemma}[thm]{Lemma}
\newtheorem{cor}[thm]{Corollary}
\newtheorem{pro}[thm]{Proposition}
\newtheorem{defn}[thm]{Definition}

\newtheorem{rem}[thm]{Remark}
\newtheorem{exa}[thm]{Example}

\newcommand{\id}{\operatorname{id}}




\newenvironment{widegather }{\wideregion[-9mm]\gather}{\endgather\endwideregion}

\parskip 1.0ex

\begin{document}
\vskip 0.8in

\hfill
\begin{abstract} 
The theory of the parametric set-theoretic Yang-Baxter equation is established from a purely algebraic point of view. 
The first step towards this objective is the introduction  of certain generalizations of the familiar shelves and racks called parametric ($p$)-shelves and racks. These objects satisfy a {\it parametric self-distributivity} condition and lead to solutions of the Yang-Baxter equation. Novel, non-reversible solutions are obtained from $p$-shelf/rack solutions by a suitable parametric twist, whereas all reversible set-theoretic solutions are reduced to the identity map via a parametric twist. The universal algebras associated to both $p$-rack and generic parametric, set-theoretic solutions are next presented and the corresponding universal ${\cal R}$-matrices are derived. The admissible universal Drinfel'd twist is constructed allowing the derivation of the general set-theoretic universal ${\cal R}$-matrix.
By introducing the concept of a parametric coproduct we prove the existence of a parametric co-associativity. We show that the parametric coproduct is an algebra homomorphism and the universal ${\cal R}$-matrices satisfy intertwining relations with the algebra coproducts.\\
\\
\end{abstract}

\maketitle
 
\date{}



\section{Introduction}

\noindent 
The Yang-Baxter equation (YBE) is a central object in contemporary mathematics and mathematical physics. The breadth of its applications extends from one dimensional
statistical systems and integrable quantum field theories to quantum group theory \cite{Drinfel'd, Jimbo1, Jimbo2} and low-dimensional topology \cite{Jo82, Kauff}
and a plethora of other areas in mathematics and physics.  The equation was first introduced in a purely physical context in \cite{Yang} as the main mathematical tool for the investigation 
of quantum systems with many particle interactions, and in \cite{Baxter} 
for the study of statistical model known as the anisotropic Heisenberg magnet. The idea of
set-theoretic solutions to the Yang-Baxter equation was suggested later by Drinfel'd \cite{Dr92} and since then, set-theoretic solutions
have been extensively investigated primarily by means of representations of the braid group, but almost exclusively for the parameter free Yang-Baxter equation (see for instance \cite{EtScSo99, GuaVen, Ru05, Ru07}).
The investigation of set-theoretic solutions of the  non-parametric Yang-Baxter equation and the associated algebraic structures is a highly active research field that has been particularly prolific, given that a significant number of related studies has been produced over the past
few years (see for instance \cite{Bachi}--\cite{LebMan}  \cite{DoiSmo}--\cite{DoRy23}, \cite{JesKub, Smo1, Smo2}, \cite{GatMaj}--\cite{Gat2}, \cite{Pili, JesKub}).   
The theory of the set-theoretic Yang-Baxter equation has numerous significant connections to distinct mathematical
areas, such as group theory,  algebraic number theory,  Hopf-Galois extensions, non-commutative rings,  knot theory, Hopf algebras and quantum groups,  universal algebras, groupoids,  trusses and heaps \cite{Brz:Lietruss}, pointed Hopf algebras, {Yetter-Drinfel'd} modules and Nichols algebras (see for instance among \cite{Andru, braceh, Bachi, Brz:Lietruss, EtScSo99}, 
\cite{Pili}-\cite{Jesp2},\cite{Jo82}-\cite{LebVen}, \cite{Ru07,Smo1}).
 Moreover,  interesting links  between the
set-theoretic Yang-Baxter equation and geometric crystals \cite{Crystals, Crystal2}, or soliton cellular automatons \cite{Cell, Cell2} have been shown.  Concrete connections with quantum spin-chain like systems were also made in \cite{DoiSmo, DoiSmo2}.

The main objectives of the present investigation are the derivation of set-theoretic solutions of the parametric Yang-Baxter equation as well as the rigorous formulation of the associated 
Yang-Baxter algebraic structures underpinning the parametric set-theoretic Yang-Baxter equation, 
and the identification of the universal ${\cal R}$-matrices.
Set-theoretic solutions for the parametric Yang-Baxter equation (Yang-Baxter 
maps) have been studied up to date only
in the context of classical discrete integrable systems connected also to the notion of
Darboux-B\"acklund transformation in the Lax pair formulation 
\cite{Adler, Papa,  Papa2, Veselov} and the refactorization frame.  In this investigation for the first time to our knowledge an entirely algebraic analysis for the parametric set-theoretic Yang-Baxter equation is undertaken and purely algebraic solutions are produced.
Earlier works on the algebraic structures of the non-parametric set-theoretic Yang-Baxter equation provide a basic algebraic blueprint \cite {Doikou1, DoGhVl, DoRySt}, however the parametric case turns out to be considerably more involved in comparison as dealing with the various parameters involved is a combinatorial problem on its own. This statement concerns all the main findings of the present investigation, such as  the proof of the main Theorem \ref{le:lndsol} as well as the formulation of the underlying quantum algebras associated to the set-theoretic parametric version of the Yang-Baxter equation. 
For instance the usual notion of (co)-associativity does not apply any more, however a version of parametric (co)-associativity together with consistent recursion relations are provided making the notion of a parametric co-product a well defined mathematical object as an algebra homomorphism.

We introduce at this point the parametric set-theoretic Yang-Baxter equation.
Let $X, \ Y \subseteq X$ be non-empty sets,  $z_{i},z_{j} \in Y,$ $i,j \in \mathbb {\mathbb Z}^+$ and $R^{z_{ij}}:X\times X\rightarrow X\times X,$ such that for all $x,y\in X,$
 $~R^{z_{ij}}(y, x)= \big (\sigma^{z_{ij}} _{x}(y), \tau^{z_{ij}}_{y}(x)\big ).$
The notation $z_{ij}$ denotes dependence 
on $(z_i,  z_j).$  We say that $(X, R^{z_{ij}})$ is a solution of the parametric, set-theoretic Yang-Baxter equation (or simply a solution) if 
\begin{equation}
R_{12}^{z_{12}} \ R^{z_{13}}_{13}\ R_{23}^{z_{23}} =  R_{23}^{z_{23}}\ R^{z_{13}}_{13}\  R_{12}^{z_{12}}, \label{YBE}
\end{equation}
where
$R_{12}^{z_{ij}}(c,b,a) = (\sigma^{z_{ij}}_b(c) ,\tau^{z_{ij}}_c(b), a),$ 
$~R_{13}^{z_{ij}}(c,b,a) = (\sigma^{z_{ij}}_a(c) ,b, \tau^{z_{ij}}_c(a))$ and $~R_{23}^{z_{ij}}(c,b,a) = (c, \sigma^{z_{ij}}_a(b) ,\tau^{z_{ij}}_b(a)).$ We say that $R^{z_{ij}}$ is a left non-degenerate if for all $x\in X,$ $z_{i},z_{j}\in Y,$ $\sigma^{z_{ij}}_{x}$ is a bijective function 
and  non-degenerate if both $\sigma^{z_{ij}}_{x},\ \tau^{z_{ij}}_{x}$ are bijective functions. Also, the solution $(X, R^{z_{ij}})$  is called ``reversible'' \cite{Adler, Papa, Papa2} if $R^{z_{21}}_{21} R^{z_{12}}_{12} = \id$. Interestingly, all the solutions from the point of view of discrete integrable systems \cite{Adler, Papa, Papa2, Veselov} or the re-factorization approach are reversible and it is shown  in Proposition \ref{Th.r->r'}, Remark \ref{rem1} that they are reduced to the identity map.  Here, for the first time we present non-reversible solutions of the parametric set-theoretic Yang-Baxter equation.

We summarize below the main outcomes of each of the following sections. In Section 2, we introduce the notions of parametric ($p$)-shelves and racks to describe solution of the parametric set-theoretic Yang-Baxter equation. These algebraic objects 
may be thought of as parametric generalizations of the familiar shelves and racks, they satisfy a generalized {\it parametric self-distributivity} and naturally yield solutions of the Yang-Baxter equation. Parametric shelf or rack solutions of the Yang-Baxter equation are derived here for the first time. We then show in the second subsection that every left non-degenerate solution is Drinfel'd equivalent to a solution given by a shelf, (see \cref{Th.r->r'}. In fact, by \cref{le:lndsol}). Every left non-degenerate solution can be obtained from a shelf solution by finding a special map of the shelf itself, which we call \emph{twist}, \cref{def:twist:shelf}. Bijective solutions naturally correspond to $p$-racks.
In the third subsection we introduce the notions of parametric Yang-Baxter operators and structures by generalizing the idea of braiding to the parametric case. This section serves mainly as a warm up to Section 3, given that the generalized Yang-Baxter structures encode part of the associated underlying universal algebras, which are introduced next.

In Section 3,  we focus on the linearized version of the set-theoretic Yang-Baxter equation. 
In the first subsection we introduce the $p$-rack algebra and we then construct the associated universal 
${\cal R}$-matrix. 
By introducing the concept of a parametric coproduct we prove the existence of a parametric co-associativity. These notions allow us to show that the parametric coproduct is an algebra homomorphism and the universal ${\cal R}$-matrices intertwine with the algebra generators coproducts.
Moreover, due to the parametric coassociativity the $n$-coproducts are consistently expressed in terms of $2^{n-2}$ suitable parametric binary trees.
This algebraic structure is a rather new paradigm of quantum algebra, as the parameters are part of the universal structure. 
We also note that a quantum integrability statement is presented, as a family of mutually commuting non-local objects is identified. In the second subsection, we suitably extend the $p$-rack algebra and present the decorated $p$-rack and the $p$-set Yang-Baxter algebra. By means of a suitable admissible universal Drinfel'd twist 
we construct the universal-set theoretic ${\cal R}-$matrix. Fundamental representations of the aforementioned  
algebras are also considered leading naturally to the $p$-rack and general set-theoretic solutions of the parametric Yang-Baxter equation.

\section{Solving the parametric set-theoretic YBE} 

\noindent In this section we formulate the basic problem, which is the derivation of solutions of the parametric set-theoretic Yang-Baxter equation. Specifically, we identify the sets of conditions that such solutions satisfy and then we move one to the identification of concrete solutions. The key step into achieving these goals in the presentation of some new algebraic objects called $p$-shelves and $p$-racks that satisfy parametric self-distributivity conditions. Use of the newly introduced $p$-shelves/racks together with the identification of a suitable parametric twist allows the derivation of generic parametric set-theoretic solutions.

\subsection{Parametric shelves and racks}
We are first exploring a certain class of solutions of the parametric set-theoretic equation that are generalizations of the shelf/rack \cite{Shelf-history2} type solutions of the non-parametric YBE. Such solutions are derived here for the first time.
We start our analysis in this section with a definition that generalizes the notion of shelves and racks by introducing the {\it parametric shelves and racks} ($p$-shelves and $p$-racks).  Note that henceforth we write  $z_{i,j,k, \ldots} \in Y$ as a shorthand notation for $z_i, z_j, z_k,  \ldots \in Y,$ $i,j,k, \ldots \in {\mathbb Z}^+.$

\begin{defn} ($p$-shelves and $p$-racks) Let $X, \ Y \subseteq X$ be  non-empty sets. We define for all $z_{i,j} \in Y,$ the binary operation $\triangleright_{z_{ij}}: X \times X \to X,$ $(a,b) \mapsto a \triangleright_{z_{ij}} b.$ The pair $\left(X,\,\triangleright_{z_{ij}} \right)$ 
is said to be a \emph{left parametric ($p$)-shelf} if\, $\triangleright_{z_{ij}}$\, satisfies the generalized left $p$-self-distributivity:
    \begin{equation}\label{shelf}
a\triangleright_{z_{ik}}\left(b\triangleright_{z_{jk}} c\right)
        = \left(a\triangleright_{z_{ij}} b\right)\triangleright_{z_{jk}}\left(a\triangleright_{z_{ik}} c\right) 
    \end{equation}
    for all $a,b,c\in X,$ $z_{i,j,k} \in Y.$ Moreover, a left $p$-shelf $\left(X,\,\triangleright_{z_{ij}} \right)$ is called 
     a left $p$-rack if the maps $L^{z_{ij}}_a:X\to X$ defined by $L^{z_{ij}}_a\left(b\right):= a\triangleright_{z_{ij}} b$, for all $a,b, \in X,$ $z_{i,j} \in Y,$ are bijective.
\end{defn}
From now on whenever we say $p$-shelf or $p$-rack we mean left $p$-shelf or left $p$-rack.

\begin{pro} \label{prose}
Let $X, \ Y \subseteq X$ be non-empty sets. We define for $z_{i, j} \in Y$ 
the binary operation $\triangleright_{z_{ij}}:  X \times X \to X,$ $(a,b) \mapsto a \triangleright_{z_{ij}} b.$ Then $R^{z_{ij}}:X \times X \to X \times X$, such that for all $a,b \in X, $ $z_{i,j} \in Y,$ $R^{z_{ij}}(b,a) = (b, b \triangleright_{z_{ij}} a)$ is a solution of the parametric set-theoretic Yang-Baxter equation if and only if $(X, \triangleright_{z_{ij}})$ is a $p$-shelf.
\end{pro}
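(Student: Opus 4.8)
The plan is to substitute the specific form $R^{z_{ij}}(b,a) = (b,\, b\triangleright_{z_{ij}} a)$ directly into the parametric Yang-Baxter equation \eqref{YBE} and compare the two sides as maps $X\times X\times X \to X\times X\times X$. First I would record how each of the three operators acts on a generic triple $(c,b,a)$: since $\sigma^{z_{ij}}_x(y) = y$ and $\tau^{z_{ij}}_y(x) = y \triangleright_{z_{ij}} x$ here, we have $R_{12}^{z_{ij}}(c,b,a) = (c,\, c\triangleright_{z_{ij}} b,\, a)$, $R_{13}^{z_{ij}}(c,b,a) = (c,\, b,\, c\triangleright_{z_{ij}} a)$, and $R_{23}^{z_{ij}}(c,b,a) = (c,\, b,\, b\triangleright_{z_{ij}} a)$. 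Note in particular that the first slot is never altered by any of these maps, which will keep the bookkeeping manageable.

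Next I would apply the left-hand side $R_{12}^{z_{12}} R_{13}^{z_{13}} R_{23}^{z_{23}}$ to $(c,b,a)$, working from the right: $R_{23}^{z_{23}}$ sends it to $(c, b, b\triangleright_{z_{23}} a)$; then $R_{13}^{z_{13}}$ sends that to $(c, b, c\triangleright_{z_{13}}(b\triangleright_{z_{23}} a))$; then $R_{12}^{z_{12}}$ sends that to $\big(c,\, c\triangleright_{z_{12}} b,\, c\triangleright_{z_{13}}(b\triangleright_{z_{23}} a)\big)$. Similarly, for the right-hand side $R_{23}^{z_{23}} R_{13}^{z_{13}} R_{12}^{z_{12}}$: $R_{12}^{z_{12}}$ gives $(c, c\triangleright_{z_{12}} b, a)$; then $R_{13}^{z_{13}}$ gives $(c, c\triangleright_{z_{12}} b, c\triangleright_{z_{13}} a)$; then $R_{23}^{z_{23}}$ gives $\big(c,\, c\triangleright_{z_{12}} b,\, (c\triangleright_{z_{12}} b)\triangleright_{z_{23}}(c\triangleright_{z_{13}} a)\big)$. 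Comparing componentwise, the first two slots agree automatically, and the equality of the third slots for all $a,b,c\in X$ and all $z_{1},z_2,z_3\in Y$ is exactly
\[
c\triangleright_{z_{13}}\big(b\triangleright_{z_{23}} a\big) = \big(c\triangleright_{z_{12}} b\big)\triangleright_{z_{23}}\big(c\triangleright_{z_{13}} a\big),
\]
which is precisely the generalized left $p$-self-distributivity \eqref{shelf} (with the relabelling $a\mapsto c$, $b\mapsto b$, $c\mapsto a$ and indices $i=1,j=2,k=3$).

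This establishes both directions at once: if $(X,\triangleright_{z_{ij}})$ is a $p$-shelf then the third components coincide and $R^{z_{ij}}$ solves \eqref{YBE}; conversely, if $R^{z_{ij}}$ solves \eqref{YBE}, reading off the third component gives \eqref{shelf}, so $(X,\triangleright_{z_{ij}})$ is a $p$-shelf. There is no serious obstacle here — the computation is a direct unwinding of definitions. The only point requiring a little care is the index bookkeeping: one must check that the subscripts $z_{12}, z_{13}, z_{23}$ produced on each side land in the pattern $(z_{ik}, z_{jk}, z_{ij})$ demanded by \eqref{shelf}, so that the parametric (not merely the plain) self-distributivity is what emerges, and conversely that every instance of \eqref{shelf} is realized by some triple $(z_1,z_2,z_3)$ — which it is, since $Y$ supplies the parameters freely. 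The bijectivity clause plays no role in this proposition and is reserved for the $p$-rack refinement.
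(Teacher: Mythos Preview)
Your proof is correct and follows essentially the same approach as the paper: both directly evaluate the two sides of \eqref{YBE} on a generic triple, observe that the first two components agree automatically, and identify the equality of the third components with the $p$-self-distributivity \eqref{shelf}. The only cosmetic difference is the labelling of the triple (you use $(c,b,a)$, the paper uses $(a,b,c)$), which is immaterial.
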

\begin{proof}
Let $R^{z_{ij}}$ be a solution of the parametric Yang-Baxter equation, then for all $a,b,c \in X$ and $z_{1, 2,3} \in Y$ the LHS of the Yang-Baxter equation (\ref{YBE}) gives
\begin{equation}
    R_{12}^{z_{12}} \ R^{z_{13}}_{13}\ R_{23}^{z_{23}}(a,b,c)  = (a, a\triangleright_{z_{12}}b,  a\triangleright_{z_{13}}( b\triangleright_{z_{23}}c)) \label{lhs} \end{equation}
    whereas the RHS gives
    \begin{equation}    
    R_{23}^{z_{23}}\ R^{z_{13}}_{13}\  R_{12}^{z_{12}}(a,b,c)  =  (a, a\triangleright_{z_{12}}b,  \left(a\triangleright_{z_{12}} b\right)\triangleright_{z_{23}}\left(a\triangleright_{z_{13}} c\right) )  \label{rhs} \end{equation}
    Equating (\ref{lhs}) and (\ref{rhs}) we conclude:
\[a\triangleright_{z_{13}}\left(b\triangleright_{z_{23}} c\right)
        = \left(a\triangleright_{z_{12}} b\right)\triangleright_{z_{23}}\left(a\triangleright_{z_{13}} c\right),\]
    i.e. $(X, \triangleright_{z_{ij}})$ is a $p$-shelf.  Conversely, if $(X, \triangleright_{z_{ij}})$ is a $p$-shelf then automatically the map $R^{z_{ij}}$ is a solution of the parametric set-theoretic Yang-Baxter equation.
\end{proof}

\begin{lemma}
    Let $(X, \triangleright_{z_{ij}})$ be a $p$-rack and $R^{z_{ij}}: X\times X \to X \times X,$ 
    $R^{z_{ij}} (a ,b ) = (a, a\triangleright_{z_{ij}} b)$ be a non-degenerate solution of the parametric set-theoretic YBE, i.e. $a\triangleright_{z_{ij}}$ is a bijection for all $a\in X,$ $z_{i,j} \in Y$. Then, $R^{z_{ij}}: X \times X \to X \times X$ is invertible with $(R^{z_{ij}})^{-1} (a, a\triangleright_{z_{ij}} b) = (a,b). $
\end{lemma}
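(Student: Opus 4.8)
The plan is to exhibit an explicit two-sided inverse for $R^{z_{ij}}$ and to check directly that it agrees with the candidate map in the statement. Since $(X,\triangleright_{z_{ij}})$ is a $p$-rack, for every $a\in X$ and $z_{i,j}\in Y$ the left translation $L^{z_{ij}}_a\colon X\to X$, $b\mapsto a\triangleright_{z_{ij}}b$, is a bijection, so $(L^{z_{ij}}_a)^{-1}$ is well defined. I would therefore introduce the map $S^{z_{ij}}\colon X\times X\to X\times X$ given by $S^{z_{ij}}(a,c)=\big(a,(L^{z_{ij}}_a)^{-1}(c)\big)$, and observe that, like $R^{z_{ij}}$, it leaves the first component untouched; this is precisely what makes the composition bookkeeping transparent, because the same $a$ (hence the same translation $L^{z_{ij}}_a$) is used on the way back.

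Next I would compute the two composites. For $S^{z_{ij}}\circ R^{z_{ij}}$ one gets $S^{z_{ij}}\big(a,a\triangleright_{z_{ij}}b\big)=\big(a,(L^{z_{ij}}_a)^{-1}(L^{z_{ij}}_a(b))\big)=(a,b)$, and for $R^{z_{ij}}\circ S^{z_{ij}}$ one gets $R^{z_{ij}}\big(a,(L^{z_{ij}}_a)^{-1}(c)\big)=\big(a,L^{z_{ij}}_a((L^{z_{ij}}_a)^{-1}(c))\big)=(a,c)$. Hence $S^{z_{ij}}\circ R^{z_{ij}}=R^{z_{ij}}\circ S^{z_{ij}}=\id$, so $R^{z_{ij}}$ is invertible with $(R^{z_{ij}})^{-1}=S^{z_{ij}}$. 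Evaluating $S^{z_{ij}}$ at $(a,a\triangleright_{z_{ij}}b)$ then yields exactly $(R^{z_{ij}})^{-1}(a,a\triangleright_{z_{ij}}b)=(a,b)$, as claimed.

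There is no real obstacle here; the argument is a short verification. The only place the $p$-rack hypothesis (as opposed to merely the $p$-shelf hypothesis) is used is in asserting the existence of $(L^{z_{ij}}_a)^{-1}$, which is exactly the bijectivity built into the definition of a $p$-rack — equivalently, the non-degeneracy assumption on $R^{z_{ij}}$ supplies this and nothing more. Alternatively one could avoid naming $S^{z_{ij}}$ and simply verify that the assignment $(a,a\triangleright_{z_{ij}}b)\mapsto(a,b)$ is well defined, using that $b\mapsto a\triangleright_{z_{ij}}b$ is injective so distinct preimages are never identified, and that it is defined on all of $X\times X$ since any $(a,c)$ equals $(a,a\triangleright_{z_{ij}}b)$ with $b=(L^{z_{ij}}_a)^{-1}(c)$; but introducing $S^{z_{ij}}$ explicitly keeps the two-sided check cleaner.
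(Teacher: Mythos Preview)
Your proof is correct and takes essentially the same approach as the paper, which simply remarks that the result is straightforward from the bijectivity of $a\triangleright_{z_{ij}}$. You have merely spelled out the verification in full by naming the inverse $S^{z_{ij}}$ and checking both composites.
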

\begin{proof} The proof is straightforward, using the fact that $a  \triangleright_{z_{ij}}: X \to X$ is a bijection.
\end{proof}

Before we further proceed it is useful to recall the notion of skew braces as 
this will allow us to derive concrete solutions of the parametric set-theoretic Yang-Baxter equation.
\begin{defn}\cite{Ru05}-\cite{Ru19}, \cite{GuaVen}.
A {\it left skew brace} is a set $B$ together with two group operations $+,\circ :B\times B\to B$, 
the first is called addition and the second is called multiplication, such that for all $ a,b,c\in B$,
\begin{equation}\label{def:dis}
a\circ (b+c)=a\circ b-a+a\circ c.
\end{equation}
If $+$ is an abelian group operation $B$ is called a 
{\it left brace}.
Moreover, if $B$ is a left skew brace and for all $ a,b,c\in B$ $(b+c)\circ a=b\circ a-a+c \circ a$, then $B$ is called a 
{\it two sided skew brace.}  Analogously if $+$ is abelian and $B$ is a skew brace,  then $B$ is called a {\it two sided brace}.
\end{defn}
The additive identity of a skew brace $B$ will be denoted by $0$ and the multiplicative identity by $1$.  
In every skew brace $0=1$. 

From now on when we say skew brace we mean left skew brace.
We may now state the following proposition regarding concrete solutions of the parametric set-theoretic
Yang-Baxter equation coming from skew braces.

Before we proceed with the construction of $p$-racks we give a simple example of a brace.
\begin{exa}[See \cite{BrzRyb:con} Corollary 3.14]\label{ex:cyclicbraces}
Let $\mathrm{U}(\mathbb{Z}/2^m\mathbb{Z})$ denote a set of invertible integers modulo $2^m$, for some $m \in \mathbb{N}$. 
Then a triple $(\mathrm{U}(\mathbb{Z}/2^m\mathbb{Z}),+_1,\circ )$ is a brace, where $a+_1b=a-1+b,$ 
for all $a,b\in \mathrm{U}(\mathbb{Z}/2^m\mathbb{Z})$, $+$ and $\circ $ are addition and multiplication of integer numbers modulo 
$2^m$, respectively. 
For instance, 1. for $m=1,$ $\mathrm{U}(\mathbb{Z}/2\mathbb{Z}) =\{1\},$ 2. for $m=2,$ $\mathrm{U}(\mathbb{Z}/2^2\mathbb{Z}) = \{1,\ 3\},$ 3. for $m=3,$ $\mathrm{U}(\mathbb{Z}/2^m\mathbb{Z})
= \{1,\ 3,\ 5,\ 7\},$ etc.\end{exa}

\begin{pro} \label{prora1}
Let $(X,+, \circ)$ be a skew brace and $Y\subseteq X,$ such that  
 \begin{itemize}
   \item for all $a,b \in X,$ $z\in Y,$ $(a+b)\circ z = a \circ z -z +b \circ z,$
   \item $z \in Y$ are central in $(X, +).$
    \end{itemize}
     Define also for all $z_{i,j}\in Y$ and some $\xi \in X$ the binary operation $\triangleright_{z_{ij}}: X \times X \to X,$ such that for all $a,b \in X,$
$a \triangleright_{z_{ij}} b = -\xi\circ a\circ z_i \circ z_j^{-1} + \xi\circ b + a \circ z_i \circ z_j^{-1}.$ 
Then the map $R^{z_{ij}}: X\times X \to X \times X,$ 
such that for all $a,b \in X,$ $z_{i,j} \in Y,$ 
\[ R^{z_{ij}}(a,b) = (a, a \triangleright_{z_{ij}} b) \]
is a solution of the parametric Yang-Baxter equation.  Moreover, the map $R^{z_{ij}}$ is invertible.
\end{pro}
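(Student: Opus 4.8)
The plan is to verify directly that the binary operation $\triangleright_{z_{ij}}$ defined in the statement satisfies the generalized left $p$-self-distributivity \eqref{shelf}, and then invoke Proposition~\ref{prose} to conclude that $R^{z_{ij}}(a,b) = (a, a\triangleright_{z_{ij}}b)$ is a solution; invertibility then follows since each map $a\triangleright_{z_{ij}}$ is manifestly a composition of bijections of $(X,+)$ (left/right translations by fixed elements and the bijection induced by $\xi\circ(-)$ conjugated appropriately), so the Lemma applies. The real content is the self-distributivity check.

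First I would introduce the abbreviation $w_{ij} := z_i \circ z_j^{-1}$ (so that $a\triangleright_{z_{ij}}b = -\xi\circ a\circ w_{ij} + \xi\circ b + a\circ w_{ij}$) and record the two structural hypotheses in usable form: that every $z\in Y$ is central in $(X,+)$, and that right multiplication by any $z\in Y$ distributes over $+$, i.e. $(a+b)\circ z = a\circ z - z + b\circ z$. From the latter one deduces by a short induction/cancellation argument the analogous rule for $w_{ij} = z_i\circ z_j^{-1}$, namely $(a+b)\circ w_{ij} = a\circ w_{ij} - w_{ij} + b\circ w_{ij}$ (one must be a little careful: right multiplication by $z_j^{-1}$ is the compositional inverse of right multiplication by $z_j$, and one checks it distributes over $+$ as well, possibly picking up a sign/shift by $w_{ij}$ or $z_j^{-1}$ that I would track explicitly). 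I would also note that elements of the form $a\circ w_{ij}$ and their negatives, being built from $z$'s, interact well with the centrality hypothesis where needed.

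Next I would expand both sides of \eqref{shelf}. For the left-hand side $a\triangleright_{z_{ik}}(b\triangleright_{z_{jk}}c)$, I substitute the definition of $b\triangleright_{z_{jk}}c = -\xi\circ b\circ w_{jk} + \xi\circ c + b\circ w_{jk}$ into the outer operation, then use the distributivity rule for right multiplication by $w_{ik}$ to push $a\circ w_{ik}$ and $-\xi\circ a\circ w_{ik}$ through the three-term sum, and finally use centrality of the $z$-built elements to reorder and cancel. For the right-hand side $(a\triangleright_{z_{ij}}b)\triangleright_{z_{jk}}(a\triangleright_{z_{ik}}c)$, I similarly substitute both inner expressions and expand. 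The key algebraic identities that make the two sides coincide are: (i) $w_{ik} = w_{ij}\circ$-conjugate relations, concretely that $w_{ik}$ and $w_{ij}, w_{jk}$ satisfy $w_{ij}\cdot$(something)$ = w_{ik}$ at the level needed — in fact the clean fact is that $z_i\circ z_k^{-1} = (z_i\circ z_j^{-1})\circ(z_j\circ z_k^{-1})$ only if the $z$'s commute multiplicatively, which is \emph{not} assumed, so instead the cancellation must come from grouping terms of the form $-\xi\circ(\cdots) + \xi\circ(\cdots)$ as conjugation-by-$\xi$ operators and using that conjugation is an automorphism of $(X,+)$; and (ii) repeated use of the brace relation \eqref{def:dis} to handle terms like $\xi\circ(u + v)$.

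The main obstacle I anticipate is precisely the bookkeeping of the non-commuting multiplicative structure: unlike the non-parametric shelf case, here the ``parameters'' $z_i, z_j, z_k$ enter through $w_{ij}, w_{jk}, w_{ik}$ which do not satisfy a naive cocycle identity unless extra commutativity is imposed, so the self-distributivity must instead fall out from the interplay of \eqref{def:dis}, the extra right-distributivity hypothesis on $Y$, and centrality of $Y$ in $(X,+)$. I expect the cleanest route is to rewrite $a\triangleright_{z_{ij}}b$ as $\lambda_{\xi}\bigl(\mu_{a\circ w_{ij}}(b)\bigr)$ where $\lambda_\xi(x) = \xi\circ x$-type maps and $\mu_g(x) = -g + x + g$ is inner conjugation in $(X,+)$, observe that $\lambda_\xi$ is an additive bijection and $\mu$ gives an action, and then reduce \eqref{shelf} to a cocycle-type identity among the $w$'s that is forced by the hypotheses on $Y$. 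Once both sides are shown equal, Proposition~\ref{prose} gives the Yang-Baxter property and invertibility is immediate. \qed
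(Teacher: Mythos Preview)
Your overall strategy---verify the $p$-self-distributivity \eqref{shelf} directly and invoke Proposition~\ref{prose}---is exactly what the paper does. But the details of your expansion plan are muddled in ways that would stall the computation.

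First, the ``obstacle'' you anticipate is a phantom. You write that $z_i\circ z_k^{-1} = (z_i\circ z_j^{-1})\circ(z_j\circ z_k^{-1})$ ``only if the $z$'s commute multiplicatively, which is not assumed''; in fact this identity is just associativity of $\circ$ (insert $z_j^{-1}\circ z_j = 1$), so $w_{ik} = w_{ij}\circ w_{jk}$ always holds. No cocycle condition beyond this is required, and the detour through $\lambda_\xi$, $\mu_g$ and ``cocycle-type identities among the $w$'s'' is unnecessary.

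Second, you have the roles of the two distributivity rules reversed. For the LHS $a\triangleright_{z_{ik}}(b\triangleright_{z_{jk}}c)$, the outer terms $-\xi\circ a\circ w_{ik}$ and $+a\circ w_{ik}$ sit outside; what must be expanded is the middle term $\xi\circ(\,-\xi\circ b\circ w_{jk} + \xi\circ c + b\circ w_{jk}\,)$, and this uses the \emph{left} skew-brace axiom $a\circ(b+c)=a\circ b - a + a\circ c$, not right-distributivity. Conversely, on the RHS you must expand $(a\triangleright_{z_{ij}}b)\circ z_j\circ z_k^{-1}$, i.e.\ a sum right-multiplied by elements of $Y$; this is precisely where the extra hypothesis $(a+b)\circ z = a\circ z - z + b\circ z$ (together with its consequences for $z^{-1}$ and products, which you were right to flag) is used. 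After both sides are expanded, centrality of $Y$ in $(X,+)$ lets you permute the $\pm z$-shift terms so that everything cancels and both sides reduce to
\[
-\xi\circ a\circ z_i\circ z_k^{-1} + \xi - \xi\circ\xi\circ b\circ z_j\circ z_k^{-1} + \xi\circ\xi\circ c - \xi + \xi\circ b\circ z_j\circ z_k^{-1} + a\circ z_i\circ z_k^{-1}.
\]
For invertibility the paper just writes down the explicit inverse $a\triangleright_{z_{ij}}^{-1}b = a\circ w_{ij} - \xi^{-1} + \xi^{-1}\circ b - \xi^{-1}\circ a\circ w_{ij} + \xi^{-1}$, which is cleaner than arguing via compositions of abstract bijections.
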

\begin{proof}
It suffices to show that the binary operation satisfies the $p$-self-distributivity condition, 
i.e.  $(X, \triangleright_{z_{ij}})$ is a $p$-shelf. Let $z_{i, j, k} \in Y,$ then the LHS of condition (\ref{shelf})

\begin{eqnarray}
a\triangleright_{z_{ik}}\left(b\triangleright_{z_{jk}} c\right)   
& =&  -\xi\circ a \circ z_i \circ z_k^{-1} +\xi \circ (b\triangleright_{z_{jk}} c) + a \circ z_i \circ z_k^{-1}  \nonumber\\
  & =& -\xi\circ a \circ z_i \circ z_k^{-1} +\xi -\xi\circ \xi\circ  b \circ z_j \circ z_k^{-1} +\xi\circ\xi \circ c 
  \nonumber\\ & &-\xi + \xi\circ b \circ z_j \circ z_k^{-1}  +a \circ z_i \circ z_k^{-1}. \nonumber
  \end{eqnarray}
Similarly the RHS of condition (\ref{shelf}):
  \begin{eqnarray}
 \left(a\triangleright_{z_{ij}} b\right)\triangleright_{z_{jk}}\left(a\triangleright_{z_{ik}} c\right) 
& =&  -\xi\circ \left(a\triangleright_{z_{ij}} b\right) \circ z_j \circ z_k^{-1}  + \xi\circ (a\triangleright_{z_{ik}} c) + \left(a\triangleright_{z_{ij}} b\right) \circ z_j \circ z_k^{-1} \nonumber\\
&=& -\xi \circ(-\xi \circ a \circ z_i\circ z_j^{-1} + \xi\circ b +a \circ z_i \circ z_j^{-1})\circ z_j \circ z_k^{-1}\nonumber\\
& &+\xi \circ (-\xi\circ a\circ z_i \circ z_k^{-1} + \xi \circ c + a\circ z_i \circ z_k^{-1})
\nonumber\\
& & (-\xi \circ a \circ z_i \circ z_j^{-1} + \xi\circ b + a \circ z_i \circ z_j^{-1}) \circ z_j \circ z_k^{-1}
\nonumber\\
  & =& -\xi\circ a \circ z_i \circ z_k^{-1} +\xi -\xi\circ \xi\circ  b \circ z_j \circ z_k^{-1} +\xi\circ\xi \circ c 
  \nonumber\\ & &-\xi + \xi\circ b \circ z_j \circ z_k^{-1}  +a \circ z_i \circ z_k^{-1}.\nonumber
  \end{eqnarray}  
Indeed, LHS $=$ RHS which concludes our proof.

  Moreover, there exists 
  $a \triangleright_{z_{ij}}^{-1}: X \to X,$ such that  $a \triangleright_{z_{ij}}^{-1}(a \triangleright_{z_{ij}}b) =a \triangleright_{z_{ij}}(a \triangleright_{z_{ij}}^{-1}b)=b$ and we immediately extract from $a \triangleright_{z_{ij}}(a \triangleright_{z_{ij}}^{-1}b)=b$ that $a \triangleright^{-1}_{z_{ij}}b = 
  a\circ z_i \circ z_j^{-1} -\xi^{-1} +\xi^{-1} \circ b - \xi^{-1} \circ a \circ z_i \circ z_j^{-1} + \xi^{-1},$  i.e.  $a \triangleright_{z_{ij}}$ is a bijection for all $a\in X,$ $z_{i,j} \in Y.$  Hence, $(X, \triangleright_{z_{ij}})$ is a $p$-rack, and  $R^{z_{ij}}$ is invertible. 
  \end{proof}

\begin{rem} \label{rem0} 
If $(X,  +, \circ)$ is a brace,  i.e.  $(X, +)$ is an abelian group, and $\xi=1,$ then for all $a,b \in X,$ $z_{i,j} \in Y,$ 
$a \triangleright_{z_{ij}} b =  -a\circ z_i \circ z_j^{-1}+b + a \circ \ z_i \circ {z}_j^{-1} = b,$ and hence $R^{z_{ij}} = \id.$ 
\end{rem}

\begin{exa} \label{good1} 
Recall example \ref{ex:cyclicbraces} ($X  = \mathrm{U}(\mathbb{Z}/2^m\mathbb{Z})$), fix some $\xi \in X$ and define for all $a,b\in X,$ $z_{i}, z_{j} \in X$ $a\triangleright_{z_{ij}} b = -_1\xi\circ   a \circ z_i \circ z_j^{-1} +_1 \xi \circ b +_1 a\circ z_i \circ z_j^{-1}.$ Recall that for all $a,b \in X$ $a+_1 b = a -1 +b,$ then for all $a\in X,$ $-_1 a = 1 -a +1,$ and hence $a\triangleright_{z_{ij}} b = -\xi\circ   a \circ z_i \circ z_j^{-1} + \xi \circ b + a\circ z_i \circ z_j^{-1},$ where $+,$ and $\circ$ are the addition and multiplication in integers $mod ~m.$ For instance consider $\mathrm{U}(\mathbb{Z}/2^3\mathbb{Z}) = \{1,3,5,7\} $ and $\xi =3,$ then all $a\triangleright_{z_{ij}}b$ can be directly computed for all $a,b,z_i,z_j \in X.$ Specifically, we see that $a\triangleright_{z_{ij}} b \neq a\triangleright_{z_{i'j'}}b$ if $(z_i, z_j)\neq (z_{i'}, z_{j'}).$ Indeed, we choose $\xi =3$ and compute for example, $1\triangleright_{13}3 = 3,$ $\ 1\triangleright_{15}3 =7.$

\end{exa}

We now focus on the  more general solution of the set-theoretic Yang-Baxter equation of the type 
$R^{z_{ij}}: X \times X \to X \times X,$ such that for all 
$a, b \in X,$  $z_{i,j} \in Y,$
\[ R^{z_{ij}} (b,a) = (\sigma^{z_{ij}}_a(b), \tau_b^{z_{ij}}(a)).\]
We next derive the conditions satisfied by the general set-theoretic solution of the parametric Yang-Baxter equation. 
\begin{pro} 
Let $X,\ Y\subseteq X,$ be non-empty sets, and define for all $a,b\in X,$ $z_{i,j} \in Y,$  the maps $\sigma_a^{z_{ij}}, \tau_b^{z_{ij}}: X \to X,$ 
$b \mapsto \sigma_a^{z_{ij}}(b)$ and $a \mapsto \tau_b^{z_{ij}}(a).$ 
Then $R^{z_{ij}}:X \times X \to X \times X$, 
such that  for all $a,b\in X,$ $z_{i,j}\in Y,$ $R^{z_{ij}}(b,a) = (\sigma_a^{z_{ij}}(b), \tau_b^{z_{ij}}(a))$ is a solution of the parametric set-theoretic Yang-Baxter equation if and only if, for all $z_{1,2,3} \in Y,$
\begin{eqnarray}
&&  \sigma^{z_{13}}_a(\sigma^{z_{12}}_b(c)) = \sigma^{z_{12}}_{\sigma^{z_{23}}_a\left(b\right)}(\sigma^{z_{13}}_{\tau^{z_{23}}_b\left(a\right)}(c)) \label{1}\\ 
&& \tau^{z_{13}}_c(\tau^{z_{23}}_b(a)) =\tau^{z_{23}}_{\tau^{z_{12}}_c\left(b\right)}(\tau^{z_{13}}_{\sigma^{z_{12}}_b\left(c\right)}(a)) \label{2}\\
&& \sigma^{z_{23}}_{\tau^{z_{13}}_{\sigma^{z_{12}}_b\left(c\right)}\left(a\right)}
    (\tau^{z_{12}}_c\left(b\right)) 
   = \tau^{z_{12}}_{\sigma^{z_{13}}_{\tau^{z_{23}}_b\left(a\right)}\left(c\right)}(\sigma^{z_{23}}_a\left(b\right) ). \label{3} \end{eqnarray}
\end{pro}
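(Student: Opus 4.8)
The plan is to substitute the explicit form of $R^{z_{ij}}$ into both sides of the parametric Yang-Baxter equation \eqref{YBE}, compute the action on a generic triple $(c,b,a)\in X\times X\times X$, and match components. This is the same strategy already used in the proof of Proposition~\ref{prose}, except that now both maps $\sigma$ and $\tau$ are nontrivial, so all three tensor slots get genuinely mixed and one obtains three conditions rather than one. First I would fix $z_{1,2,3}\in Y$ and write out $R^{z_{23}}_{23}(c,b,a)=(c,\sigma^{z_{23}}_a(b),\tau^{z_{23}}_b(a))$, then apply $R^{z_{13}}_{13}$ and then $R^{z_{12}}_{12}$, carefully tracking which argument feeds into which $\sigma$ or $\tau$ and with which pair of spectral parameters. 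The bookkeeping of the parameters is exactly the point flagged in the introduction as the source of extra difficulty: one must be disciplined that the operator carrying indices $\{i,j\}$ always acts on the pair sitting in slots $(i,j)$, and that the $z$-labels are inert scalars carried along unchanged.

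Carrying out the left-hand side, applying the maps from right to left, the first slot receives $\sigma^{z_{13}}_a\big(\sigma^{z_{12}}_b(c)\big)$ after the $\sigma$ in position $13$ acts on what position $12$ already produced; the second slot, after position $12$ acts, contains something built from $\tau^{z_{12}}$ and $\sigma^{z_{23}}$; the third slot contains an iterated $\tau$. Doing the same for the right-hand side $R^{z_{12}}_{12}R^{z_{13}}_{13}R^{z_{23}}_{23}$ wait --- I should be careful: the RHS of \eqref{YBE} is $R_{23}^{z_{23}}R^{z_{13}}_{13}R_{12}^{z_{12}}$, so here position $12$ acts first. Matching the first components of the two resulting triples yields \eqref{1}; matching the third components yields \eqref{2}; and matching the middle components yields the mixed relation \eqref{3}. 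Conversely, if \eqref{1}--\eqref{3} hold, then the two triples agree componentwise for every input, so $R^{z_{ij}}$ solves \eqref{YBE}. The equivalence is therefore a pure computation with no hidden nondegeneracy hypothesis needed.

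The main obstacle is purely organizational rather than conceptual: keeping the arguments of the nested maps straight. In the non-parametric braided setting this computation is standard (it is the brace/biquandle relation system), so the genuinely new content is verifying that the spectral-parameter labels distribute correctly --- e.g.\ that in \eqref{1} the inner $\sigma$ on the right carries $z_{13}$ with subscript $\tau^{z_{23}}_b(a)$ and not some other combination. I would double-check this by evaluating both sides on the trivial solution $\sigma^{z_{ij}}_a=\tau^{z_{ij}}_b=\id$ (where all three identities collapse to tautologies) and on a rack-type solution $\sigma^{z_{ij}}_a(b)=b$, $\tau^{z_{ij}}_b(a)=b\triangleright_{z_{ij}}a$, in which case \eqref{2} must reduce to the $p$-self-distributivity \eqref{shelf} of Proposition~\ref{prose} and \eqref{1},\eqref{3} must become trivial; agreement in these special cases confirms the parameter assignments. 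After that, writing the full general derivation is routine, and I would present it compactly by displaying the two iterated images $R_{12}^{z_{12}}R^{z_{13}}_{13}R_{23}^{z_{23}}(c,b,a)$ and $R_{23}^{z_{23}}R^{z_{13}}_{13}R_{12}^{z_{12}}(c,b,a)$ as two triples and reading off the three componentwise equalities.
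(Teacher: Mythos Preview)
Your approach is exactly the one the paper takes: evaluate both sides of \eqref{YBE} on a generic triple $(c,b,a)$, display the two resulting triples, and read off \eqref{1}--\eqref{3} as the componentwise equalities, with the converse being immediate. The only slip is that you momentarily swap which side produces $\sigma^{z_{13}}_a(\sigma^{z_{12}}_b(c))$ in the first slot (it is the RHS $R_{23}^{z_{23}}R^{z_{13}}_{13}R_{12}^{z_{12}}$, not the LHS), but you catch this yourself and it does not affect the argument.
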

\begin{proof} Let $R^{z_{ij}}$ be a solution.
We compute explicitly the LHS and RHS of the parametric Yang-Baxter equation (\ref{YBE}). The LHS of the Yang-Baxter equation gives, $a,b,c \in X,$ $z_{1,2,3} \in Y,$
    \begin{equation}
    R_{12}^{z_{12}} \ R^{z_{13}}_{13}\ R_{23}^{z_{23}}(c,b,a)  = (\sigma^{z_{12}}_{\sigma^{z_{23}}_a\left(b\right)}(\sigma^{z_{13}}_{\tau^{z_{23}}_b\left(a\right)}(c)), \tau^{z_{12}}_{\sigma^{z_{13}}_{\tau^{z_{23}}_b\left(a\right)}\left(c\right)}(\sigma^{z_{23}}_a\left(b\right) ),  \tau^{z_{13}}_c(\tau^{z_{23}}_b(a))), \label{lhs2} \end{equation}
    whereas the RHS gives
    \begin{equation}    
    R_{23}^{z_{23}}\ R^{z_{13}}_{13}\  R_{12}^{z_{12}}(c,b,a)  =  (\sigma^{z_{13}}_a(\sigma^{z_{12}}_b(c)), \sigma^{z_{23}}_{\tau^{z_{13}}_{\sigma^{z_{12}}_b\left(c\right)}\left(a\right)}
    (\tau^{z_{12}}_c\left(b\right)),  \tau^{z_{23}}_{\tau^{z_{12}}_c\left(b\right)}(\tau^{z_{13}}_{\sigma^{z_{12}}_b\left(c\right)}(a))). \label{rhs2} \end{equation}
 By equating (\ref{lhs2}) and (\ref{rhs2}) we arrive at (\ref{1})-(\ref{3}).  And conversely,  if conditions (\ref{1})-(\ref{3}) are satisfied then $R^{z_{ij}}$ automatically satisfies the parametric Yang-Baxter equation.
\end{proof}

\subsection{Generalized solutions from Drinfel'd twists}
In this subsection we construct generic solutions of the parametric set-theoretic Yang-Baxter equation by suitably twisting $p$-shelf solutions.
 We  first  introduce the notion of a parametric {\it Drinfel'd twist} and extend some earlier 
results shown for the non-parametric Yang-Baxter equation
 \cite{Doikou1, DoGhVl, DoiRyb22, DoRySt} to the parametric case. We note that a non-local twist type transformation for the non-parametric case was first introduced in \cite{Sol} and then further studied and exploited in \cite{Lebed, LebVen}.

\begin{defn}\label{def:Drinfiso}
Let $(X, R^{z_{ij}})$ and $(X,S^{z_{ij}})$ be solutions of the parametric set-theoretic Yang-Baxter equation.  We say that a map 
$\varphi^{z_{ij}}: X\times X\to X\times X$ 
is a \emph{Drinfel'd twist} (\emph{D-twist}) if for all $z_{i,j} \in Y,$
$$
\varphi^{z_{ij}}\, R^{z_{ij}} = S^{z_{ij}}\, (\varphi^{z_{ji}})^{(op)},
$$
where $(\varphi^{z_{ji}})^{(op)} = \pi \circ \varphi^{z_{ji}},$ and $\pi: X \times X \to X \times X$ is the ``flip'' map, such that for all $x,y\in X,$ $\pi(x,y) = (y,x).$
If $\varphi^{z_{ij}}$ is a bijection we say that $(X,R^{z_{ij}})$ 
and $(X, S^{z_{ij}})$ are {\it $D$-equivalent (via $\varphi^{z_{ij}}$)}, and 
we denote it by $R^{z_{ij}}\cong_D S^{z_{ij}}$.
\end{defn}
\begin{pro}\label{Th.r->r'}
    Let $(X,R^{z_{ij}})$ be a left non-degenerate solution, such that for all $a,b \in X,$ $z_{i,j} \in Y,$ $R^{z_{ij}}(b,a) = (\sigma^{z_{ij}}_{a}(b), \tau^{z_{ij}}_{b}(a))$ and let $(X,S^{z_{ij}})$ be a solution, such that for all $a,b \in X,$ $z_{i,j} \in Y,$  $S^{z_{ij}}(b,a) = (b ,b \triangleright_{z_{ij}} a)$ and $\tau_b^{z_{ij}}(a) : =(\sigma^{z_{ji}}_{\sigma^{z_{ij}}_a(b)})^{-1}(\sigma_a^{z_{ij}}(b)\triangleright_{z_{ij}} a).$ 
Then $R^{z_{ij}}$ is $D$-equivalent to $S^{z_{ij}}$.
\end{pro}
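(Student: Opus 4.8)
The plan is to exhibit an explicit Drinfel'd twist $\varphi^{z_{ij}}$ and then verify the intertwining identity $\varphi^{z_{ij}} R^{z_{ij}} = S^{z_{ij}} (\varphi^{z_{ji}})^{(op)}$ componentwise. Guided by the non-parametric case, I would take $\varphi^{z_{ij}}: X\times X\to X\times X$ to be
\[
\varphi^{z_{ij}}(b,a) = \big(\sigma^{z_{ij}}_a(b),\, a\big),
\]
i.e. $\varphi^{z_{ij}}$ acts by replacing the first coordinate using the given left map. Since $R^{z_{ij}}$ is left non-degenerate, each $\sigma^{z_{ij}}_a$ is a bijection, so $\varphi^{z_{ij}}$ is a bijection with inverse $(b',a)\mapsto ((\sigma^{z_{ij}}_a)^{-1}(b'),a)$; this establishes the $D$-equivalence (not merely a twist) once the intertwining relation is checked. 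First I would record the two sides of the defining relation applied to a generic pair $(b,a)$.

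For the left-hand side: $R^{z_{ij}}(b,a) = (\sigma^{z_{ij}}_a(b),\tau^{z_{ij}}_b(a))$, and then applying $\varphi^{z_{ij}}$ — which acts on a pair $(u,v)$ as $(\sigma^{z_{ij}}_v(u),v)$ — gives
\[
\varphi^{z_{ij}} R^{z_{ij}}(b,a) = \Big(\sigma^{z_{ij}}_{\tau^{z_{ij}}_b(a)}\big(\sigma^{z_{ij}}_a(b)\big),\ \tau^{z_{ij}}_b(a)\Big).
\]
For the right-hand side: $(\varphi^{z_{ji}})^{(op)}(b,a) = \pi\,\varphi^{z_{ji}}(b,a) = \pi\big(\sigma^{z_{ji}}_a(b),a\big) = \big(a,\sigma^{z_{ji}}_a(b)\big)$, and then $S^{z_{ij}}(u,v) = (u,\, u\triangleright_{z_{ij}} v)$ yields
\[
S^{z_{ij}}(\varphi^{z_{ji}})^{(op)}(b,a) = \Big(a,\ a\triangleright_{z_{ij}}\sigma^{z_{ji}}_a(b)\Big).
\]
Comparing first coordinates forces the identity $\sigma^{z_{ij}}_{\tau^{z_{ij}}_b(a)}\big(\sigma^{z_{ij}}_a(b)\big) = a$, which is exactly the statement that the left map of $S$ read off from $R$ is consistent with left non-degeneracy of $R$; I would verify this from relation \eqref{1} specialized appropriately (or, more directly, from the $\sigma$–$\tau$ consistency that any left non-degenerate parametric solution satisfies, obtained by reading the first-coordinate equation of \eqref{YBE}). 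Comparing second coordinates, together with this first-coordinate identity, forces
\[
a\triangleright_{z_{ij}}\sigma^{z_{ji}}_a(b) = \tau^{z_{ij}}_b(a),
\]
and since $a = \sigma^{z_{ij}}_{\tau^{z_{ij}}_b(a)}\big(\sigma^{z_{ij}}_a(b)\big)$, writing $a' := \sigma^{z_{ij}}_a(b)$ this is precisely the stipulated definition $\triangleright_{z_{ij}}$ in the statement, namely $\tau^{z_{ij}}_b(a) = \big(\sigma^{z_{ji}}_{\sigma^{z_{ij}}_a(b)}\big)^{-1}\big(\sigma^{z_{ij}}_a(b)\triangleright_{z_{ij}} a\big)$ rearranged — one inverts $\sigma^{z_{ji}}_{a'}$ and matches terms. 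So both coordinates reduce to the hypotheses of the proposition.

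The main obstacle I anticipate is purely bookkeeping: tracking which parametric label ($z_{ij}$ versus $z_{ji}$) decorates each occurrence of $\sigma$ and $\tau$ through the composition, since the flip in $(\varphi^{z_{ji}})^{(op)}$ swaps the roles of the two tensor factors and hence the order of the indices. Once the correct labels are pinned down, the verification is an identity-by-identity match against \eqref{1}–\eqref{3} (really just the first of these and the left non-degeneracy consistency). I would also need to confirm that $S^{z_{ij}}$ as defined — with $\triangleright_{z_{ij}}$ built from $\sigma^{z_{ij}},\tau^{z_{ij}}$ as in the statement — is genuinely a $p$-shelf solution, but this follows from Proposition \ref{prose} once one checks that $\triangleright_{z_{ij}}$ satisfies \eqref{shelf}; that self-distributivity is a consequence of \eqref{1} after substituting the definition of $\triangleright$, so it is subsumed in the same computation rather than being a separate difficulty.
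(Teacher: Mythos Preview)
Your choice of twist $\varphi^{z_{ij}}(b,a)=(\sigma^{z_{ij}}_a(b),a)$ does not work. With this $\varphi$, equating first coordinates of $\varphi^{z_{ij}}R^{z_{ij}}(b,a)$ and $S^{z_{ij}}(\varphi^{z_{ji}})^{(op)}(b,a)$ forces
\[
\sigma^{z_{ij}}_{\tau^{z_{ij}}_b(a)}\bigl(\sigma^{z_{ij}}_a(b)\bigr)=a,
\]
and you assert this follows from \eqref{1} or from some ``$\sigma$--$\tau$ consistency'' built into left non-degeneracy. It does not. Take $\sigma^{z_{ij}}_x=\id_X$ for all $x,z_i,z_j$: then $R^{z_{ij}}=S^{z_{ij}}$ is any $p$-shelf solution (left non-degenerate since $\id$ is bijective), the hypothesis on $\tau$ reads $\tau^{z_{ij}}_b(a)=b\triangleright_{z_{ij}}a$, and your first-coordinate identity collapses to $b=a$, which is false. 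Relation \eqref{1} involves three independent parameters $z_1,z_2,z_3$ and cannot be specialized to your two-parameter identity; what you have written down is an involutivity-type statement, not a consequence of the Yang--Baxter relations \eqref{1}--\eqref{3}.

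The paper's twist acts on the \emph{second} coordinate: $\varphi^{z_{ij}}(a,b)=(a,\sigma^{z_{ji}}_a(b))$. Then $(\varphi^{z_{ji}})^{(op)}(b,a)=(\sigma^{z_{ij}}_a(b),a)$, so the first coordinates of $\varphi^{z_{ij}}R^{z_{ij}}(b,a)$ and $S^{z_{ij}}(\varphi^{z_{ji}})^{(op)}(b,a)$ are \emph{both} $\sigma^{z_{ij}}_a(b)$ automatically, and the second-coordinate equation is exactly the hypothesis $\sigma^{z_{ji}}_{\sigma^{z_{ij}}_a(b)}(\tau^{z_{ij}}_b(a))=\sigma^{z_{ij}}_a(b)\triangleright_{z_{ij}} a$. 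The $D$-equivalence then falls out with no extra identities needed. Your overall strategy---exhibit $\varphi$, check two coordinate equations---is the right one, but the particular $\varphi$ must be chosen so that the resulting equations \emph{are} the stated hypothesis rather than a generally false extra constraint.
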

\begin{proof}
Let $\varphi^{z_{ij}}:X\times X\to X\times X$ be the map defined by
$~\varphi^{z_{ij}}(a,b):=(a,\sigma^{z_{ji}}_a(b)),$ 
for all $a,b\in X,$ $z_{i,j} \in Y$. 
$R^{z_{ij}}$ is left non-degenerate, hence $\varphi^{z_{ij}}$ is bijective and $(\varphi^{z_{ij}})^{-1}\left(a,b\right) = 
\left(a,(\sigma^{z_{ji}}_a)^{-1}\left(b\right)\right)$, also $(\varphi^{z_{ji}})^{(op)} (b,a) = (\sigma_{a}^{z_{ij}}(b), a)$ for all $a,b\in X$. Then 
\begin{eqnarray}
&&(\varphi^{z_{ij}})^{-1} S^{z_{ij}} (\varphi^{z_{ji}})^{(op)}(b,a)=(\varphi^{z_{ji}})^{-1} S^{z_{ij}} (\sigma^{z_{ij}}(b), a)=\nonumber\\
&&
(\varphi^{z_{ji}})^{-1}(\sigma^{z_{ij}}_a(b), \sigma^{z_{ij}}(b)\triangleright_{z_{ij}} a)= (\sigma^{z_{ij}}_a(b), (\sigma^{z_{ji}}_{\sigma^{z_{ij}}_a(b)})^{-1}(\sigma^{z_{ij}}(b)\triangleright_{z_{ij}} a)) \nonumber\\
&&  (\sigma^{z_{ij}}_a(b), \tau_{b}^{z_{ij}}(b)) = R^{z_{ij}}(b,a),\nonumber
\end{eqnarray}
where we have defined $\tau_b^{z_{ij}}(a) : =(\sigma^{z_{ji}}_{\sigma^{z_{ij}}_a(b)})^{-1}(\sigma_a^{z_{ij}}(b)\triangleright_{z_{ij}} a).$
That is $R^{z_{ij}}\cong_D S^{z_{ij}}$.
\end{proof}

\begin{rem} \label{rem1}
In the special case of reversible $R$-matrices we observe from the fundamental relation $R_{21}^{z_{ji}}R_{12}^{z_{ij}} = \id,$ that $\sigma^{z_{ji}}_{\sigma^{z_{ij}}_{a}{(b)}}(\tau_b^{z_{ij}}(a)) = a,$ which leads to $b\triangleright_{z_{ij}} a =a,$ and hence $S^{z_{ij}}(b,a) = (b,a)$ for all $a,b\in X,$ $z_{i,j}\in Y,$ i.e. $S^{z_{ij}} = \id.$ 
\end{rem}

\begin{defn}\label{def:twist:shelf}
    Let $(X,\triangleright_{z_{ij}})$ be a $p$-shelf.  We say that the twist $\varphi^{z_{ij}}: X \times X\to X \times X,$ such that 
for all $a,b\in X,$ $z_{i,j} \in Y,$ $\varphi^{z_{ij}}(a,b):=(a,\sigma^{z_{ji}}_a(b))$ and $\sigma_a^{z_{ij}}$ is a bijection, is an admissible twist, if for all $a,b, c\in X$, $z_{i,j,k } \in Y:$
    (1) $\sigma^{z_{ik}}_a(\sigma^{z_{ij}}_b(c)) = \sigma^{z_{ij}}_{\sigma^{z_{jk}}_a\left(b\right)}(\sigma^{z_{ik}}_{\tau^{z_{jk}}_b\left(a\right)}(c))$
    and (2) $\sigma^{z_{ik}}_c(b) \triangleright_{z_{ij}} 
    \sigma^{z_{jk}}_{c}(a) = \sigma^{z_{jk}}_c(b \triangleright_{z_{ij}} a).$
\end{defn}

In the following theorem we show that any left non-degenerate solution  $\left(X,\, R^{z_{ij}}\right)$ 
can be expressed in terms of the $p$-shelf $\left(X,\,\triangleright_{z_{ij}}\right)$  and its admissible twist.

\begin{thm}\label{le:lndsol}
    Let $\left(X,\,\triangleright_{z_{ij}} \right)$ be a $p$-shelf and $\varphi^{z_{ij}}: X \times X\to X \times X,$ such that $\varphi^{z_{ij}}(a,b):=(a,\sigma^{z_{ji}}_a(b))$ for all $a,b\in X,$ $z_{i, j} \in Y.$ Then, the map
$R^{z_{ij}}:X\times X\to X\times X$ defined by
    \begin{equation}\label{prop:solform}
        R^{z_{ij}}\left(b, a\right)  
        = \left(\sigma^{z_{ij}}_a\left(b\right),\, (\sigma^{z_{ji}}_{\sigma_a^{z_{ij}}(b)})^{-1}
        (\sigma_a^{z_{ij}}(b)\triangleright_{z_{ij}} a)\right),  
    \end{equation}
      for all $a,b\in X,$ $z_{i,j} \in Y$ is a solution if and only if $\varphi^{z_{ij}}$ is an admissible twist. 
\end{thm}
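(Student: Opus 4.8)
The plan is to reduce the statement to the component characterisation of parametric solutions established just above: for a map of the form $R^{z_{ij}}(b,a)=(\sigma^{z_{ij}}_a(b),\tau^{z_{ij}}_b(a))$, being a solution of \eqref{YBE} is equivalent to the three identities \eqref{1}--\eqref{3}. For the map $R^{z_{ij}}$ in \eqref{prop:solform} the $\sigma$-part is precisely the family $\sigma^{z_{ij}}_a$ that defines $\varphi^{z_{ij}}$, while the $\tau$-part is the composite $\tau^{z_{ij}}_b(a)=(\sigma^{z_{ji}}_{\sigma^{z_{ij}}_a(b)})^{-1}(\sigma^{z_{ij}}_a(b)\triangleright_{z_{ij}}a)$. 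For this $\tau$, and hence $R^{z_{ij}}$ itself, to be well defined one needs each $\sigma^{z_{ij}}_a$ to be a bijection, which is exactly the standing hypothesis of Definition~\ref{def:twist:shelf}, and I would record this at the outset. The theorem then becomes the assertion that \eqref{1}, \eqref{2} and \eqref{3} hold simultaneously if and only if conditions (1) and (2) of Definition~\ref{def:twist:shelf} do.

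The first half is essentially immediate: written with generic labels $z_{i,j,k}$ in place of $z_{1,2,3}$, identity \eqref{1} is verbatim condition (1) of Definition~\ref{def:twist:shelf}. For the rest I would substitute the explicit expression for $\tau^{z_{ij}}_b(a)$ into \eqref{2} and into \eqref{3}. In each case the first step is to clear the inverse maps $(\sigma^{z_{ji}}_{\bullet})^{-1}$ by applying the appropriate $\sigma$-bijections to both sides, which leaves an identity involving only the maps $\sigma^{z_{ij}}_a$ and the operations $\triangleright_{z_{ij}}$; one then merges the nested $\sigma$'s using \eqref{1} and rearranges the $\triangleright$-terms using the generalised $p$-self-distributivity \eqref{shelf}. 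After this, \eqref{2} and \eqref{3}, taken together and granted \eqref{1}, collapse to exactly condition (2) of Definition~\ref{def:twist:shelf}, namely $\sigma^{z_{ik}}_c(b)\triangleright_{z_{ij}}\sigma^{z_{jk}}_c(a)=\sigma^{z_{jk}}_c(b\triangleright_{z_{ij}}a)$. Since every substitution used is bijective, the reductions are reversible, so the equivalence closes in both directions: admissibility of $\varphi^{z_{ij}}$ yields \eqref{1}--\eqref{3} and hence a solution, while a solution forces \eqref{1}--\eqref{3} and hence conditions (1)--(2) of Definition~\ref{def:twist:shelf}.

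It is worth recording why the argument works at all: $R^{z_{ij}}$ is built so that $\varphi^{z_{ij}}R^{z_{ij}}=S^{z_{ij}}(\varphi^{z_{ji}})^{(op)}$ with $S^{z_{ij}}(b,a)=(b,b\triangleright_{z_{ij}}a)$, and $S^{z_{ij}}$ is automatically a solution because $(X,\triangleright_{z_{ij}})$ is a $p$-shelf (Proposition~\ref{prose}); admissibility of $\varphi^{z_{ij}}$ is precisely the compatibility with the threefold tensor structure that lets the Yang-Baxter property pass from $S^{z_{ij}}$ to $R^{z_{ij}}$. I expect the obstacle to be bookkeeping rather than anything structural: once $\tau$ is substituted, \eqref{2} and \eqref{3} carry six circulating parameters $z_{12},z_{21},z_{13},z_{31},z_{23},z_{32}$ inside compositions three or four layers deep, and one must keep strict track at every stage of which $\sigma$ is indexed by which argument, since a single mislabelled subscript destroys the cancellation. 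The hardest case is \eqref{3}, the ``mixed'' component that is already the delicate one in the non-parametric theory: there, removing the inverse maps and simplifying requires both admissibility identities together with \eqref{shelf} used at once rather than one after another. A secondary point to check carefully is that the elements fed to the maps $(\sigma^{z_{ji}}_{\bullet})^{-1}$ in the course of the argument genuinely lie in the relevant domains, so that the clearing of inverses is legitimate.
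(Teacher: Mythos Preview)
Your plan is correct and matches the paper's proof: both reduce the claim to the component identities \eqref{1}--\eqref{3}, note that \eqref{1} is verbatim condition~(1) of Definition~\ref{def:twist:shelf}, and handle \eqref{2} and \eqref{3} by substituting the explicit form of $\tau$, clearing the $(\sigma^{z_{ji}}_{\bullet})^{-1}$ via bijectivity, and simplifying with condition~(1) together with the $p$-self-distributivity until only condition~(2) remains. The paper carries out the forward and converse directions as separate explicit chains of equalities (and in the converse uses only \eqref{1} and \eqref{3} to extract condition~(2)), but this is exactly the computation your sketch describes.
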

\begin{proof}
    We first assume that $\varphi^{z_{ij}}$ is an admissible twist. Then (\ref{1}) immediately follows from (1).\\
    We set
    $\tau_b^{z_{ij}}(a) : =(\sigma^{z_{ji}}_{\sigma^{z_{ij}}_a(b)})^{-1}(\sigma_a^{z_{ij}}(b)\triangleright_{z_{ij}} a),$ for all $a,b \in X,$ $z_{i,j}\in Y$. For $a,b,c\in X$, $z_{i,j,k} \in Y$ we have: 
     $$
    \begin{aligned}
\tau^{z_{ij}}_{\sigma^{z_{ik}}_{\tau^{z_{jk}}_b
\left(a\right)}\left(c\right)}(\sigma^{z_{jk}}_a\left(b\right))
    &= (\sigma^{z_{ji}}_{\sigma^{z_{ij}}_{\sigma^{z_{jk}}_a\left(b\right)}(\sigma^{z_{ik}}_{\tau^{z_{jk}}_b\left(a\right)}\left(c\right))})^{-1} (\sigma^{z_{ij}}_{\sigma^{z_{jk}}_a\left(b\right)}(\sigma^{z_{ik}}_{\tau^{z_{jk}}_b\left(a\right)}\left(c\right)) \triangleright_{z_{ij}}\sigma_a^{z_{jk}}(b))\\
&=(\sigma^{z_{ji}}_{\sigma^{z_{ik}}_{a}
(\sigma^{z_{ij}}_b(c))})^{-1} 
(\sigma^{z_{ik}}_{a}(\sigma^{z_{ij}}_b(c)) \triangleright_{z_{ij}}\sigma_a^{z_{jk}}(b))& \mbox{by (1) {Definition} \ref{def:twist:shelf}}\\
    &= (\sigma^{z_{ji}}_{\sigma^{z_{ik}}_{a}
(\sigma^{z_{ij}}_b(c))})^{-1} 
(\sigma_{a}^{z_{jk}}(\sigma_b^{z_{ij}}(c) \triangleright_{z_{ij}} b))
    &\mbox{by (2) {Definition} \ref{def:twist:shelf}}\\
&=(\sigma^{z_{ji}}_{\sigma^{z_{ik}}_{a}
(\sigma^{z_{ij}}_b(c))})^{-1} 
\big (\sigma_{a}^{z_{jk}}(\sigma^{z_{ji}}_{\sigma_b^{z_{ij}}(c)}(\tau^{z_{ij}}_c(b)))\big )& \\
    &= (\sigma^{z_{ji}}_{\sigma^{z_{ik}}_{a}
(\sigma^{z_{ij}}_b(c))})^{-1} 
\big (\sigma^{z_{ji}}_{\sigma^{z_{ik}}_{a}
(\sigma^{z_{ij}}_b(c))}(\sigma^{z_{jk}}_{\tau^{z_{ik}}_{\sigma^{z_{ij}}_b\left(c\right)}\left(a\right)}
    (\tau^{z_{ij}}_c\left(b\right)))\big )  & \mbox{by (1) {Definition} \ref{def:twist:shelf}} \\
    & = \sigma^{z_{jk}}_{\tau^{z_{ik}}_{\sigma^{z_{ij}}_b\left(c\right)}\left(a\right)}
    (\tau^{z_{ij}}_c\left(b\right)),    \end{aligned}
    $$
hence (\ref{3}) holds.

We now show (\ref{2}) using repeatedly the definition
$\tau_b^{z_{ij}}(a) : =(\sigma^{z_{ji}}_{\sigma^{z_{ij}}_a(b)})^{-1}(\sigma_a^{z_{ij}}(b)\triangleright_{z_{ij}} a),$ for all $a,b \in X,$ $z_{i,j}\in Y$:
$$
\begin{aligned}
&\tau^{z_{jk}}_{\tau^{z_{ij}}_c
\left(b\right)}(\tau^{z_{ik}}_{\sigma^{z_{ij}}_b\left(c\right)}\left(a\right))=
(\sigma^{z_{kj}}_{\sigma^{z_{jk}}_{\tau^{z_{ik}}_{\sigma^{z_{ij}}_b\left(c\right)}\left(a\right)}(\tau^{z_{jk}}_c\left(b\right))})^{-1}
(\sigma^{z_{jk}}_{\tau^{z_{ik}}_{\sigma^{z_{ij}}_b\left(c\right)}\left(a\right)}(\tau^{z_{ij}}_c\left(b\right)) \triangleright_{z_{jk}} \tau^{z_{ik}}_{\sigma^{z_{ij}}_b\left(c\right)}\left(a\right)) 
\\
&= (\sigma^{z_{kj}}_{\tau^{z_{ij}}_{\sigma^{z_{ik}}_{\tau^{z_{jk}}_b\left(a\right)}\left(c\right)}(\sigma^{z_{jk}}_a\left(b\right))})^{-1}
\Big ((\sigma^{z_{ji}}_{\sigma^{z_{ik}}_a(\sigma^{z_{ij}}_b\left(c\right))})^{-1}\big(\sigma^{z_{ik}}_a(\sigma^{z_{ij}}_b(c))
\triangleright_{z_{ij}} b\big) \triangleright_{z_{jk}}
(\sigma^{z_{ki}}_{\sigma^{z_{ik}}_a(\sigma^{z_{ij}}_b\left(c\right))})^{-1}\big(\sigma^{z_{ik}}_a(\sigma^{z_{ij}}_b\left(c\right)))
\triangleright_{z_{ik}}a\big )\Big )
 \\ 
 &  \mbox{by (\ref{1}), (\ref{3})} 
 \\
&= (\sigma^{z_{kj}}_{\tau^{z_{ij}}_{\sigma^{z_{ik}}_{\tau^{z_{jk}}_b\left(a\right)}\left(c\right)}(\sigma^{z_{jk}}_a\left(b\right))})^{-1}
\Big ((\sigma^{z_{ji}}_{\sigma^{z_{ik}}_a(\sigma^{z_{ij}}_b\left(c\right))})^{-1}\big( \sigma^{z_{ik}}_a(\sigma^{z_{ij}}_b\left(c\right)) \triangleright_{z_{ik}} (\sigma_a^{z_{jk}}(b) \triangleright_{z_{jk}} a) \big ) \Big)\\
& \mbox{by the $p$-self-distributivity and (2) of Definition \ref{def:twist:shelf}},
\\
&= (\sigma^{z_{ki}}_{\sigma^{z_{ik}}_{\tau^{z_{jk}}_b\left(a\right)}\left(c\right)})^{-1}
\Big((\sigma^{z_{ij}}_{\sigma^{z_{jk}}_a\left(b\right)})^{-1} \big (\sigma^{z_{ik}}_a(\sigma^{z_{ij}}_b\left(c\right))\big) \triangleright_{z_{ik}} \tau^{z_{jk}}_{b}(a)\Big ) \\
&
\mbox{by the $p$-self-distributivity, the definition of $\tau^{z_{ij}}_b(a)$ and (2) from Definition \ref{def:twist:shelf}},\\
&=  (\sigma^{z_{ki}}_{\sigma^{z_{ik}}_{\tau^{z_{jk}}_b\left(a\right)}\left(c\right)})^{-1}
\Big((\sigma^{z_{ij}}_{\sigma^{z_{jk}}_a\left(b\right)})^{-1} \big (\sigma^{z_{ij}}_{\sigma_a^{z_{jk}}(b)}(\sigma^{z_{ik}}_{\tau^{z_{jk}}_b(a)}\left(c\right))\big) \triangleright_{z_{ik}} \tau^{z_{jk}}_{b}(a)\Big ) \\
& \mbox{by (2) from Definition \ref{def:twist:shelf}},\\
&= (\sigma^{z_{ki}}_{\sigma^{z_{ik}}_{\tau^{z_{jk}}_b\left(a\right)}\left(c\right)})^{-1}
\Big(\sigma^{z_{ik}}_{\tau^{z_{jk}}_b(a)}\left(c\right)) \triangleright_{z_{ik}} \tau^{z_{jk}}_{b}(a)\Big )
    =\tau^{z_{ik}}_{c}(\tau^{z_{jk}}_{b}\left(a\right)),
    \end{aligned}
$$
and we conclude that (\ref{2}) is satisfied.

Conversely, let the map $R^{z_{ij}}$ be a solution on the set $X$. Then, condition (\ref{1})
coincides with the identity (1) of Definition \ref{def:twist:shelf}. We now show that, for all $a,b,c\in X$, $z_{i,j,k} \in Y$ identity (2) of Definition \ref{def:twist:shelf} holds. Indeed,
    $$
    \begin{aligned}
       \sigma^{z_{jk}}_c\left(b\triangleright_{z_{ij}} a\right)
        &= \sigma^{z_{ji}}_{\sigma^{z_{ik}}_c\left(b\right)}
\big (\sigma^{z_{jk}}_{\tau^{z_{ik}}_b\left(c\right)}(\tau^{z_{ij}}_{(\sigma_a^{z_{ij}})^{-1}\left(b\right)}\left(a\right))\big) & \mbox{by (\ref{1})}\\
        &= \sigma^{z_{ji}}_{\sigma^{z_{ik}}_c\left(b\right)}\big ( \sigma^{z_{jk}}_{\tau^{z_{ik}}_{\sigma^{z_{ij}}_a((\sigma_a^{z_{ij}})^{-1}\left(b\right))}\left(c\right)} \
(\tau^{z_{ij}}_{(\sigma^{z_{ij}}_a)^{-1}\left(b\right)}\left(a\right))\big)\\
        &= \sigma^{z_{ji}}_{\sigma^{z_{ik}}_c\left(b\right)}\big (\tau^{z_{ij}}_{\sigma^{z_{ik}}_{\tau^{z_{jk}}_a\left(c\right)}((\sigma^{z_{ij}}_a)^{-1}\left(b\right))}(\sigma^{z_{jk}}_c
\left(a\right))\big )& \mbox{by (\ref{3})}\\
        &= \sigma^{z_{ji}}_{\sigma^{z_{ik}}_c\left(b\right)}\big (
    \tau^{{z_{ij}}}_{(\sigma^{z_{ij}}_{\sigma^{z_{jk}}_c\left(a\right)})^{-1}(\sigma^{z_{ik}}_c\left(b\right))}(\sigma^{z_{jk}}_c\left(a\right)) \big ),&\\
        &= \sigma^{z_{ik}}_c\left(b\right)\triangleright_{z_{ij}} \sigma^{z_{jk}}_c\left(a\right) & \mbox{by (\ref{1})} \
    \end{aligned}
    $$

and this concludes our proof.
\end{proof}

\begin{cor} 
Any left non-degenerate solution $R^{z_{ij}}: X \times X \to X \times X,$\\ $R^{z_{ij}}(b,a)= (\sigma_a^{z_{ij}}(b), \tau^{z_{ij}}_b(a)),$ 
for all $a,b \in X,$ $z_{i,j} \in Y,$ can be obtained from a $p$-shelf solution,
where $a \triangleright_{z_{ij}} b =\sigma^{z_{ji}}_a(\tau^{z_{ij}}_{(\sigma^{z_{ij}}_b)^{-1}\left(a\right)}\left(b\right)),$ via an admissible twist.
\end{cor}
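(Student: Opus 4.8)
The plan is to recognise the operation $\triangleright_{z_{ij}}$ of the statement as precisely the one that \Cref{Th.r->r'} extracts from $R^{z_{ij}}$, and then to feed the data back into \Cref{le:lndsol}. Since $R^{z_{ij}}$ is left non-degenerate, every $\sigma^{z_{ij}}_a$ is invertible, so the formula $a\triangleright_{z_{ij}} b:=\sigma^{z_{ji}}_a\big(\tau^{z_{ij}}_{(\sigma^{z_{ij}}_b)^{-1}(a)}(b)\big)$ is well defined, and so is the map $\varphi^{z_{ij}}(a,b):=(a,\sigma^{z_{ji}}_a(b))$. First I would carry out the one-line inversion: evaluating the defining formula with left slot $\sigma^{z_{ij}}_a(b)$ and right slot $a$, and using $(\sigma^{z_{ij}}_a)^{-1}\sigma^{z_{ij}}_a(b)=b$, gives $\sigma^{z_{ij}}_a(b)\triangleright_{z_{ij}} a=\sigma^{z_{ji}}_{\sigma^{z_{ij}}_a(b)}\big(\tau^{z_{ij}}_b(a)\big)$, that is, $\tau^{z_{ij}}_b(a)=\big(\sigma^{z_{ji}}_{\sigma^{z_{ij}}_a(b)}\big)^{-1}\big(\sigma^{z_{ij}}_a(b)\triangleright_{z_{ij}} a\big)$. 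Hence $R^{z_{ij}}$ is exactly the map \eqref{prop:solform} attached to $\triangleright_{z_{ij}}$.

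Next I would show that $\big(X,\triangleright_{z_{ij}}\big)$ is a $p$-shelf. Put $S^{z_{ij}}(b,a):=(b,\,b\triangleright_{z_{ij}} a)$. The computation in the proof of \Cref{Th.r->r'} — which uses only the $\tau$-relation just obtained — shows $\varphi^{z_{ij}}R^{z_{ij}}=S^{z_{ij}}(\varphi^{z_{ji}})^{(op)}$ with $\varphi^{z_{ij}}$ bijective, equivalently $S^{z_{ij}}=\varphi^{z_{ij}}R^{z_{ij}}\big((\varphi^{z_{ji}})^{(op)}\big)^{-1}$. Invoking that Drinfel'd-twist equivalence preserves the parametric Yang--Baxter equation — the parametric analogue of the non-parametric fact used in \cite{Doikou1, DoGhVl, DoiRyb22, DoRySt}, obtained by pushing the flips hidden in $(\varphi^{z_{ji}})^{(op)}$ through the three tensor factors — one concludes that $S^{z_{ij}}$ satisfies \eqref{YBE}, and then \Cref{prose} gives precisely that $\triangleright_{z_{ij}}$ is $p$-self-distributive. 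An alternative, self-contained route is to deduce $p$-self-distributivity directly from \eqref{1}--\eqref{3} using the form \eqref{prop:solform}, along the lines of the converse part of the proof of \Cref{le:lndsol}.

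With these two points settled the conclusion is forced: $\big(X,\triangleright_{z_{ij}}\big)$ is a $p$-shelf, the twist $\varphi^{z_{ij}}(a,b)=(a,\sigma^{z_{ji}}_a(b))$ has $\sigma^{z_{ij}}_a$ bijective, and the associated map \eqref{prop:solform} equals the solution $R^{z_{ij}}$; so \Cref{le:lndsol} forces $\varphi^{z_{ij}}$ to be an admissible twist. Finally \Cref{Th.r->r'} yields $R^{z_{ij}}\cong_D S^{z_{ij}}$, i.e.\ $R^{z_{ij}}$ is obtained from the $p$-shelf solution $S^{z_{ij}}(b,a)=(b,\,b\triangleright_{z_{ij}} a)$ via the admissible twist $\varphi^{z_{ij}}$, with $a\triangleright_{z_{ij}} b=\sigma^{z_{ji}}_a\big(\tau^{z_{ij}}_{(\sigma^{z_{ij}}_b)^{-1}(a)}(b)\big)$ as stated.

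The only genuine obstacle I expect is the middle step: making sure $\triangleright_{z_{ij}}$ really is $p$-self-distributive without circular reasoning. The twist-equivalence between $R^{z_{ij}}$ and $S^{z_{ij}}$ is formal, but transporting the Yang--Baxter property across it in the parametric setting means tracking carefully how three copies of $\varphi^{z_{ij}}$ together with the flips sit inside the triple tensor product while every edge carries its own pair $(z_i,z_j)$ of parameters; that combinatorial bookkeeping, not any conceptual subtlety, is the laborious part, exactly as the introduction anticipates. The preliminary inversion and the final appeals to \Cref{le:lndsol} and \Cref{Th.r->r'} are routine.
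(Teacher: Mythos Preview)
Your proposal is correct and lands on essentially the same argument as the paper: invert the formula for $\triangleright_{z_{ij}}$ to see that $R^{z_{ij}}$ has the form \eqref{prop:solform}, and then read off the admissibility of $\varphi^{z_{ij}}$ from the converse direction of \Cref{le:lndsol}. The paper's own proof is a one-liner that simply points to that converse computation.

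The one place you diverge slightly is in how you secure the $p$-shelf property. The paper treats it as part of the package delivered by the second half of the proof of \Cref{le:lndsol} (the identities \eqref{1}--\eqref{3} together with the derived relations force $p$-self-distributivity when one unwinds the proof of \eqref{2}). Your primary suggestion --- transporting the Yang--Baxter equation along the $D$-equivalence $\varphi^{z_{ij}}R^{z_{ij}}=S^{z_{ij}}(\varphi^{z_{ji}})^{(op)}$ and then invoking \Cref{prose} --- is valid but, as you yourself flag, requires the parametric bookkeeping of how the three $\varphi$'s and flips interact on the triple product; this lemma is not stated in the paper and is more work than needed. Your ``alternative, self-contained route'' via \eqref{1}--\eqref{3} is exactly what the paper has in mind and is the cleaner choice.
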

\begin{proof}
    To show that any left non-degenerate solution 
can be obtained from a $p$-shelf solution, it is enough to show that $\sigma_a^{z_{ij}}$ satisfies the properties of Definition \ref{def:twist:shelf}, where $a\triangleright_{z_{ij}} b:= \sigma^{z_{ji}}_a(\tau^{z_{ij}}_{(\sigma^{z_{ij}}_b)^{-1}\left(a\right)}\left(b\right)).$
These  follow from the proof of Theorem \ref{le:lndsol} (specifically the second part of the proof).
\end{proof}

We conclude, given the findings of this subsection, that the problem of finding generic solutions of the parametric set-theoretic Yang-Baxter equation is reduced to the classification of $p$-shelf/rack solutions and the identification of admissible twists.

\begin{cor}\label{cor:bij-lnd}
A left non-degenerate solution $(X,R^{z_{ij}})$ is bijective if and only if $(X,\triangleright_{z_{ij}})$ is a $p$-rack for all $z_{i,j} \in Y$.
\end{cor}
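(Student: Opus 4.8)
The plan is to chain together the structural results already established, namely Theorem~\ref{le:lndsol} together with the Lemma characterizing invertibility of rack-type solutions, so that the equivalence becomes almost a bookkeeping exercise once one tracks which maps must be bijective. The starting point is that by Proposition~\ref{Th.r->r'} and Theorem~\ref{le:lndsol}, any left non-degenerate solution $R^{z_{ij}}$ is $D$-equivalent, via the bijective twist $\varphi^{z_{ij}}(a,b)=(a,\sigma^{z_{ji}}_a(b))$, to the $p$-shelf solution $S^{z_{ij}}(b,a)=(b,b\triangleright_{z_{ij}}a)$ with $a\triangleright_{z_{ij}}b=\sigma^{z_{ji}}_a(\tau^{z_{ij}}_{(\sigma^{z_{ij}}_b)^{-1}(a)}(b))$ as in the Corollary following Theorem~\ref{le:lndsol}. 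Since a $D$-twist by a bijection carries bijective solutions to bijective solutions (because $R^{z_{ij}}=(\varphi^{z_{ij}})^{-1}S^{z_{ij}}(\varphi^{z_{ji}})^{(op)}$ and $\pi$ is a bijection), $R^{z_{ij}}$ is bijective if and only if $S^{z_{ij}}$ is bijective.

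It then remains to identify when the $p$-shelf solution $S^{z_{ij}}(b,a)=(b,b\triangleright_{z_{ij}}a)$ is a bijection of $X\times X$. The first step here is the easy direction: if $(X,\triangleright_{z_{ij}})$ is a $p$-rack, i.e.\ every $L^{z_{ij}}_a=a\triangleright_{z_{ij}}(\cdot)$ is bijective, then by the Lemma above $S^{z_{ij}}$ is invertible with $(S^{z_{ij}})^{-1}(b,b\triangleright_{z_{ij}}a)=(b,a)$, hence $R^{z_{ij}}$ is bijective. For the converse, suppose $S^{z_{ij}}$ is bijective; since $S^{z_{ij}}$ fixes the first coordinate $b$, its restriction to each fibre $\{b\}\times X$ must be a bijection onto $\{b\}\times X$, and this restriction is precisely $(b,a)\mapsto(b,b\triangleright_{z_{ij}}a)$, i.e.\ the map $L^{z_{ij}}_b$ on the second factor. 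Thus $L^{z_{ij}}_b$ is bijective for every $b\in X$ and every $z_{i,j}\in Y$, which is exactly the statement that $(X,\triangleright_{z_{ij}})$ is a $p$-rack.

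Assembling the two halves gives the equivalence. I do not anticipate a genuine obstacle here; the only point requiring a little care is the observation that $D$-equivalence via a bijective twist preserves bijectivity of solutions — one should note explicitly that $\varphi^{z_{ij}}$, $\varphi^{z_{ji}}$ and the flip $\pi$ are all bijective, so both $R^{z_{ij}}\mapsto S^{z_{ij}}$ and its inverse are compositions of bijections. One small subtlety worth flagging is that left non-degeneracy of $R^{z_{ij}}$ (bijectivity of all $\sigma^{z_{ij}}_a$) is needed at the outset to make $\varphi^{z_{ij}}$ a bijection and to define $\triangleright_{z_{ij}}$ via $\tau$ and $\sigma^{-1}$; with that hypothesis in force the argument is purely formal.
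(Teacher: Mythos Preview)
Your proof is correct and follows essentially the same approach as the paper's. The paper's own proof is a single sentence---``This follows from the fact that $R^{z_{ij}}$ is invertible if and only if the $p$-shelf solution is invertible, i.e.\ it is a $p$-rack solution''---and you have simply (and correctly) unpacked both halves of that biconditional: bijectivity transfers along the $D$-equivalence because $\varphi^{z_{ij}}$, $\varphi^{z_{ji}}$, and $\pi$ are bijections, and the fibrewise argument pins down exactly when the shelf-type map $S^{z_{ij}}$ is a bijection.
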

\begin{proof}
 This follows from the fact that $R^{z_{ij}}$ is invertible if and only if the $p$-shelf solution is invertible, 
 i.e. it is a $p$-rack solution. 
\end{proof}

Recall Remark \ref{rem1}, which together with Theorem \ref{le:lndsol} 
show that all reversible solutions are obtained from the identity map via an admissible twist.

\begin{pro} \label{prora2}
 Let $(X, +,\circ)$ be a skew brace and let $Y\subseteq X,$ be such that  
 \begin{itemize}
   \item for all $w, z\in Y,$ $z\circ w = w \circ z,$ 
   \item for all $a,b \in X,$ $z\in Y,$ $(a+b)\circ z = a \circ z -z +b \circ z,$
   \item $z \in Y$ are central in $(X, +).$
    \end{itemize}
    Let also $\varphi^{z_{ij}}: X \times X \to X \times X,$ be such that for all $a,b\in X,$ $z_{i,j}, \in Y$ and some $\xi \in Y,$ that is also central in $(X, \circ),$ $\varphi^{z_{ij}}(a,b) = (a,\sigma^{z_{ji}}_a(b)),$ 
 where $\sigma^{z_{ij}}_a(b) = z_i^{-1} -\xi\circ a \circ z_i^{-1} \circ z_j + a \circ b \circ\xi \circ z_j.$
   We also define for all $a,b \in X,$ $z_{i,j} \in Y$ and $\xi \in Y,$ $\triangleright_{z_{ij}}: X \times X \to X,$ such that $a \triangleright_{z_{ij}}b := -\xi\circ a \circ z_i\circ z_j^{-1} +\xi \circ b + a\circ z_i \circ z_j^{-1}$ and $\tau_b^{z_{ij}}: X \to X,$ such that  $\tau_{b}^{z_{ij}}(a) = (\sigma^{z_{ji}}_{\sigma^{z_{ij}}_a(b)})^{-1}\big (\sigma^{z_{ij}}_a(b) \triangleright_{z_{ij}}a \big).$ Then:
    \begin{enumerate}
        \item For all $a,b\in X$ and $z_{i,j}\in Y,$ 
        $ a\circ b = \sigma_a^{z_{ij}}(b)\circ \tau_b^{z_{ij}}(a).$ 
      
        \item $\varphi^{z_{ij}}$ is an admissible twist, for all $z_{i,j} \in Y$.    
        \end{enumerate}
\end{pro}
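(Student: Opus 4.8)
The plan is to verify the two claims by direct substitution, exploiting the skew-brace identities and the centrality hypotheses on $Y$ and on $\xi$. For part (1), I would compute $\sigma_a^{z_{ij}}(b)\circ \tau_b^{z_{ij}}(a)$ starting from the given formulas. Writing $w := \sigma_a^{z_{ij}}(b) = z_i^{-1} - \xi\circ a\circ z_i^{-1}\circ z_j + a\circ b\circ \xi\circ z_j$ and using that $\tau_b^{z_{ij}}(a) = (\sigma^{z_{ji}}_{w})^{-1}(w\triangleright_{z_{ij}} a)$, the key is that $\sigma^{z_{ji}}_{w}$ is an affine-type map whose inverse can be written explicitly (as was done for $\triangleright_{z_{ij}}^{-1}$ in Proposition \ref{prora1}). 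I expect the cleanest route is actually to rewrite claim (1) as $\tau_b^{z_{ij}}(a) = (\sigma_a^{z_{ij}}(b))^{-1}\circ a\circ b$ and show the right-hand side equals $(\sigma^{z_{ji}}_{w})^{-1}(w\triangleright_{z_{ij}} a)$; equivalently, apply $\sigma^{z_{ji}}_{w}$ to $w^{-1}\circ a\circ b$ and check it equals $w\triangleright_{z_{ij}} a = -\xi\circ w\circ z_i\circ z_j^{-1} + \xi\circ a + w\circ z_i\circ z_j^{-1}$. This reduces everything to a single $\circ$-and-$+$ identity in the skew brace, where the distributivity law $(u+v)\circ z = u\circ z - z + v\circ z$ for $z\in Y$, the centrality of the $z_i$ in $(X,+)$, and the centrality of $\xi$ in $(X,\circ)$ are exactly what is needed to collapse the terms.

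For part (2), I need to check the two conditions of Definition \ref{def:twist:shelf}: the $\sigma$-cocycle-type relation (1) and the compatibility (2) between $\sigma$ and $\triangleright_{z_{ij}}$. Condition (2), namely $\sigma^{z_{ik}}_c(b)\triangleright_{z_{ij}} \sigma^{z_{jk}}_c(a) = \sigma^{z_{jk}}_c(b\triangleright_{z_{ij}} a)$, should follow fairly mechanically: both sides are built from the affine expressions above, and since $\triangleright_{z_{ij}}$ is conjugation-by-$(\xi\circ(-)\circ z_i\circ z_j^{-1})$ composed with left-multiplication by $\xi$ in $(X,+)$, the statement is essentially that $\sigma^{z_{jk}}_c$ intertwines this operation appropriately — again reducing to brace identities once the centrality of $z_k$, $z_j$ and $\xi$ is used to move factors past one another. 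Condition (1), $\sigma^{z_{ik}}_a(\sigma^{z_{ij}}_b(c)) = \sigma^{z_{ij}}_{\sigma^{z_{jk}}_a(b)}(\sigma^{z_{ik}}_{\tau^{z_{jk}}_b(a)}(c))$, is the substantive one: here part (1) of the proposition is genuinely used, since $\tau^{z_{jk}}_b(a)$ appears on the right and the relation $a\circ b = \sigma_a(b)\circ\tau_b(a)$ lets one eliminate it and rewrite the right-hand side purely in terms of $\sigma$'s and $\circ$'s.

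The main obstacle will be condition (1) of Definition \ref{def:twist:shelf}: it is the place where all the hypotheses must be combined at once, the nested $\sigma$'s produce a large number of terms, and the $\tau$ on the right has to be traded in via part (1) before any cancellation becomes visible. I would organize this computation by first substituting part (1) to get rid of $\tau^{z_{jk}}_b(a)$, then expanding $\sigma^{z_{ij}}_{\sigma^{z_{jk}}_a(b)}$ and $\sigma^{z_{ik}}_{\tau^{z_{jk}}_b(a)}$ using the defining affine formula, and finally pushing the central elements $z_i, z_j, z_k, \xi$ through the brace addition using $(u+v)\circ z = u\circ z - z + v\circ z$ and the $(X,+)$-centrality until both sides reduce to the same normal form $z_i^{-1} - \xi\circ a\circ b\circ z_i^{-1}\circ z_k + a\circ b\circ c\circ\xi\circ z_k$ (i.e. the left-hand side with the inner $\sigma^{z_{ij}}_b(c)$ already expanded). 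Everything else is bookkeeping of the kind already illustrated in the proof of Proposition \ref{prora1}.
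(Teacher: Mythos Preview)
Your proposal is correct and follows essentially the same route as the paper: for part (1) the paper computes $(\sigma^{z_{ij}}_a)^{-1}$ explicitly, extracts a closed form for $\tau_b^{z_{ij}}(a)$, and then checks $\sigma_a^{z_{ij}}(b)\circ\tau_b^{z_{ij}}(a)=a\circ b$ directly, which is the same calculation you outline in reverse; for part (2) the paper, exactly as you anticipate, expands $\sigma^{z_{ik}}_a(\sigma^{z_{ij}}_b(c))$ to a normal form depending on $a,b$ only through $a\circ b$ (namely $z_i^{-1}-a\circ b\circ z_i^{-1}\circ z_j\circ z_k\circ\xi\circ\xi+a\circ b\circ c\circ z_j\circ z_k\circ\xi\circ\xi$ --- your predicted form is off by a factor of $z_j\circ\xi$, but that is a bookkeeping slip, not a gap) and then invokes part (1), while condition (2) of Definition \ref{def:twist:shelf} is verified by the mechanical side-by-side expansion you describe.
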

\begin{proof}
Let us first observe  that we can identify 
    the inverse map $(\sigma^{z_{ij}}_b)^{-1},$  i.e. $\sigma^{z_{ij}}_b$ is a bijection.  It follows from $~\sigma^{z_{ij}}_a((\sigma^{z_{ij}}_a)^{-1}(b)) = b$ that $~(\sigma^{z_{ij}}_a)^{-1}(b)= z_i^{-1} - a^{-1}\circ \xi^{-1} \circ z_i^{-1}\circ z_j^{-1} + a^{-1}\circ b\circ \xi^{-1} \circ z_j^{-1},$ for all $a,b\in X,$ $z_{i,j}\in Y$ and $\xi \in Y.$    
    \begin{enumerate}
   \item  We may now extract the map $\tau^{z_{ij}}_b:$   
    \begin{eqnarray}
   \tau_{b}^{z_{ij}}(a) &=& (\sigma^{z_{ji}}_{\sigma^{z_{ij}}_a(b)})^{-1}\big (\sigma^{z_{ij}}_a(b) \triangleright_{z_{ij}}a \big)  \nonumber\\ &= & z_j^{-1} - (\sigma^{z_{ij}}_{a}(b))^{-1}\circ \xi^{-1} \circ z_j^{-1} \circ z_i^{-1} +(\sigma^{z_{ij}}_{a}(b))^{-1}\circ  (\sigma^{z_{ij}}_a(b) \triangleright_{z_{ij}}a)\circ \xi^{-1} \circ z_i^{-1} \nonumber \\
   &= &  (\sigma_a^{z_{ij}}(b))^{-1}\circ a \circ z_i^{-1} - (\sigma_a^{z_{ij}}(b))^{-1}\circ \xi^{-1}\circ z_i^{-1} \circ z_j^{-1} + \xi^{-1} \circ z_j^{-1}\nonumber  
    \end{eqnarray}
    We now directly compute for all $a,b\in X$ and $z_{i,j}\in Y,$ 
    \begin{eqnarray}
    \sigma_a^{z_{ij}}(b)\circ \tau_b^{z_{ij}}(a) &=&  \sigma_a^{z_{ij}}(b) \circ  \big ((\sigma_a^{z_{ij}}(b))^{-1}\circ a \circ z_i^{-1} - (\sigma_a^{z_{ij}}(b))^{-1}\circ \xi^{-1}\circ z_i^{-1} \circ z_j^{-1} +\xi^{-1}\circ z_j^{-1}\big )  \nonumber\\ &=&  a\circ z_i^{-1}-
    \xi^{-1}\circ z_{i}^{-1}\circ z_j^{-1}+\sigma_a^{z_{ij}}(b)\circ \xi^{-1}\circ  z_j^{-1}  \nonumber \\
    &=&  a\circ z_i^{-1}-
    \xi^{-1} \circ z_{i}^{-1}\circ z_j^{-1}+ \big (z_i^{-1} - a\circ \xi\circ z_i^{-1}\circ z_j + a\circ b\circ \xi\circ z_j \big )\circ z_j^{-1}\circ \xi^{-1}\nonumber \\
    &=&  a\circ b. \nonumber \end{eqnarray}
   \item  To prove that $\varphi^{z_{ij}}$ is an admissible twist we need to show properties (1) and (2) of Definition 
    \ref{def:twist:shelf}.  Indeed,
    \begin{eqnarray}
        \sigma^{z_{ik}}_a(\sigma_b^{z_{ij}}(c)) &=& 
z_i^{-1} - a \circ \xi \circ z_i^{-1} \circ z_k + 
        a \circ \sigma^{z_{ij}}_b(c) \circ \xi \circ z_k  \nonumber\\
        &=&  z_i^{-1} - a \circ b \circ z_i^{-1}\circ z_j \circ z_k\circ \xi \circ \xi + a\circ b \circ c\circ z_j \circ z_k\circ \xi \circ \xi
\nonumber\\
        &=&  z_i^{-1} - \sigma_a^{z_{jk}} \circ \tau^{z_{jk}}_b(a) \circ z_i^{-1}\circ z_j \circ z_k\circ \xi \circ \xi + \sigma_a^{z_{jk}} \circ \tau^{z_{jk}}_b(a) \circ c\circ z_j \circ z_k\circ \xi \circ \xi , \nonumber\\
        &=& \sigma^{z_{ij}}_{\sigma_a^{z_{jk}}(b)}(\sigma_{\tau^{z_{jk}}_b(a)}^{z_{ik}}(c)),  \nonumber  \end{eqnarray}
i.e.  we conclude that conditions (1) of Definition \ref{def:twist:shelf} hold; we also observe that\\ $\sigma_a^{z_{ik}}(\sigma_b^{z_{ij}}(c)) = \sigma_a^{z_{ij}}(\sigma_b^{z_{ik}}(c)).$
Now we show condition (2) of Definition \ref{def:twist:shelf}. We first directly compute
\begin{eqnarray}
\sigma^{z_{jk}}_c(b \triangleright_{z_{ij}} a) &=& z_j^{-1}  - c \circ \xi \circ z_j^{-1} \circ z_k \nonumber\\ & & + c \circ (z_j^{-1} - \xi\circ b\circ z_i\circ z_j^{-1} + \xi\circ a - z_j^{-1} + b\circ z_i \circ z_j^{-1} )\circ \xi \circ z_k \nonumber\\ & =& z_j^{-1} - c\circ b \circ \xi \circ \xi \circ z_i \circ z_j^{-1} \circ z_k + c\circ a \circ \xi \circ \xi \circ z_k \nonumber\\ & &
- c\circ \xi  \circ z_j^{-1} \circ z_k + c \circ b \circ \xi  \circ z_i \circ z_j^{-1} \circ z_k. \nonumber
\end{eqnarray}
Similarly, we compute
\begin{eqnarray}
\sigma_c^{z_{ik}}(b) \triangleright_{z_{ij}} \sigma^{z_{jk}}_c(a) & =& -\xi\circ \sigma_c^{z_{ik}}(b)\circ z_i \circ z_j^{-1}+ \xi \circ \sigma_c^{z_{jk}}(a) + \sigma_c^{z_{ik}}(b) \circ z_i \circ z_j^{-1}\nonumber \\ 
&= &
z_j^{-1} - c\circ b \circ z_i \circ z_j^{-1} \circ z_k + c\circ a \circ z_k - c \circ z_j^{-1} \circ z_k + c \circ b \circ z_i \circ z_j^{-1} \circ z_k. \nonumber \\
& =& z_j^{-1} - c\circ b \circ \xi \circ \xi \circ z_i \circ z_j^{-1} \circ z_k + c\circ a \circ \xi \circ \xi \circ z_k \nonumber\\ & &
- c\circ \xi  \circ z_j^{-1} \circ z_k + c \circ b \circ \xi  \circ z_i \circ z_j^{-1} \circ z_k. \nonumber
\end{eqnarray}
Comparing the two expression above we conclude that condition (2) of Definition \ref{def:twist:shelf} also holds, 
i.e.  $\varphi^{z_{ij}}$ is indeed an admissible twist.
\hfill \qedhere
    \end{enumerate}
    \end{proof}
Example \ref{ex:cyclicbraces} can be used for the construction of $\sigma_a^{z_{ij}}(b)$ and $\tau^{z_{ij}}_b(a)$ of Proposition \ref{prora2} (see also Example \ref{good1}).
\begin{rem} \label{rem0b} 
If $(X, +, \circ)$ is a brace, i.e. $(X, +)$ is an abelian group and $\xi =1,$ then for all $a,b \in X,$ 
$z_{i,j} \in Y,$ 
$a \triangleright_{z_{ij}} b =  -a\circ z_i \circ z_j^{-1}+b + a \circ z_i \circ z_j^{-1} = b,$ hence $S^{z_{ij}}(a,b) = (a, a \triangleright_{z_{ij}} b) =(a,b) $,  i.e. $ S^{z_{ij}} = \id.$
Also,
$\sigma^{z_{ji}}_{\sigma^{z_{ij}}_a(b)}(\tau^{z_{ij}}_{b}(a)) =a,$ hence the map $R^{z_{ij}}:X \times X \to X \times X,$  $R^{z_{ij}}(b,a) = ( \sigma^{z_{ij}}_a(b), \tau^{z_{ij}}_{b}(a) )$ obtained from $S^{z_{ij}}$ via the admissible twist,  is reversible,  i.e.  $R_{12}^{z_{ij}} R_{21}^{z_{ji}}= \id.$ 
\end{rem}

\subsection{Parametric Yang-Baxter operators}

This {subsection provides a key motivation for the material} presented in the subsequent section given that the
generalized algebraic structures studied here encapsulate part of the underlying set-theoretic Yang-Baxter algebras, which
are introduced in the next section.
Bearing in mind the definition of braided groups and braidings in \cite{chin}, \cite{GatMaj} 
and their deformations \cite{DoiRyb22, DoRy23, DoRySt} we further
generalize the definitions to introduce the parametric Yang-Baxter structures.

It is useful for the purposes of this subsection to introduce the following. Let $X$ and 
$Y \subseteq X$ be non-empty sets and introduce for all 
$z_{i, j, k} \in Y$ the maps, 
$M^{z_{ijk}}_y \in \{f^{z_{ijk}}_y, g^{z_{ijk}}_{y}, \hat f^{z_{ijk}}_y, \hat g^{z_{ijk}}_{y} \},$ $M_y^{z_{ijk}}: X \to X,$
$x \mapsto M^{z_{z_{ijk}}}_y(x),$ and the maps $S^{z_{ij}}_y \in \{\sigma_y^{z_{ij}}, \tau_y^{z_{ij}}\},$ $S_y^{z_{ij}}: X \to X,$ $x \mapsto S^{z_{ij}}_y(x).$

\begin{defn} \label{def0} 
Let $(X,\circ)$ be a group, $Y\subseteq X$ and consider the following maps for $z_{i,j, k} \in Y,$ $m: X \times X \to X,\ (x,y)\mapsto x \circ y,$
$~\pi: X \times X \to X \times X,\ (x,y) \mapsto (y,x),$ and   $R^{z_{ij}}, \xi^{z_{ijk}}, \zeta^{z_{ijk}}: X \times X \to X \times X:$
\[R^{z_{ij}}(y,x)= (\sigma^{z_{ij}}_x(y), \tau^{{z}_{ij}}_y(x)), ~~\xi^{z_{ijk}}(y, x) = (f^{z_{ijk}}_x(y),  g_y^{z_{ijk}}(x)),  ~~\zeta^{z_{ijk}}(y,x)= (\hat f_{x}^{z_{ijk}}(y), \hat g^{z_{ijk}}_y(x)),\] such that for all $x, y \in X,$ $z_{i,j,k} \in Y,$ $f^{z_{ijk}}_x(y) =f^{z_{ikj}}_x(y),$ $g^{z_{ijk}}_y(x) = g^{z_{ikj}}_y(x),$ $\hat f^{z_{ijk}}_x(y) = \hat f^{z_{jik}}_x(y),$ $\hat g^{z_{ijk}}_y(x) = \hat g^{z_{jik}}_y(x)$ and 
$M_y^{z_{ijk}}, S_y^{z_{ij}}$ are bijections. Let also
$\hat m := m \pi.$ The map $R^{z_{ij}}$ is called a parametric ($p$)-set Yang-Baxter operator, and the group is called a parametric ($p$)-set Yang-Baxter group, if for all $x,y,w \in X,$ 
$z_{i,j,k} \in Y:$
\begin{enumerate}
\item $\hat m (y, x) =m (R^{z_{ij}} (y, x)).$
\item $ \xi^{z_{ijk}} (\id_{{X}} \times \hat m ) (w,y,x)= (\id_X \times \hat  m) R_{13}^{z_{ik}} R_{12}^{z_{ij}} (w,y,x).$
\item 
$\zeta^{z_{ijk}}(\hat m \times \id_X )(w,y,x)= (\hat m \times \id_X)  R_{13}^{z_{ik}}R_{23}^{z_{jk}}(w,y,x).$
\end{enumerate}
\end{defn}

\begin{pro} \label{lemmA} 
Let $(X,\circ)$ be a $p$-set Yang-Baxter group and 
the map $R^{z_{ij}}: X \times X \to X \times X$ be a $p$-set Yang-Baxter operator, then: 

\begin{enumerate}
\item $R^{z_{ij}}$ satisfies the parametric Yang-Baxter  equation.
\item $x \circ y= f^{z_{ijk}}_x(y) \circ g_y^{z_{ijk}}(x)=   \hat f_{x}^{z_{ijk}}(y) \circ \hat g^{z_{ijk}}_y(x).$ 
\end{enumerate}
\end{pro}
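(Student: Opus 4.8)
The plan is to derive both claims directly from the three defining properties of a $p$-set Yang-Baxter group together with the associativity of the group operation $\circ$. First I would establish item (2), since it is essentially a compatibility identity that will also be needed to organize the computation for item (1). Property (1) of Definition \ref{def0} states $\hat m(y,x) = m(R^{z_{ij}}(y,x))$, i.e. $x\circ y = \sigma^{z_{ij}}_x(y)\circ \tau^{z_{ij}}_y(x)$. To get the analogous factorizations through $\xi^{z_{ijk}}$ and $\zeta^{z_{ijk}}$, I would feed a suitable degenerate input into properties (2) and (3): applying property (2) to a triple where the last slot carries the identity $1\in X$ (or comparing the first two tensor factors after composing with $m$ on the appropriate legs) collapses $\hat m$ to a plain multiplication and forces $f^{z_{ijk}}_x(y)\circ g^{z_{ijk}}_y(x) = x\circ y$; symmetrically property (3) gives $\hat f^{z_{ijk}}_x(y)\circ \hat g^{z_{ijk}}_y(x) = x\circ y$. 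One should be careful that these are genuinely consequences of (2),(3) rather than extra assumptions — the point is that the map $\mathrm{id}_X\times\hat m$ (resp. $\hat m\times\mathrm{id}_X$) intertwines $\xi^{z_{ijk}}$ (resp. $\zeta^{z_{ijk}}$) with a composite of $R$'s whose effect on the surviving components is controlled by property (1).

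For item (1), the parametric Yang-Baxter equation \eqref{YBE}, I would argue along the lines of the classical braided-group proof (Gateva-Ivanova--Majid, and its non-parametric deformed analogue in \cite{DoiRyb22,DoRySt}), adapted to keep track of the spectral parameters. The strategy: consider the threefold product $w\circ y\circ x$ and use property (1) to rewrite it in two ways corresponding to the two sides of \eqref{YBE}. Concretely, $m_{(123)}$, the iterated multiplication $X^3\to X$, is invariant under applying $R$ on any adjacent pair by property (1); applying $R_{23}^{z_{23}}$, then $R_{13}^{z_{13}}$, then $R_{12}^{z_{12}}$ and tracking components via properties (2) and (3) (which encode exactly how the "long-range" $R_{13}$ interacts with a multiplied pair) shows that both $R_{12}^{z_{12}}R_{13}^{z_{13}}R_{23}^{z_{23}}(w,y,x)$ and $R_{23}^{z_{23}}R_{13}^{z_{13}}R_{12}^{z_{12}}(w,y,x)$ produce triples with the same image under the three coordinate-wise multiplication maps $m$ applied to the relevant pairs, and hence — using that $m$, $\sigma$, $\tau$, $f$, $g$, $\hat f$, $\hat g$ are all bijective in the appropriate slot (the non-degeneracy built into Definition \ref{def0}) — that the two triples coincide. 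The auxiliary conditions $f^{z_{ijk}}_x(y)=f^{z_{ikj}}_x(y)$, $\hat f^{z_{ijk}}_x(y)=\hat f^{z_{jik}}_x(y)$, etc., are what make the parameter bookkeeping close up: they guarantee that the parameter labels appearing on $\xi$ and $\zeta$ when read off from the two orderings of the $R$'s are the same.

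The main obstacle I anticipate is precisely the parameter bookkeeping rather than any conceptual difficulty: in the non-parametric case one simply checks that two elements with the same image under a bijection are equal, but here each $R_{ab}$ carries a label $z_{ab}$, and the intermediate maps $\xi^{z_{ijk}}, \zeta^{z_{ijk}}$ carry three-index labels, so one must verify that the chain of substitutions dictated by \eqref{YBE} on the left matches the chain on the right \emph{label by label}. This is where the symmetry hypotheses $f^{z_{ijk}}_x = f^{z_{ikj}}_x$ and $\hat f^{z_{ijk}}_x = \hat f^{z_{jik}}_x$ (and their $g$, $\hat g$ counterparts) do the real work, and I would single out the step of identifying the "middle" factor produced by the two routes — the component that passes through both an $\xi$-type and a $\zeta$-type map — as the delicate one, handled by invoking item (2) of the proposition (already proved) to express that factor canonically in terms of $\circ$ and then cancelling via left/right multiplication in the group.
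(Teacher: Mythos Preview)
Your overall strategy is sound and would lead to a proof, but the route differs from the paper's in two respects worth noting.

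First, the order is reversed. The paper proves item (1) first, by expanding properties (2) and (3) of Definition~\ref{def0} \emph{component-wise} and reading off each of the three Yang--Baxter conditions \eqref{1}--\eqref{3} directly. For instance, the first component of property~(2) says $f^{z_{ijk}}_{x\circ y}(w)=\sigma^{z_{ik}}_x(\sigma^{z_{ij}}_y(w))$; rewriting $x\circ y=\sigma^{z_{jk}}_x(y)\circ\tau^{z_{jk}}_y(x)$ via property~(1) and invoking the symmetry $f^{z_{ijk}}=f^{z_{ikj}}$ immediately yields \eqref{1}. Condition \eqref{2} drops out the same way from the second component of property~(3) using $\hat g^{z_{ijk}}=\hat g^{z_{jik}}$, and \eqref{3} from the first component of property~(3) together with one group cancellation. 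No global ``preserved triple product plus bijectivity'' argument is needed; the bijectivity hypotheses on $M^{z_{ijk}}_y,\ S^{z_{ij}}_y$ are not actually used in establishing \eqref{1}--\eqref{3}. Your classical braided-group route would work too, but it is more indirect here.

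Second, your proposed derivation of item~(2) by ``feeding the identity $1\in X$ into the last slot'' is a genuine gap: Definition~\ref{def0} imposes no condition on how $\sigma^{z_{ij}}_1$ or $\tau^{z_{ij}}_1$ behave, so nothing collapses. The paper instead obtains item~(2) \emph{after} item~(1), by multiplying the two component identities of property~(3) (respectively property~(2)) together and collapsing the result with property~(1); e.g.\ $\hat f^{z_{ijk}}_x(y\circ w)\circ\hat g^{z_{ijk}}_{y\circ w}(x)=\sigma^{z_{jk}}_x(y)\circ\sigma^{z_{ik}}_{\tau^{z_{jk}}_y(x)}(w)\circ\tau^{z_{ik}}_w(\tau^{z_{jk}}_y(x))=x\circ y\circ w$, and since $y\circ w$ ranges over all of $X$ this is the claim. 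Your parenthetical alternative (``comparing after composing with $m$'') is essentially this argument and is the one to keep.
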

\begin{proof}
For the first part it suffices to show that conditions (\ref{1})-(\ref{3}) hold. 
\begin{enumerate}

\item Indeed, from part (2) of Definition \ref{def0} we obtain for all $x,y,w \in X,$ $z_{i,j,k} \in Y$:
\begin{equation}
\sigma_{x}^{z_{ik}}(\sigma_y^{z_{ij}}(w)) = f^{z_{ijk}}_{x \circ y}(w) =   f^{z_{ikj}}_{\sigma^{z_{jk}}_x(y) \circ \tau^{z_{jk}}_y(x)}(w)  =\sigma_{\sigma^{z_{jk}}_x(y)}^{z_{ij}}(\sigma_{ \tau^{z_{jk}}_y(x)}^{z_{ik}}(w)), \label{11b}
\end{equation}
hence condition  (\ref{1}) of the Yang-Baxter equation is satisfied.  

From part (3) of Definition \ref{def0} we obtain:
\begin{equation}
\tau^{z_{ik}}_w(\tau_y^{z_{jk}}(x)) = \hat g^{z_{ijk}}_{y\circ w}(x)  = \hat g^{z_{jik}}_{\sigma^{z_{ij}}_y(w)\circ \tau_w^{z_{ij}}(y)}(x) =  \tau^{z_{jk}}_{\tau_w^{z_{ij}}(y)}(\tau_{\sigma_y^{z_{ij}}(w)}^{z_{ik}}(x)), \label{4a}
\end{equation}
hence condition (\ref{2}) of the Yang-Baxter equation is also satisfied.  It remains now to show (\ref{3}). From part (3) of definition \ref{def0} we also have,
\begin{eqnarray}
\sigma^{z_{jk}}_x(y) \circ  \sigma^{z_{ik}}_{\tau^{z_{jk}}_y(x)}(w) &=& \hat f^{z_{ijk}}_x(y \circ w) =  \hat f^{z_{jik}}_x(\sigma^{z_{ij}}_y(w) \circ \tau^{z_{ij}}_w(y))    \label{4b}  \\ & =& \sigma^{z_{ik}}_x(\sigma_y^{z_{ij}}(w)) \circ  \sigma^{z_{jk}}_{\tau^{z_{ik}}_{\sigma_y^{z_{ij}}(w)}(x)}(\tau^{z_{ij}}_{w}(y)).  \label{4}
\end{eqnarray}

Consider the RHS of (\ref{3}), which via part (1) of Definition \ref{def0} becomes
\begin{eqnarray}
\tau^{z_{ij}}_{\sigma^{z_{ik}}_{\tau^{z_{jk}}_y\left(x\right)}\left(w\right)}(\sigma^{z_{jk}}_x\left(y\right) )  &=&  (\sigma^{z_{ik}}_x(\sigma_y^{z_{ij}}(w)))^{-1} \circ \sigma^{z_{jk}}_x(y) \circ  \sigma^{z_{ik}}_{\tau^{z_{jk}}_y(x)}(w), \qquad \mbox{by (\ref{4})} \nonumber\\
&=&  \sigma^{z_{jk}}_{\tau^{z_{ik}}_{\sigma_y^{z_{ij}}(w)}(x)}(\tau^{z_{ij}}_{w}(y))
\end{eqnarray}
and this is the proof of condition (\ref{3}). That is, we conclude that the $p$-set Yang-Baxter operator is a solution of the parametric Yang-Baxter equation.

\item From  (\ref{4a}), (\ref{4b}) and (1) of Definition \ref{def0} we conclude that $x \circ y=  \hat f_{x}^{z_{ijk}}(y) \circ \hat g^{z_{ijk}}_y(x).$ From (2) of Definition  \ref{def0} we also have that  
\[ \tau_{\sigma_y^{z_{ij}}(w)}^{z_{ik}} \circ \tau_{w}^{z_{ij}}(y)  = g^{z_{ijk}}_{w}(x\circ y).\]
The latter relation together with (\ref{11b}) and (1) of Definition \ref{def0} lead also to $x\circ y =f_{x}^{z_{ijk}}(y) \circ  g^{z_{ijk}}_y(x) .$
\hfill \qedhere
\end{enumerate}
\end{proof}

\begin{exa} \label{exa11}
Let $(X, +, \circ)$ be a skew brace and recall the map of Proposition \ref{prora2} for all $a\in X,$ $z_{i,j} \in Y,$ and some $\xi \in Y,$ $\sigma_a^{z_{ij}}: X \to X,$ $\sigma^{{z_{ij}}}_a(b) = z_i^{-1} -a\circ z_i^{-1} \circ z_j\circ \xi +a\circ b\circ z_j \circ \xi,$ $a,b \in X,$ which satisfies:
\begin{equation}
a\circ b = \sigma^{z_{ij}}_a(b) \circ \tau_b^{z_{ij},\xi}(a) \quad \mbox{and} \quad  \sigma^{z_{ik}}_a(\sigma^{z_{ij}}_b(c)) = \sigma^{z_{i}, z_j\circ z_k\circ \xi}_{a\circ b}(c),
\end{equation}
$f^{z_{ijk}}_a(b) = \sigma^{z_{i}, z_j\circ z_k \circ \xi}_{a}(b).$  Moreover, from (\ref{4b}) we obtain that $\hat f^{z_{ijk}}_a(b) = \sigma_a^{z_{i\circ j k}}(b),$ and $z_{i\circ j k}$ denotes dependence  on $(z_i\circ z_j, z_k).$
\end{exa}

\begin{defn} \label{defA}
Let $X$ and $Y \subseteq X$ be non-empty sets and let for all $z_{i,j} \in Y$ the binary operation 
$\bullet_{z_{ij}}: X \times X \to X,$ $(x, y) \mapsto x \bullet_{z_{ij}} y.$ 
Consider also for all $ z_{i,j} \in Y$ the following maps, $m_{z_{ij}}: X \times X \to X,\ (x,y)\mapsto x \bullet_{z_{ij}} y,$
$~\pi: X \times X \to X \times X,\ (x,y) \mapsto (y,x):$
\[R^{z_{ij}}(y,x) = (y,  \tau^{z_{ij}}_y(x)),\quad {\xi^{z_{ijk}}(y, x) = (y,  g_y^{z_{ijk}}(x)),}  \quad  \zeta^{z_{ijk}}(y,x) = (y, \hat g^{z_{ijk}}_y(x),)\]
such that $\tau_y^{z_{ij}},\ g_y^{z_{ijk}},\ \hat g_y^{z_{ijk}}$ are bijections for all $x, y {\in} X,$ $z_{i,j,k} \in Y$ and
{$g^{z_{ijk}}_y(x) = g^{z_{ikj}}_y(x),$} $\hat g^{z_{ijk}}_y(x) = \hat g^{z_{jik}}_y(x).$ Let also $\hat m_{z_{ij}} := m_{z_{ij}} \pi.$
The map $R^{z_{ij}}$ is called  a 
$p$-rack operator, and $(X, \bullet_{z_{ij}})$ is called a $p$-rack magma, if for all 
$x,y,w \in X,$ $z_{i,j,k} \in Y:$
\begin{enumerate}
\item $\hat m_{z_{ji}} (y, x) =m_{z_{ij}} (R^{z_{ij}} (y, x)).$
\item $\xi^{z_{ikj}} (\id \times \hat m_{kj} ) (w,y,x)= (\id_X \times \hat  m_{kj}) R_{13}^{z_{ik}} R_{12}^{z_{ij}} (w,y,x).$
\item 
$\zeta^{z_{ijk}}(\hat m_{z_{ji}} \times \id_X )(w,y,x)= (\hat m_{z_{ji}} \times \id_X)  R_{13}^{z_{ik}}R_{23}^{z_{jk}}(w,y,x).$
\end{enumerate}
\end{defn}
We note that Definition \ref{defA} could be seen as a special case of Definition 
\ref{def0}, however the underlying algebraic structure in \ref{defA} 
is a parametric one and is not necessarily a group as in \ref{def0}.

\begin{pro}\label{le:p-def1}
Let $(X,\bullet_{z_{ij}})$ be a $p$-rack magma and the map 
$R^{z_{ij}}: X \times X \to X \times X,$ such that $R^{z_{ij}}({y,x}) = (y,  \tau^{z_{ij}}_y(x))$ be a 
$p$-rack operator for all $z_{i,j}\in Y.$
Then 
$R^{z_{ij}}$ satisfies the parametric Yang-Baxter  equation.
\end{pro}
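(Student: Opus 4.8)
The plan is to mimic the proof of Proposition~\ref{lemmA}, but now using Definition~\ref{defA} in place of Definition~\ref{def0}; since the $\sigma$-maps are trivial ($\sigma^{z_{ij}}_x=\id$ for all $x$, $z_{i,j}$), the three Yang-Baxter conditions (\ref{1})--(\ref{3}) collapse, and only the condition (\ref{2}) on the $\tau$-maps needs genuine verification. Concretely, with $R^{z_{ij}}(y,x)=(y,\tau^{z_{ij}}_y(x))$ one checks directly that (\ref{1}) reads $w=w$ and (\ref{3}) reads $\tau^{z_{ij}}_w(y)=\tau^{z_{ij}}_w(y)$, both trivially true, so the whole content is condition (\ref{2}): $\tau^{z_{13}}_c(\tau^{z_{23}}_b(a))=\tau^{z_{23}}_{\tau^{z_{12}}_c(b)}(\tau^{z_{13}}_b(a))$.

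First I would use part (1) of Definition~\ref{defA}, namely $\hat m_{z_{ji}}(y,x)=m_{z_{ij}}(R^{z_{ij}}(y,x))$, which in components says $x\bullet_{z_{ji}}y = y\bullet_{z_{ij}}\tau^{z_{ij}}_y(x)$, i.e. the $p$-rack magma operation recovers $\tau$ via $\tau^{z_{ij}}_y(x)$ sitting as the ``second factor'' of $x\bullet_{z_{ji}}y$ relative to the multiplication $m_{z_{ij}}$. Then I would apply part (3): expanding $\zeta^{z_{ijk}}(\hat m_{z_{ji}}\times\id_X)(w,y,x)$ on the second tensor slot gives $\hat g^{z_{ijk}}_{y\bullet_{z_{ij}}w}(x)$, while the right-hand side $(\hat m_{z_{ji}}\times\id_X)R^{z_{ik}}_{13}R^{z_{jk}}_{23}(w,y,x)$ produces, on the same slot, $\tau^{z_{jk}}_{\tau^{z_{ij}}_w(y)}(\tau^{z_{ik}}_{y}(x))$ after using part (1) to replace $y\bullet_{z_{ij}}w$ by $w\bullet_{z_{ji}}\cdots$; combined with the symmetry $\hat g^{z_{ijk}}_y(x)=\hat g^{z_{jik}}_y(x)$ this yields precisely the $\tau$-hexagon identity (\ref{2}), exactly as in equation (\ref{4a}) of the proof of Proposition~\ref{lemmA}. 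So the proof is: show (\ref{1}) and (\ref{3}) are vacuous here, then derive (\ref{2}) from parts (1), (3) and the $\hat g$-symmetry of Definition~\ref{defA}, and invoke the Proposition preceding~(\ref{1})--(\ref{3}) (the ``if and only if'' characterisation of solutions) to conclude $R^{z_{ij}}$ solves the parametric YBE.

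The main obstacle, modest as it is, is bookkeeping the parameter indices: in part (2)/(3) of Definition~\ref{defA} the subscripts on $m$ are permuted ($\hat m_{kj}$, $\hat m_{z_{ji}}$, $\xi^{z_{ikj}}$) relative to Definition~\ref{def0}, so one must track carefully how $z_{i},z_j,z_k$ flow through $R^{z_{ik}}_{13}R^{z_{jk}}_{23}$ and through the magma product to make sure the two sides of (\ref{2}) line up with the right parameter pairs $(z_i,z_k)$, $(z_j,z_k)$, $(z_1,z_2)=(z_1,z_2)$ etc.; the $\hat g^{z_{ijk}}=\hat g^{z_{jik}}$ symmetry is exactly what is needed to absorb the discrepancy, just as $\hat f^{z_{ijk}}=\hat f^{z_{jik}}$ did in Proposition~\ref{lemmA}. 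Since the $\sigma$-part is trivial, I expect the argument to be a one-line reduction of the proof of Proposition~\ref{lemmA}(1), essentially only reproducing the derivation of (\ref{4a}).
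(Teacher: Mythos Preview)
Your approach is correct and is essentially the paper's own proof: the paper sets $\tau^{z_{ij}}_b(a)=:b\triangleright_{z_{ij}}a$, extracts $x\bullet_{z_{ji}}y=y\bullet_{z_{ij}}(y\triangleright_{z_{ij}}x)$ from condition (1), and then uses condition (3) together with the symmetry $\hat g^{z_{ijk}}=\hat g^{z_{jik}}$ to obtain the $p$-self-distributivity (i.e.\ your condition~(\ref{2}) with $\sigma=\id$), invoking Proposition~\ref{prose} rather than the characterisation (\ref{1})--(\ref{3}); this is exactly your plan, only phrased via $p$-shelves. Watch two index slips in your write-up: in your stated form of~(\ref{2}) the rightmost factor should be $\tau^{z_{13}}_c(a)$ (not $\tau^{z_{13}}_b(a)$), and $\hat m_{z_{ji}}(w,y)=y\bullet_{z_{ji}}w$ (not $y\bullet_{z_{ij}}w$), so that the RHS of condition (3) yields $\tau^{z_{ik}}_w(\tau^{z_{jk}}_y(x))$ directly before you swap $i\leftrightarrow j$.
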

\begin{proof}
Let $\tau^{z_{ij}}_b(a) := b \triangleright_{z_{ij}}(a).$ Then, for all $x,y,w \in X,$ $z_{i,j,k} \in Y,$ we obtain by condition (1) of Definition \ref{defA}:
 $~x\bullet_{z_{ji}} y= y\bullet_{z_{ij}}(y\triangleright_{z_{ij}} x)$
   and from condition ${(2)}$:
\begin{eqnarray}
 w \triangleright_{z_{ik}} (y\triangleright_{z_{jk}}x) 
 =  \hat g^{z_{ijk}}_{y \bullet_{z_{ji}} w}(x) =\hat g^{z_{jik}}_{w \bullet_{z_{ij}} (w\triangleright_{z_{ij}} y)}(x)= (w\triangleright_{z_{ij}} y) \triangleright_{z_{jk}} (w\triangleright_{z_{ik}} x).\nonumber
\end{eqnarray}
The latter condition is the $p$ self-distributivity, i.e. $(X, \triangleright_{z_{ij}})$ is a $p$-rack (recall from Definition \ref{defA} $w\triangleright_{z_{ij}}$ is a bijection for all $w \in X,$ $z_{i,j} \in Y$) and hence according to Proposition \ref{prose} $R^{z_{ij}}$ is a solution of the Yang-Baxter equation. Due to the fact that $(X, \triangleright_{z_{ij}})$ is a $p$-rack the solution is invertible.
\end{proof}

\begin{exa} \label{bullet}
Let $(X, +, \circ)$ be a skew brace and recall the binary operation from Proposition \ref{prora2}, 
    $ \triangleright_{z_{ij}}: X\times X \to X,$ such that 
    and 
    $a\triangleright_{z_{ij}} b  = - \xi\circ a \circ z_{i} \circ z_j^{-1} + \xi\circ b + a\circ z_i \circ z_j^{-1},$ 
    for all $a,b\in X,$ $z_{i,j} \in Y$ and some $\xi \in X.$ Let also $\bullet_{z_{ij}}: X \times X\to X,$ such that $a\bullet_{z_{ij}}b =\xi \circ a \circ z_i + b \circ z_j,$   $a,b\in X,$ $z_{i,j} \in Y,$ and some $\xi \in X.$ Then the following holds:
    \[a \bullet_{z_{ji}} b = b \bullet_{z_{ij
}} (b \triangleright_{z_{ij}} a),\]
where $z_o =1.$ If $(X, +\circ)$ is a brace, the following relation also holds,
\[
b \triangleright_{z_{ik}} (a\triangleright_{z_{jk}} c) 
 = (a\bullet_{z_{ji}}b) \triangleright_{z_{ok}} (\xi\circ c).\]
Hence the function $\hat g^{z_{ijk}}_y(x)$ of Definition \ref{defA} is $\hat g^{z_{ijk}}_y(x)=: \hat g^{z_{ok}}_y(x) = y \triangleright_{z_{ok}} (\xi \circ x).$
\end{exa}

\noindent {\it Note.
The linearized version of the set-theoretic Yang-Baxter equation:}\\
Consider a vector space $V= \mathbb{C}X$ of dimension equal to the cardinality of $X$. 
Let  ${\mathbb B} = \{e_a\}_{a\in X}$ be the basis of $V$ and ${\mathbb B}^* = \{e^*_a\}_{a\in X}$ 
be the dual basis: $e_a^* e_b= \delta_{a,b },$ also  $e_{a,b} := e_a  e_b^*.$
Let also $R^{z_{ij}} \in \EEnd({\mathbb C}^n\otimes {\mathbb C}^n),$ 
$z_{i,j} \in Y,$ be a solution of the tensor (quantum) parametric Yang-Baxter equation 
\begin{equation}
R_{12}^{z_{12}} \ R^{z_{13}}_{13}\ R_{23}^{z_{23}} =  R_{23}^{z_{23}}\ R^{z_{13}}_{13}\  R_{12}^{z_{12}}, \label{TYBE}
\end{equation}
where in general $R^{z_{ij}} = \sum_{a,c,bd} R^{z_{ij}}(b,c| a,d) e_{b,c} \otimes e_{a,d}$ and
$~R_{12}^{z_{ij}} =\sum_{a,c,b,d} R^{z_{ij}}(b,c| a,d) e_{b,c} \otimes e_{a,d}\otimes 1_V,$ 
$~R_{13}^{z_{ij}}(c,b,a) =\sum_{a,c,b,d} R^{z_{ij}}(b,c| a,d) e_{b,c} \otimes 1_V \otimes e_{a,d},$ and\\ $~R_{23}^{z_{ij}}(c,b,a) = 1_V \otimes \sum_{a,c,b,d} R^{z_{ij}}(b,c| a,d) e_{b,c} \otimes e_{a,d},$ $z_{i,j} \in Y.$ 
 
 In the case of set-theoretic solutions specifically, $R^{z_{ij}}(b,c| a,d) = \delta(c, \sigma_{a}^{z_{ij}} (b)) \delta(d, \tau^{z_{ij}}_b(a)),$ whereas for $p$-shelves solutions, $R^{z_{ij}}(b,c| a,d)= \delta(c, b) \delta(d, b\triangleright_{z_{ij}} a).$
This description can be formally generalized to infinite countable sets, but for compact sets the  description requires the use of functional analysis and the study of kernels of integral operators that correspond to the solution $R^{z_{ij}}.$ $\square$

In the subsequent section we establish the algebraic framework in the tensor product formulation. This naturally provides solutions to the parametric set-theoretic Yang-Baxter equation,  thus the linearized version presented above is certainly relevant in what follows.

\section{Parametric set-theoretic algebras}

\noindent In this section we are exploring the underlying algebraic structures that provide  the universal ${\cal R}$-matrices associated to $p$-rack and set-theoretic solutions of the parametric Yang-Baxter equation.  We show that universal set-theoretic solutions are obtained from the $p$-shelf ones via suitable admissible universal  twists. 

We first introduce the parametric rack ($p$-rack)
and restricted $p$-rack algebras.  Using these algebras we are able to extract the associated universal ${\cal R}$-matrices and coproduct structures.  Certain fundamental representations of these algebras lead to set-theoretic solutions of the parametric Yang-Baxter equation.

\subsection{Parametric rack algebras}

\noindent We start our analysis by defining the algebra associated to 
$p$-shelf or rack solutions of the parametric Yang-Baxter equation called {\it parametric shelf or rack ($p$-shelf, rack) algebra}
(see also relevant findings for the non-parametric case \cite{DoRySt}).
\begin{defn} \label{qualgd0} ($p$-shelf algebra.) 
Let ${Y \subseteq X}$ be non-empty sets. We define for all $z_{i, j,k}\in Y,$ ($i,j,k\in {\mathbb Z}^+,$) 
the binary operation, $\triangleright_{z_{ij}}: X\times X \to X,$ $(a,b) \mapsto  a\triangleright_{z_{ij}}b.$
We say that the unital, associative algebra ${\cal Q},$ over a field $k$ 
generated by indeterminates,  
${1_\cal Q}$ (unit element) $q^{z_{ij}}_a, \ (q_a^{z_{ij}})^{-1},\ h_a\in {\cal Q}$ 
and relations for all $a,b \in X,$ $z_{i,j,k}\in Y:$
\begin{eqnarray}
q_a^{z_{ij}} (q_{a}^{z_{ij}})^{-1} =(q_{a}^{z_{ij}})^{-1}q_a^{z_{ij}} = 1_{\cal Q},\quad q_a^{z_{jk}} q^{z_{ik}}_b = q^{z_{ik}}_b  
q^{z_{jk}}_{b \triangleright_{z_{ij}} a}, \quad  h_a  h_b =\delta_{a, b} h_a^2, 
\quad q^{z_{ij}}_b  h_{b\triangleright_{z_{ij}} a}= h_a q^{z_{ij}}_b \label{qualg}
\end{eqnarray}
is a $p$-shelf algebra.
\end{defn}

The choice of the name $p$-shelf algebra is justified by the following statements.
\begin{pro}  \label{proro}
\label{qua1}
Let ${\cal Q}$ be a $p$-shelf algebra then for all $a,b,c \in X,$ $z_{i,j,k} \in Y,$ 
\[h_{c \triangleright_{z_{ik}}(b \triangleright_{z_{jk}}a)}
=h_{(c \triangleright_{z_{ij}}b) \triangleright_{z_{jk}}( c\triangleright_{z_{ik}}a)}.\]

If also for all $a,b \in X,$ $h_a = h_b \Rightarrow a =b,$ then for all $a,b,c \in X,$ $z_{i,j,k} \in Y,$ $c \triangleright_{z_{ik}}(b \triangleright_{z_{jk}}a) = (c \triangleright_{z_{ij}} b)\triangleright_{z_{jk}}( c\triangleright_{z_{ik}}a),$ i.e.  
$(X, \triangleright_{z_{ij}})$ is a $p$-shelf and also $a\triangleright_{z_{ij}}$ is injective.
\end{pro}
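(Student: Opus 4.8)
The plan is to read off the identity directly from the defining relations of the $p$-shelf algebra, specifically the relation $q_a^{z_{jk}} q^{z_{ik}}_b = q^{z_{ik}}_b q^{z_{jk}}_{b \triangleright_{z_{ij}} a}$ together with $q^{z_{ij}}_b h_{b\triangleright_{z_{ij}} a}= h_a q^{z_{ij}}_b$. First I would use the last relation in the form $h_{b \triangleright_{z_{ij}} a} = (q^{z_{ij}}_b)^{-1} h_a q^{z_{ij}}_b$, i.e. $h$ conjugates by the $q$'s in a way that realizes $\triangleright$. Iterating this, I compute $h_{c \triangleright_{z_{ik}}(b \triangleright_{z_{jk}}a)}$ by first writing $h_{b\triangleright_{z_{jk}}a} = (q_b^{z_{jk}})^{-1} h_a q_b^{z_{jk}}$ and then conjugating by $q_c^{z_{ik}}$; this gives
\[
h_{c \triangleright_{z_{ik}}(b \triangleright_{z_{jk}}a)} = (q_c^{z_{ik}})^{-1} (q_b^{z_{jk}})^{-1} h_a\, q_b^{z_{jk}} q_c^{z_{ik}}.
\]
On the other side, $h_{(c \triangleright_{z_{ij}}b) \triangleright_{z_{jk}}( c\triangleright_{z_{ik}}a)} = (q_{c\triangleright_{z_{ij}} b}^{z_{jk}})^{-1} h_{c\triangleright_{z_{ik}}a}\, q_{c\triangleright_{z_{ij}} b}^{z_{jk}}$ and then $h_{c\triangleright_{z_{ik}}a} = (q_c^{z_{ik}})^{-1} h_a q_c^{z_{ik}}$, so this equals $(q_{c\triangleright_{z_{ij}} b}^{z_{jk}})^{-1} (q_c^{z_{ik}})^{-1} h_a\, q_c^{z_{ik}} q_{c\triangleright_{z_{ij}} b}^{z_{jk}}$.

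The two expressions agree precisely because $q_b^{z_{jk}} q_c^{z_{ik}} = q_c^{z_{ik}} q_{c\triangleright_{z_{ij}} b}^{z_{jk}}$, which is exactly the commutation relation $q_a^{z_{jk}} q^{z_{ik}}_b = q^{z_{ik}}_b q^{z_{jk}}_{b \triangleright_{z_{ij}} a}$ with $(a,b)\mapsto(b,c)$ (and correspondingly $q_c^{z_{ik}} q_{c\triangleright_{z_{ij}} b}^{z_{jk}} = $ the same, using its inverse form for the adjoint factors). So after substituting, both sides become the identical word $(q_c^{z_{ik}})^{-1}(q_b^{z_{jk}})^{-1} h_a q_b^{z_{jk}} q_c^{z_{ik}}$ in ${\cal Q}$, which proves the first claim. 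This is an equality of elements forced by the relations, so no genuine obstacle arises here beyond bookkeeping the indices carefully — the main thing to be careful about is that the superscripts in the $q$-commutation relation match up correctly (the relation mixes $z_{jk}$ and $z_{ik}$ with $z_{ij}$ appearing only inside $\triangleright$), and this is exactly the combinatorial matching that makes the parametric self-distributivity pattern appear.

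For the second claim, I simply invoke the injectivity hypothesis $h_a = h_b \Rightarrow a = b$: applying it to the already-established equality $h_{c \triangleright_{z_{ik}}(b \triangleright_{z_{jk}}a)} = h_{(c \triangleright_{z_{ij}}b) \triangleright_{z_{jk}}( c\triangleright_{z_{ik}}a)}$ yields $c \triangleright_{z_{ik}}(b \triangleright_{z_{jk}}a) = (c \triangleright_{z_{ij}} b)\triangleright_{z_{jk}}( c\triangleright_{z_{ik}}a)$ for all $a,b,c$, which is the $p$-self-distributivity of Definition~\ref{shelf} (after relabelling $a\triangleright b$ roles appropriately), so $(X,\triangleright_{z_{ij}})$ is a $p$-shelf. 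For injectivity of $a\triangleright_{z_{ij}}$: suppose $a\triangleright_{z_{ij}} b = a\triangleright_{z_{ij}} b'$; then $h_{a\triangleright_{z_{ij}}b} = h_{a\triangleright_{z_{ij}}b'}$, but $h_{a\triangleright_{z_{ij}} b} = (q_a^{z_{ij}})^{-1} h_b q_a^{z_{ij}}$ and likewise for $b'$, so conjugating back gives $h_b = h_{b'}$, whence $b = b'$ by the hypothesis. The only subtlety to double-check is that the relabelling of $\triangleright$-indices in the derived identity matches the convention in equation~\eqref{shelf} (there the roles of the outer arguments are $a\triangleright_{z_{ij}}b$ etc.), but this is just a consistent renaming of the dummy variables $a,b,c$ and the index triples.
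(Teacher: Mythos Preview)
Your proof is correct and follows essentially the same approach as the paper's. The paper computes $h_a q_b^{z_{jk}} q_c^{z_{ik}}$ in two ways using associativity and then cancels the invertible $q$'s, while you phrase the same computation as an equality of conjugates of $h_a$; in both cases the key inputs are the conjugation relation $h_{b\triangleright_{z_{ij}} a} = (q_b^{z_{ij}})^{-1} h_a q_b^{z_{ij}}$ and the $q$-commutation $q_b^{z_{jk}} q_c^{z_{ik}} = q_c^{z_{ik}} q_{c\triangleright_{z_{ij}} b}^{z_{jk}}$, and the injectivity argument is identical.
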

\begin{proof}
We compute $h_a q^{z_{jk}}_b q^{z_{ik}}_c$ 
using the associativity of the algebra:
\begin{eqnarray}
&& h_a q^{z_{jk}}_b q^{z_{ik}}_c =q^{z_{jk}}_b  h_{b\triangleright_{z_{jk}}a} q^{z_{ik}}_c = q^{z_{jk}}_b q^{z_{ik}}_c h_{c \triangleright_{z_{ik}}(b \triangleright_{z_{jk}}a)} = 
q^{z_{ik}}_c q^{z_{jk}}_{c \triangleright_{z_{ij}}b}
h_{c \triangleright_{z_{ik}}(b \triangleright_{z_{jk}}a)}, \label{w1}\\
&& h_a q^{z_{jk}}_b q^{z_{ik}}_c  =h_a q^{z_{ik}}_c q^{z_{jk}}_{c\triangleright_{z_{ij}}b}  = q_c^{z_{ik}} h_{c\triangleright_{z_{ik}}a} q^{z_{jk}}_{c \triangleright_{z_{ij}} b } = 
q^{z_{ik}}_c q^{z_{jk}}_{c \triangleright_{z_{ij}} b} h_{(c \triangleright _{z_{ij}} b)\triangleright_{z_{jk}}( c\triangleright_{z_{ik}}a)}. \label{w2}
\end{eqnarray}
Due to invertibility of $q^{z_{ij}}_a$ for all $a \in X$ we conclude from (\ref{w1}), (\ref{w2}) that 
\[h_{c \triangleright_{z_{ik}}(b \triangleright_{z_{jk}}a)}
=h_{(c \triangleright_{z_{ij}}b) \triangleright_{z_{jk}}( c\triangleright_{z_{ik}}a)}\  
 \Rightarrow\ c \triangleright_{z_{ik}}(b \triangleright_{z_{jk}}a) =(c \triangleright_{z_{ij}} b)\triangleright_{z_{jk}}( c\triangleright_{z_{ik}}a).\]
 That is $(X, \triangleright_{z_{ij}})$ is a $p$-shelf.

If in addition for  all $a,b \in X,$ $h_a = h_b \Rightarrow a=b,$ then the equation above leads to 
\[ c \triangleright_{z_{ik}}(b \triangleright_{z_{jk}}a) =(c \triangleright_{z_{ij}} b)\triangleright_{z_{jk}}( c\triangleright_{z_{ik}}a).\]
 That is $(X, \triangleright_{z_{ij}})$ is a $p$-shelf.

We also assume that  $c\triangleright_{z_{ij}} a = c\triangleright_{z_{ij}} b$, then $q^{z_{ij}}_c h_{c\triangleright_{z_{ij}} a} = q^{z_{ij}}_c h_{c\triangleright_{z_{ij}} b}$, 
 by the fourth relation in \ref{qualg}, 
 we get $h_aq^{z_{ij}}_c = h_bq^{z_{ij}}_c$ and by the 
 invertibility of $q^{z_{ij}}_c$, $h_a = h_b,$ hence $a=b$. 
 \end{proof}
 \begin{rem}
 In Proposition \ref{proro} if $(X, \triangleright_{z_{ij}})$ is a finite magma, or such that $a\triangleright_{z_{ij}}$ is surjective for every $a\in X,$ $z_{i,j} \in Y$ and for all $a,b \in X,$ $h_a =h_b \Rightarrow h_a=h_b,$  then for all $a\in X,$ $z_{i,j} \in Y$, $a\triangleright_{z_{ij}}$ is bijective and hence $(X, \triangleright_{z_{ij}})$ is a $p$-rack. 
 \end{rem}
\begin{defn} \label{qualgd} ($p$-rack algebra.) A $p$-shelf algebra ${\cal Q}$ is a $p$-rack algebra if $(X, \triangleright_{z_{ij}}),$ $z_{i,j}\in Y$ is a $p$-rack.
\end{defn}
\begin{lemma} \label{lemmac}
Let ${\mathrm C}= \sum_{a\in X} h_a,$ then ${\mathrm C}$ is a central element of the $p$-rack algebra ${\cal Q}$. Also, $h_a^2 ={\mathrm C} h_a,$ for all $a \in X.$
\end{lemma}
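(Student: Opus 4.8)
The plan is to verify the two assertions directly from the defining relations (\ref{qualg}) of the $p$-rack algebra, using associativity and the invertibility of the generators $q_a^{z_{ij}}$. For the identity $h_a^2 = \mathrm{C}\, h_a$, I would start from $h_a \mathrm{C} = h_a \sum_{b\in X} h_b = \sum_{b\in X} h_a h_b$ and invoke the third relation in (\ref{qualg}), namely $h_a h_b = \delta_{a,b}h_a^2$, which collapses the sum to the single term $h_a^2$. The same computation with the factors in the other order gives $\mathrm{C}\, h_a = h_a^2$, so in particular $h_a$ commutes with $\mathrm{C}$; combining the two yields $h_a^2 = \mathrm{C}\,h_a = h_a\,\mathrm{C}$. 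This handles the second claim and already shows $\mathrm{C}$ commutes with every $h_a$.

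For centrality of $\mathrm{C}$, since $\mathrm{C}$ is a sum of the $h_a$'s and we have just shown it commutes with each $h_b$, it remains to check that $\mathrm{C}$ commutes with the generators $q_c^{z_{ij}}$ (commutation with $1_{\cal Q}$ is trivial, and commutation with $(q_c^{z_{ij}})^{-1}$ follows from commutation with $q_c^{z_{ij}}$ by conjugating the relation). Using the fourth relation $q^{z_{ij}}_b h_{b\triangleright_{z_{ij}} a} = h_a q^{z_{ij}}_b$, I compute $q_c^{z_{ij}} \mathrm{C} = \sum_{a\in X} q_c^{z_{ij}} h_{a}$; reindexing the sum by $a = c \triangleright_{z_{ij}} a'$ — which is legitimate precisely because $(X,\triangleright_{z_{ij}})$ is a $p$-rack, so $c\triangleright_{z_{ij}}$ is a bijection of $X$ — gives $\sum_{a'\in X} q_c^{z_{ij}} h_{c\triangleright_{z_{ij}} a'} = \sum_{a'\in X} h_{a'} q_c^{z_{ij}} = \mathrm{C}\, q_c^{z_{ij}}$. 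Since $\mathrm{C}$ commutes with all the algebra generators, it is central.

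The only subtle point — and hence the main thing to be careful about — is the reindexing step: it is exactly here that the $p$-rack hypothesis (as opposed to merely $p$-shelf) is used, so one must explicitly note that bijectivity of $L_c^{z_{ij}}$ makes the change of summation variable a genuine bijection of the index set $X$. If $X$ is infinite this is a reindexing of a (formal or suitably interpreted) sum by a bijection, which is harmless. Everything else is a one-line manipulation, so I would present the argument compactly: first the $h_a$-block computation giving $h_a^2 = \mathrm{C} h_a$ and $[h_a,\mathrm{C}]=0$, then the $q_c^{z_{ij}}$-block computation giving $[q_c^{z_{ij}},\mathrm{C}]=0$, and conclude centrality.
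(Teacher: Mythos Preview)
Your proposal is correct and follows essentially the same approach as the paper, which merely states that the proof is straightforward from the defining relations~(\ref{qualg}) and the bijectivity of $a\triangleright_{z_{ij}}$. You have simply spelled out the details the paper omits: the $h_a h_b=\delta_{a,b}h_a^2$ relation collapses the sum to give $h_a^2=\mathrm{C}h_a=h_a\mathrm{C}$, and the reindexing via the bijection $c\triangleright_{z_{ij}}$ (precisely the $p$-rack hypothesis) yields $[\mathrm{C},q_c^{z_{ij}}]=0$.
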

\begin{proof} The proof is straightforward by means of the definition of 
the algebra ${\cal Q}$ and the fact that $a \triangleright_{z_{ij}}$ is bijective. By rescaling the elements $h_a,$ i.e. by setting $h_a' := C^{-1}h_a,$ we observe that all algebra relations (\ref{qualg}) hold and also $\sum_{a\in X} h'_a = 1_{{\cal Q}}.$ Also, it immediately follows that for all $a \in X,$ $h_a^2 = {\mathrm C} h_a,$  and after rescaling $h_a^{'2} = h_a'.$ 
\end{proof}
Henceforth, we consider without loss of generality, $\sum_{a\in X} h_a=1_{ {\cal Q}}$ and $h_a^2 = h_a$ for all $a\in X.$

\begin{lemma}\label{lemmac2} (Commuting quantities.)
Let $t^{z_{ik}} := \sum_{a\in X} q_a^{z_{ik}},$ for $z_{i,k} \in Y.$ Then,
$~t^{z_{jk}} t^{z_{ik}} = t^{z_{ik}} t^{z_{jk}},  $ for all $z_{i,j,k} \in Y.$
 \end{lemma}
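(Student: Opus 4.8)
The plan is to expand the product as a double sum over $X$, push all the $q^{z_{ik}}$-generators to the left using the mixed commutation relation in \eqref{qualg}, and then recognize that the resulting sum is again a product of the two $t$'s after a change of summation index. Concretely, I would start from
\[
t^{z_{jk}}\, t^{z_{ik}} \;=\; \sum_{a,b \in X} q_a^{z_{jk}}\, q_b^{z_{ik}},
\]
and apply the second relation of \eqref{qualg}, namely $q_a^{z_{jk}} q_b^{z_{ik}} = q_b^{z_{ik}} q_{b \triangleright_{z_{ij}} a}^{z_{jk}}$, term by term, to get $\sum_{a,b \in X} q_b^{z_{ik}}\, q_{b \triangleright_{z_{ij}} a}^{z_{jk}}$.

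The key step is then the re-indexing of the inner sum. For each fixed $b \in X$, the map $L_b^{z_{ij}} : X \to X$, $a \mapsto b \triangleright_{z_{ij}} a$, is a bijection because $(X, \triangleright_{z_{ij}})$ is a $p$-rack (Definition \ref{qualgd}). Hence, substituting $a' := b \triangleright_{z_{ij}} a$ and letting $a'$ range over all of $X$,
\[
\sum_{a,b \in X} q_b^{z_{ik}}\, q_{b \triangleright_{z_{ij}} a}^{z_{jk}}
\;=\; \sum_{a',b \in X} q_b^{z_{ik}}\, q_{a'}^{z_{jk}}
\;=\; \Big(\sum_{b \in X} q_b^{z_{ik}}\Big)\Big(\sum_{a' \in X} q_{a'}^{z_{jk}}\Big)
\;=\; t^{z_{ik}}\, t^{z_{jk}},
\]
which is the desired identity, valid for all $z_{i,j,k} \in Y$.

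I expect the only real subtlety to be bookkeeping with the index substitution and making sure the bijectivity hypothesis (the $p$-rack property, as opposed to merely the $p$-shelf property) is invoked precisely where it is needed; the relabelling $a' = b\triangleright_{z_{ij}} a$ does not depend on $b$ in a way that obstructs factoring the double sum, since after the substitution the summand no longer involves $\triangleright_{z_{ij}}$ at all. If $X$ is infinite one should also note, as the paper already does in the linearization discussion, that the sums are to be understood formally (or in a suitable completion of $\mathcal{Q}$), but no genuine analytic input is required for the algebraic identity itself.
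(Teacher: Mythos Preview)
Your proof is correct and follows essentially the same approach as the paper's own argument: expand the double sum, apply the relation $q_a^{z_{jk}} q_b^{z_{ik}} = q_b^{z_{ik}} q_{b \triangleright_{z_{ij}} a}^{z_{jk}}$, and re-index using the bijectivity of $b\triangleright_{z_{ij}}$ coming from the $p$-rack hypothesis. The paper's proof is stated in one sentence, and your write-up simply spells out those steps explicitly.
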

\begin{proof}
The proof is straightforward by direct computation  by using of the relation $ q_a^{z_{jk}} q^{z_{ik}}_b = q^{z_{ik}}_b  
q^{z_{jk}}_{b \triangleright_{z_{ij}} a}$ and the fact that 
$ a\triangleright_{z_{ij}}$ is bijective.
\end{proof}

Having defined the $p$-rack algebra we may now show that there exists 
an associated universal ${\cal R}$-matrix, solution of the parametric Yang-Baxter equation.
\begin{pro} 
Let ${\cal Q}$ be a $p$-rack algebra and ${\cal R}^{z_{ij}} \in {\cal Q} \otimes {\cal Q}$ 
be an invertible element, such that ${\cal R}^{z_{ij}} = \sum_{a\in X} h_a \otimes q^{z_{ij}}_a,$ $z_{i,j}\in Y.$
Then ${\cal R}^{z_{ij}}$ satisfies the parametric Yang-Baxter equation,
\begin{equation}
{\cal R}^{z_{12}}_{12} {\cal R}^{z_{13}}_{13} {\cal R}^{z_{23}}_{23} = {\cal R}^{z_{23}}_{23} {\cal R}^{z_{13}}_{13} {\cal R}^{z_{12}}_{12}, \label{YBEu} \nonumber
\end{equation}
where ${\cal R}^{z_{12}}_{12} = \sum_{a\in X} h_a \otimes q^{z_{12}}_a \otimes 1_{\cal Q}, $ ${\cal R}^{z_{13}}_{13} = \sum_{a\in X} h_a  
\otimes 1_{\cal Q} \otimes q^{z_{13}}_a,$ and  ${\cal R}^{z_{23}}_{23} = \sum_{a\in X} 1_{\cal Q} \otimes h_a \otimes q^{z_{23}}_a.$
The inverse ${\cal R}$-matrix is $({\cal R}^{z_{ij}})^{-1} = \sum_{a\in X} h_a\otimes (q_a^{z_{ij}})^{-1}.$
\end{pro}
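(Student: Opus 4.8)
The plan is to verify the hexagon identity \eqref{YBEu} by direct computation, expanding both sides as sums over $X$ and reducing them to the same expression using only the defining relations \eqref{qualg} of the $p$-rack algebra together with $\sum_{a\in X} h_a = 1_{\cal Q}$ and $h_a^2 = h_a$. First I would write
\[
{\cal R}^{z_{12}}_{12} {\cal R}^{z_{13}}_{13} {\cal R}^{z_{23}}_{23} = \sum_{a,b,c\in X} h_a h_b \otimes q_a^{z_{12}} h_c \otimes q_b^{z_{13}} q_c^{z_{23}},
\]
and use $h_a h_b = \delta_{a,b} h_a$ to collapse the $a,b$ sum, then push $q_a^{z_{12}}$ past $h_c$ using the fourth relation $q_b^{z_{ij}} h_{b\triangleright_{z_{ij}} a} = h_a q_b^{z_{ij}}$, i.e. $q_a^{z_{12}} h_c = h_{a\triangleright_{z_{12}}^{-1} c}\, q_a^{z_{12}}$ after a relabelling, so that each $h$-factor migrates into the first tensor slot. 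The analogous manipulation on the right-hand side
\[
{\cal R}^{z_{23}}_{23} {\cal R}^{z_{13}}_{13} {\cal R}^{z_{12}}_{12} = \sum_{a,b,c\in X} h_c \otimes h_a q_c^{z_{23}} \otimes q_a^{z_{13}} q_b^{z_{12}}
\]
after collapsing is handled similarly, moving $h_a$ to the left slot via the fourth relation.

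The key step, and the one I expect to be the main obstacle, is reconciling the two $q\otimes q$ expressions in the third slot: the left side naturally produces $q_b^{z_{13}} q_c^{z_{23}}$ while the right side produces $q_a^{z_{13}} q_b^{z_{12}}$ with a different pairing of indices, and these must be matched using the braiding relation $q_a^{z_{jk}} q_b^{z_{ik}} = q_b^{z_{ik}} q_{b\triangleright_{z_{ij}} a}^{z_{jk}}$. Reindexing carefully — tracking how the argument of each $q$ and each $h$ transforms under $\triangleright_{z_{ij}}$ and its inverse, and invoking bijectivity of $a\triangleright_{z_{ij}}$ (the $p$-rack property, Definition \ref{qualgd}) so that sums over transformed indices are still sums over all of $X$ — is where the $p$-self-distributivity of $(X,\triangleright_{z_{ij}})$ from Proposition \ref{proro} gets used implicitly, exactly as in the proof of Proposition \ref{prose}. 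I would organize this as a string of equalities with the justification (which of the four relations, or collapsing, or reindexing) annotated at each step.

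Once both sides are reduced to a common normal form such as $\sum_{a\in X} h_a \otimes (\text{something})\otimes(\text{something})$ with matching arguments, the identity follows. For the final sentence of the statement, that $({\cal R}^{z_{ij}})^{-1} = \sum_{a\in X} h_a \otimes (q_a^{z_{ij}})^{-1}$, I would compute ${\cal R}^{z_{ij}} \big(\sum_{b\in X} h_b \otimes (q_b^{z_{ij}})^{-1}\big) = \sum_{a,b} h_a h_b \otimes q_a^{z_{ij}} (q_b^{z_{ij}})^{-1}$, collapse with $h_a h_b = \delta_{a,b} h_a$ and $q_a^{z_{ij}} (q_a^{z_{ij}})^{-1} = 1_{\cal Q}$ (the first relation), and get $\sum_{a\in X} h_a \otimes 1_{\cal Q} = 1_{\cal Q} \otimes 1_{\cal Q}$; the opposite order is identical. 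This part is routine.
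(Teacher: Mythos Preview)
Your overall strategy---expand both sides and match them using only the relations \eqref{qualg}---is exactly what the paper does, and your verification of the inverse is fine. But your execution has bookkeeping errors that would derail the computation, and your plan is more complicated than it needs to be.

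First, your expansion of the right-hand side is wrong: ${\cal R}^{z_{23}}_{23}{\cal R}^{z_{13}}_{13}{\cal R}^{z_{12}}_{12} = \sum_{a,b,c} h_b h_a \otimes h_c\, q_a^{z_{12}} \otimes q_c^{z_{23}} q_b^{z_{13}}$; there is no $q^{z_{12}}$ in the third slot and no $q^{z_{23}}$ in the second. Second, there is no need to push any $h$ into the first tensor slot---$h_c$ lives in the second slot on both sides, and moving it across a tensor product is not even meaningful. Third, the $p$-self-distributivity of $(X,\triangleright_{z_{ij}})$ is \emph{not} used here; the second algebra relation $q_a^{z_{jk}} q_b^{z_{ik}} = q_b^{z_{ik}} q_{b\triangleright_{z_{ij}} a}^{z_{jk}}$ already encodes everything you need.

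The paper's computation is cleaner than what you outline: after collapsing $h_a h_b=\delta_{a,b}h_a$ on the LHS, simply \emph{leave it} as $\sum_{a,c} h_a \otimes q_a^{z_{12}} h_c \otimes q_a^{z_{13}} q_c^{z_{23}}$. On the RHS, after the same collapse, apply the fourth relation to $h_c\,q_a^{z_{12}} = q_a^{z_{12}} h_{a\triangleright_{z_{12}} c}$ in the second slot and the second relation to $q_c^{z_{23}} q_a^{z_{13}} = q_a^{z_{13}} q_{a\triangleright_{z_{12}} c}^{z_{23}}$ in the third slot, giving $\sum_{a,c} h_a \otimes q_a^{z_{12}} h_{a\triangleright_{z_{12}} c} \otimes q_a^{z_{13}} q_{a\triangleright_{z_{12}} c}^{z_{23}}$. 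Now reindex $\hat c := a\triangleright_{z_{12}} c$ (this is where bijectivity of $a\triangleright_{z_{12}}$ is used) and you are done---no inverse maps, no self-distributivity, no further shuffling.
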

\begin{proof}
The proof is  a direct computation of the two sides of the Yang-Baxter equation 
(and use of the fundamental relations (\ref{qualg})):
The LHS of the Yang-Baxter equation becomes
\begin{eqnarray}
   \sum_{a,b,c \in X} h_a h_b \otimes  q^{z_{12}}_a  h_c \otimes q^{z_{13}}_b q^{z_{23 }}_c  =  
\sum_{a,b,c \in X} h_a \otimes  q^{z_{12}}_a  h_c \otimes q^{z_{13}}_a q^{z_{23}}_c \nonumber
\end{eqnarray}
whereas the RHS gives
\begin{eqnarray}
\sum_{a,b,c\in X}   h_b h_a \otimes  h_c  q^{z_{12}}_a \otimes q^{z_{23}}_c q^{z_{13}}_b=  
\sum_{a,b,c\in X}    h_a \otimes   q^{z_{12}}_a h_{a\triangleright_{z_{12}}c} \otimes q^{z_{13}}_a q^{z_{23}}_{a \triangleright_{z_{12}}c}, \nonumber
\end{eqnarray}
by setting  ${a \triangleright_{z_{12}} c} = \hat c$ in the final expression for the RHS (using that $a \triangleright_{z_{12}}$ is bijecive) we show that LHS$=$RHS.

We recall from Lemma \ref{lemmac} that $\sum_{a\in X}{h_a} = 1_{\cal Q},$ then $({\cal R}^{z_{ij}})^{-1} = \sum_{a\in X} h_a\otimes (q_a^{z_{ij}})^{-1}.$
\end{proof}

\begin{rem} \label{remfu} 
{\bf Fundamental representation:} We first recall the $n\times n$ matrices $e_{i,j},$ with elements $(e_{i,j})_{k,l} = \delta_{i,k} \delta_{j, l}.$ Let ${\cal Q}$ be a $p$-rack algebra and $\rho: {\cal Q} \to \EEnd({\mathbb C}^n)$ ($V$ is an $n$-dimensional vector space) be the map defined by
\begin{equation}
q^{z_{ij}}_a \mapsto \sum_{x \in X} e_{x, a \triangleright_{z_{ij}} x}, \quad h_a\mapsto e_{a,a}. \label{remfu1}
\end{equation}
This is a representation of the rack algebra if and only of $(X, \triangleright_{z_{ij}})$ is a $p$-rack for all $z_{i,j} \in Y.$
Then ${\cal R}^{z_{ij}}\mapsto R^{z_{ij}}= 
\sum_{a,b\in X} e_{b,b} \otimes e_{a, b\triangleright_{z_{ij}}a},$ which is the linearized version of the $p$-rack solution.{We also note that $(R^{z_{ij}})^{-1} =(R^{z_{ij}})^{T},$ where $^T$ denotes total transposition.}
\end{rem}
We recall that $a\triangleright_{z_{ij}}: X \to X$ is a bijection, 
and ($R^{z_{12}})^{-1} = \sum_{a,b\in X} e_{b,b} \otimes e_{b\triangleright_{z_{12}} a, a}.$

We briefly formulate the Faddeev-Reshetikhin-Takhtajan (FRT) construction \cite{FRT} in order to check the consistency of our algebraic construction. From the parametric Yang-Baxter equation (\ref{YBEu}, and after recalling the representations of Remark
 \ref{remfu} and setting:
$(\rho \otimes \id){\cal R}^{z_{ij}} := L^{z_{ij}} = \sum_{a\in X}e_{a,a} \otimes q^{z_{ij}}_a,$ 
$~(\id \otimes \rho){\cal R}^{z_{ij}}= \hat L^{z_{ij}}= \sum_{a,b \in X} h_b \otimes e_{a, b \triangleright_{z_{ij}}a},$ and \\ 
$(\rho \otimes \rho){\cal R}^{z_{ij}}:=R^{z_{ij}}= \sum_{a,b  \in X} e_{b, b} \otimes e_{a, b \triangleright_{z_{ij}}a},$ 
we derive consistent algebraic relations:
\begin{equation}
    R^{z_{12}}_{12} L^{z_{13}}_{13} L^{z_{23}}_{23} = L^{z_{23}}_{23} L^{z_{13}}_{13} R^{z_{12}}_{12},  ~~\hat L^{z_{12}}_{12} \hat L^{z_{13}}_{13} R^{z_{23}}_{23} = R^{z_{23}}_{23}  \hat L^{z_{13}}_{13} \hat L^{z_{12}}_{12},
~~L^{z_{12}}_{12} R^{z_{13}}_{13}
\hat L^{z_{23}}_{23} = \hat L^{z_{23}}_{23} R^{z_{13}}_{13} L ^{z_{12}}_{12}, \label{a}
\end{equation}
which lead to the $p$-rack algebra given in Definition (\ref{qualg}) and provide a strong
consistency check on the algebraic relations (\ref{qualg}).

\begin{pro} \label{basic01} (Algebra homomorphism)
Let ${\cal Q}$ 
be a $p$-rack algebra and 
${\cal R}^{z_{ij}}= \sum_{a\in X} h_a \otimes q^{z_{ij}}_a \in {\cal Q} \otimes {\cal Q}$ 
be a solution of the Yang-Baxter equation.
We also define for $z_{i,j,k} \in Y,$ $\Delta'_{z_{ij}}: {\cal Q} \to {\cal Q} \otimes {\cal Q},$ such that for all $a \in X,$ 
\[\Delta'_{z_{jk}}((q^{z_{ik}}_a)^{\pm 1}) := (q^{z_{ij}}_a)^{\pm 1} \otimes (q^{z_{ik}}_a)^{\pm 1}, \quad \Delta'_{z_{ij}}(h_a) := \sum_{b,c \in X} 
h_b \otimes h_c\Big |_{b\triangleright_{z_{ij}} c = a}.\]
Then the following statements hold: 
\begin{enumerate}
\item $\Delta'_{z_{ij}}$ is a ${\cal Q}$ algebra homomorphism for all $z_{i,j} \in Y.$
\item ${\cal R}^{z_{jk}} \Delta'_{z_{jk}}(q_a^{z_{ik}}) = \Delta^{'(op)}_{z_{kj}}(q_a^{z_{ik}}) {\cal R}^{z_{jk}},$ 
for all $a \in X,$ $z_{i,j,k} \in Y.$ Recall, $\Delta^{'(op)}_{z_{ij}} := \pi \circ \Delta'_{z_{ij}},$ where $\pi$ is the flip map.
\end{enumerate}
\end{pro}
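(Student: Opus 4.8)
The plan is to verify the two claims directly from the defining relations \eqref{qualg} of the $p$-rack algebra.

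For part (1), an algebra homomorphism on a presented algebra is determined by its values on generators, so I would check that the images $\Delta'_{z_{jk}}((q^{z_{ik}}_a)^{\pm1})$, $\Delta'_{z_{ij}}(h_a)$ satisfy the four families of relations in \eqref{qualg}. The invertibility relation $q_a^{z_{ij}}(q_a^{z_{ij}})^{-1}=1$ is immediate since $\Delta'$ sends $q_a^{\pm1}$ to a tensor product of mutually inverse elements. For the braid-type relation $q_a^{z_{jk}}q_b^{z_{ik}} = q_b^{z_{ik}}q^{z_{jk}}_{b\triangleright_{z_{ij}}a}$ one expands $\Delta'_{z_{jk}}(q_a^{z_{jk}})\,\Delta'_{z_{jk}}(q_b^{z_{ik}}) = q_a^{z_{jj}}q_b^{z_{ij}}\otimes q_a^{z_{jk}}q_b^{z_{ik}}$ — here I need to be careful about exactly which index pattern the coproduct produces on each leg — and then apply the algebra relation factorwise, matching the right-hand side after using that $b\triangleright_{z_{ij}}a$ appears consistently in both tensor factors. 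The relation $h_ah_b=\delta_{a,b}h_a^2$ (now $h_a^2=h_a$) for the images $\Delta'_{z_{ij}}(h_a)$ reduces to showing $\big(\sum_{b\triangleright c=a} h_b\otimes h_c\big)\big(\sum_{b'\triangleright c'=a'} h_{b'}\otimes h_{c'}\big)$ vanishes unless $a=a'$ and is idempotent when $a=a'$; this follows because $h_bh_{b'}\otimes h_ch_{c'}=\delta_{b,b'}\delta_{c,c'}h_b\otimes h_c$, forcing $b\triangleright_{z_{ij}}c = b'\triangleright_{z_{ij}}c'$ and hence $a=a'$. Finally the mixed relation $q^{z_{ij}}_b h_{b\triangleright_{z_{ij}}a} = h_a q^{z_{ij}}_b$ is checked on the images by writing $\Delta'(h_{b\triangleright a})$ as a sum over pairs whose product is $b\triangleright_{z_{ij}}a$, reindexing that sum using bijectivity of the relevant $c\triangleright_{z_{ij}}$ maps ($(X,\triangleright_{z_{ij}})$ is a $p$-rack), and pushing the $q$'s through each leg via the same mixed relation; the $p$-self-distributivity \eqref{shelf} is what guarantees the reindexed constraint is again of the required form.

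For part (2), I would compute both sides of ${\cal R}^{z_{jk}}\Delta'_{z_{jk}}(q^{z_{ik}}_a) = \Delta'^{(op)}_{z_{kj}}(q^{z_{ik}}_a){\cal R}^{z_{jk}}$ in ${\cal Q}\otimes{\cal Q}$. The left side is $\big(\sum_{b} h_b\otimes q^{z_{jk}}_b\big)\big(q^{z_{ij}}_a\otimes q^{z_{ik}}_a\big) = \sum_b h_b q^{z_{ij}}_a \otimes q^{z_{jk}}_b q^{z_{ik}}_a$; the right side is $\big(q^{z_{ik}}_a\otimes q^{z_{ij}}_a\big)\big(\sum_b h_b\otimes q^{z_{jk}}_b\big) = \sum_b q^{z_{ik}}_a h_b \otimes q^{z_{ij}}_a q^{z_{jk}}_b$. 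On the first tensor leg I use $h_b q^{z_{ij}}_a = q^{z_{ij}}_a h_{a\triangleright_{z_{ij}}b}$ (the mixed relation, solved for $h$) and reindex $b\mapsto a\triangleright_{z_{ij}}^{-1}b'$; on the second leg I use the braid relation $q^{z_{jk}}_b q^{z_{ik}}_a = q^{z_{ik}}_a q^{z_{jk}}_{a\triangleright_{z_{ij}}b}$ with the same reindexing, and the two sides match term by term. The key points making this work are precisely the $p$-rack bijectivity (to legitimize the reindexing) and the exact index bookkeeping in the relations — the ``$b\triangleright_{z_{ij}}a$'' pattern must line up with the index pattern in the definition of $\Delta'$.

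The main obstacle I anticipate is not any single computation but the combinatorial bookkeeping of the three parameter labels: one must track which of $z_{ij}, z_{jk}, z_{ik}$ appears on which tensor factor at each stage, and verify that after reindexing (using a $p$-rack bijection with a specific parameter) the constraints $b\triangleright_{z_{?}}c = a$ transform into constraints of exactly the shape appearing in the definition of $\Delta'$. This is where $p$-self-distributivity \eqref{shelf} enters essentially — it is the identity that makes these reindexed constraints consistent — and getting the index pattern of $\Delta'_{z_{jk}}((q^{z_{ik}}_a)^{\pm1}) = (q^{z_{ij}}_a)^{\pm1}\otimes(q^{z_{ik}}_a)^{\pm1}$ to propagate correctly through products is the delicate part of the verification.
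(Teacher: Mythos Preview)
Your outline for part (1) is essentially the paper's argument: check the four defining relations of \eqref{qualg} on the images of the generators, using orthogonality of the $h_a$'s for the $hh$ relation, the algebra relations factorwise for the $qq$ relation, and $p$-self-distributivity together with the bijectivity of $a\triangleright_{z_{ij}}$ for the mixed $hq$ relation. That is exactly how the paper proceeds.

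Part (2), however, does not close as you have written it. Your right-hand side $(q^{z_{ik}}_a\otimes q^{z_{ij}}_a)\,{\cal R}^{z_{jk}}$ puts $q^{z_{ik}}_a$ in the first tensor leg, while your left-hand side has $h_b\,q^{z_{ij}}_a$ there. The relation $h_b\,q = q\,h_{\cdot}$ in \eqref{qualg} only moves $h$ past the \emph{same} $q$; it cannot convert $q^{z_{ij}}_a$ into $q^{z_{ik}}_a$, and the same mismatch appears on the second leg ($q^{z_{ik}}_a$ versus $q^{z_{ij}}_a$). No reindexing via a $p$-rack bijection fixes this, because the discrepancy is in the parameter superscript, not in the element subscript.

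The point you are missing is the paper's preliminary observation: $\Delta^{'(op)}_{z_{kj}}(q_a^{z_{ik}}) = \Delta'_{z_{jk}}(q_a^{z_{ik}})$. Swapping the parameter indices $j\leftrightarrow k$ in the subscript of $\Delta'$ reverses the two tensor factors, and the flip $\pi$ reverses them back, so the opposite coproduct with swapped parameters coincides with the original coproduct on the $q$'s. Once you see this, the claimed intertwining relation is just the \emph{commutation} ${\cal R}^{z_{jk}}\,(q^{z_{ij}}_a\otimes q^{z_{ik}}_a) = (q^{z_{ij}}_a\otimes q^{z_{ik}}_a)\,{\cal R}^{z_{jk}}$, and this follows in one line from $h_c\,q^{z_{ij}}_a = q^{z_{ij}}_a\,h_{a\triangleright_{z_{ij}}c}$ on the first leg and $q^{z_{jk}}_c\,q^{z_{ik}}_a = q^{z_{ik}}_a\,q^{z_{jk}}_{a\triangleright_{z_{ij}}c}$ on the second, together with the bijectivity of $a\triangleright_{z_{ij}}$ to relabel the sum.
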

\begin{proof}
For our proof we use the definition of the $p$-rack algebra.
\begin{enumerate}
\item In order to show that  $\Delta'_{z_{ij}}$ is an algebra homomorphism 
it suffices to show the following statements for all $z_{i,j,k,l} \in Y.$  
We first check that indeed:\\ $\Delta'_{z_{kl}}((q_a^{z_{jl}})^{-1})\Delta'_{z_{kl}}(q_a^{z_{jl}})=\Delta'_{z_{kl}}(q_a^{z_{jl}})\Delta'_{z_{kl}}((q_a^{z_{jl}})^{-1}) = 1_{\cal Q} \otimes 1_{\cal Q}.$
Moreover, 
 \begin{eqnarray}
    && \Delta'_{z_{kl}}(q_a^{z_{jl}})\Delta'_{z_{kl}}(q_b^{z_{il}}) = (q_a^{z_{jk}} \otimes q_b^{z_{jl}}) (q_b^{z_{ik}} \otimes q_b^{z_{il}}) = \nonumber\\ && (q_b^{z_{ik}} \otimes q_b^{z_{il}}) (q_{b\triangleright_{z_{ij}} a}^{z_{jk}} \otimes q_{b\triangleright_{z_{ij}} a}^{z_{jl}}) =\Delta'_{z_{kl}}(q_b^{z_{il}})\Delta'_{z_{kl}}(q_{b \triangleright_{z_{ij}}a}^{z_{jl}}). \nonumber \end{eqnarray}
Via the algebraic relations (\ref{qualg}) we obtain
\begin{eqnarray}
&& \Delta'_{z_{jk}}(h_a) \Delta'_{z_{jk}}(q_b^{z_{ik}}) = (h_{a_1} \otimes h_{a_2}\Big|_{a_1 \triangleright_{z_{jk}} a_2 =a })(q_{b}^{z_{ij}} \otimes q_b^{z_{ik}}) = \nonumber\\
&& (q_{b}^{z_{ij}} \otimes q_b^{z_{ik}}) (h_{b\triangleright_{z_{ij}} a_1} \otimes h_{b\triangleright_{z_{ik}} a_2}\Big|_{b\triangleright_{z_{ik}}(a_1 \triangleright_{z_{jk}} a_2) ={b\triangleright_{z_{ik}} a}}) = \Delta'_{z_{jk}}(q_b^{z_{ik}}) 
\Delta'_{z_{jk}}(h_{b\triangleright_{z_{ik}}a}), \nonumber
\end{eqnarray}
where we have used the self-distributivity for racks, $(b\triangleright_{z_{ij}} a_1) \triangleright_{z_{jk}}(b\triangleright_{z_{ik}} a_2) = b \triangleright_{z_{ik}} (a_1 \triangleright_{z_{jk}} a_{2}).$ Similarly, it is straightforward to show via (\ref{qualg}) that
$\Delta'_{z_{jk}}(h_a)\Delta'_{z_{jk}}(h_b) =\delta_{a,b} \Delta'_{z_{jk}}(h_a).$ And this concludes the  proof of the first part.
\item  For the proof of the second part we first notice that $\Delta_{z_{kj}}^{'(op)}(q_b^{z_{ij}}) = \Delta'_{z_{jk}}(q_b^{z_{ij}}).$ 
We then compute for all 
$b \in X,$ $z_{i,j,k} \in Y$:
\begin{eqnarray}
    && {\cal R}^{z_{jk}}\Delta'_{z_{jk}}(q_b^{z_{ik}})= (\sum_{a\in X} h_a \otimes q_a^{z_{jk}}) (q_b^{z_{ij}} \otimes q_b^{z_{ik}}) = \nonumber\\
    && (q_b^{z_{ij}} \otimes q_b^{z_{ik}}) (\sum_{a\in X} h_{b \triangleright_{z_{ij}} a} \otimes q_{b \triangleright_{z_{ij} a}}^{z_{jk}}) = \Delta'_{z_{jk}}(q_b^{z_{ik}}) {\cal R}^{z_{jk}}. \nonumber  
         \end{eqnarray}
         \qedhere 
         \end{enumerate}
\end{proof}

\begin{lemma}
 (Parametric coassociativity.)
Let ${\cal Q}$ 
be a $p$-rack algebra.
We also define for $z_{i,1,2, \ldots, n} \in Y,$ 
$\Delta^{'(n)}_{z_{1 2 \ldots n}}: {\cal Q}  
\to {\cal Q}^{\otimes n},$ such that
\begin{eqnarray}
    && \Delta^{'(n)}_{z_{12\ldots n}}((q^{z_{in}}_a)^{\pm 1}) := 
    (q^{z_{i1}}_a)^{\pm 1} \otimes (q^{z_{i2}}_a)^{\pm 1}\otimes \ldots \otimes (q_a^{z_{in}})^{\pm 1} \label{con} \\ 
    && \Delta^{'(n)}_{z_{12 \ldots n}}(h_a) := 
    \sum_{a_1, \ldots,  a_n \in X} 
h_{a_1} \otimes h_{a_2} \otimes \ldots \otimes h_{a_n}\Big |_{a_1\triangleright_{z_{1n}}( a_2\triangleright_{z_{2n}} ( \ldots (a_{n-1} \triangleright_{z_{n-1 n} a_n})\ldots ))= a}. \label{con2b}
\end{eqnarray}
Then,
\begin{enumerate}
\item For all $a \in X,$ $z_{i, 1,2, \ldots n} \in Y,$
\begin{eqnarray} 
&&\Delta_{z_{12...n}}^{'(n)}((q_a^{z_{in}})^{\pm 1}):= (\Delta^{'(n-1)}_{z_{12...n-1}} \otimes \id)
\Delta'_{z_{n-1n}}((q_a^{z_{in}})^{\pm 1}) = (\id \otimes \Delta^{'(n-1)}_{z_{23...n}}) 
\Delta'_{z_{1n}}((q_a^{z_{in}})^{\pm 1}) \label{a2b}\\
&& \Delta_{z_{12...n}}^{'(n)}(h_a):=  (\id \otimes \Delta^{'(n-1)}_{z_{23...n}}) \Delta'_{z_{1n}}(h_a). \label{b2b}
\end{eqnarray}

\item For all $a, b \in X,$ $z_{i, 1,2, \ldots n} \in Y,$ 
$\Delta^{'(n)}_{z_{12\ldots n}}$ is an algebra homomorphism. 
\hfill \qedhere
\end{enumerate}
\end{lemma}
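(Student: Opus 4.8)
The plan is to prove the two claims of the lemma by induction on $n$, treating the generators $q_a^{z_{in}}$ (and their inverses) and the idempotents $h_a$ separately, since they satisfy rather different recursion patterns. For the base case $n=2$ everything reduces to the definitions of $\Delta'_{z_{ij}}$ in Proposition \ref{basic01}: statement (1) is the trivial observation that $\Delta'^{(2)}_{z_{12}} = \Delta'_{z_{12}}$, and statement (2) is exactly part (1) of Proposition \ref{basic01}. So all the work is in the inductive step, and the heart of the matter is the coassociativity identity \eqref{a2b}, \eqref{b2b}.

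First I would handle the $q$-generators. On a $q_a^{z_{in}}$ (or its inverse) the coproduct $\Delta'_{z_{n-1n}}$ produces $q_a^{z_{i,n-1}}\otimes q_a^{z_{in}}$; applying $\Delta'^{(n-1)}_{z_{12\ldots n-1}}\otimes\id$ to the first leg and using the induction hypothesis on $\Delta'^{(n-1)}$ immediately gives $q_a^{z_{i1}}\otimes\cdots\otimes q_a^{z_{in}}$, i.e.\ the right-hand side of \eqref{con}. The same computation run through $\id\otimes\Delta'^{(n-1)}_{z_{23\ldots n}}$ after $\Delta'_{z_{1n}}$ gives the identical answer, so both bracketings agree on the $q$'s; this also shows $\Delta'^{(n)}$ is well defined on the group-like part and is compatible with the relation $q_a^{z_{ij}}(q_a^{z_{ij}})^{-1}=1$. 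The only mildly delicate point is bookkeeping of which pair of indices $(z_i,z_j)$ sits in each tensor slot, but this is purely combinatorial.

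The genuinely substantive step is the $h_a$ part of coassociativity. Here one must show that the single summation condition in \eqref{con2b}, namely
\[
a_1\triangleright_{z_{1n}}\bigl(a_2\triangleright_{z_{2n}}(\cdots(a_{n-1}\triangleright_{z_{n-1\,n}}a_n)\cdots)\bigr)=a,
\]
is reproduced when one first splits off the last leg via $\Delta'_{z_{n-1\,n}}(h_a)=\sum_{b\triangleright_{z_{n-1\,n}}c=a} h_b\otimes h_c$ and then expands $h_b$ via $\Delta'^{(n-1)}_{z_{12\ldots n-1}}$ (and symmetrically for the other bracketing, where one peels off the first leg via $\Delta'_{z_{1n}}$ and expands the second factor by $\Delta'^{(n-1)}_{z_{23\ldots n}}$). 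Both routes lead to a multiply-nested $\triangleright$-expression constrained to equal $a$, and the claim is that these two nested expressions coincide as functions of $a_1,\dots,a_n$. I expect this to be the main obstacle: it is precisely an iterated application of the parametric self-distributivity \eqref{shelf} needed to rebracket the nested products, together with the bijectivity of each $a\triangleright_{z_{ij}}$ (so that the change of summation variables $c\mapsto a_{n}$ etc.\ is legitimate). I would prove the needed rebracketing as an auxiliary sublemma by a secondary induction on $n$, using \eqref{shelf} to move the outermost $\triangleright_{z_{1n}}$ past the inner nest one step at a time.

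Finally, for part (2) — that $\Delta'^{(n)}_{z_{12\ldots n}}$ is an algebra homomorphism — I would argue inductively: having just shown $\Delta'^{(n)} = (\id\otimes\Delta'^{(n-1)})\circ\Delta'$, it is a composition of $\Delta'$ (a homomorphism by Proposition \ref{basic01}(1)), the homomorphism $\Delta'^{(n-1)}$ (induction hypothesis), and the embedding into the first tensor factor; compositions and tensor extensions of algebra homomorphisms are again algebra homomorphisms, so the conclusion is immediate once coassociativity is in place. Thus the whole lemma rests on the rebracketing sublemma driven by parametric self-distributivity and rack bijectivity, and once that is secured the rest is formal.
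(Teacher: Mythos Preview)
Your proposal contains a genuine misreading of the lemma that would lead you into a false claim. For the $q$-generators, identity \eqref{a2b} asserts that \emph{both} bracketings agree, and your treatment of that part is fine. But for $h_a$, identity \eqref{b2b} asserts \emph{only one} recursion,
\[
\Delta'^{(n)}_{z_{12\ldots n}}(h_a)=(\id\otimes\Delta'^{(n-1)}_{z_{23\ldots n}})\,\Delta'_{z_{1n}}(h_a),
\]
and nothing about the other bracketing $(\Delta'^{(n-1)}_{z_{12\ldots n-1}}\otimes\id)\,\Delta'_{z_{n-1\,n}}(h_a)$. The paper itself remarks immediately after the lemma that ``the parametric coassociativity does not hold in this case'' for $h_a$, and only the right-hand tree of Figure~2 represents $\Delta'^{(3)}_{z_{123}}(h_a)$. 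In fact the two bracketings genuinely disagree: already for $n=3$ the left bracketing yields the constraint $(a_1\triangleright_{z_{12}}a_2)\triangleright_{z_{23}}a_3=a$, whereas the right bracketing, after applying $p$-self-distributivity, gives $(a_1\triangleright_{z_{12}}a_2)\triangleright_{z_{23}}(a_1\triangleright_{z_{13}}a_3)=a$. For fixed $a_1,a_2$ these cut out different solution sets in $a_3$ unless every $a_1\triangleright_{z_{13}}$ is the identity. So the ``auxiliary sublemma'' you identify as the main obstacle is not only unnecessary --- it is false, and you would get stuck trying to prove it.

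Once the statement is read correctly, \eqref{b2b} is a tautology: $\Delta'_{z_{1n}}(h_a)$ splits off the outermost condition $a_1\triangleright_{z_{1n}}c=a$, and the inductive hypothesis on $\Delta'^{(n-1)}_{z_{23\ldots n}}(h_c)$ supplies the inner nest $a_2\triangleright_{z_{2n}}(\cdots)=c$, exactly reproducing \eqref{con2b} with no rebracketing and no self-distributivity required. The place where $p$-self-distributivity enters is part~(2), in checking that $\Delta'^{(n)}$ preserves the relation $h_a\,q_b^{z_{ij}}=q_b^{z_{ij}}\,h_{b\triangleright_{z_{ij}}a}$; your alternative route via composition of homomorphisms (using \eqref{b2b} and Proposition~\ref{basic01}) is valid and sidesteps the direct computation, which is a slightly cleaner argument than the paper's terse ``direct computation and use of the $p$ self-distributivity''.
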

\begin{proof}
 These are shown by direct computation and use of the $p$ self-distributivity. \end{proof}

\begin{itemize}
\item Graphical representations of $\Delta'_{z_{12}},$ and the parametric 
coassociativity via binary trees:
\begin{enumerate}
\item We graphically depict below the parametric co-product $\Delta'_{z_{12}}:$

$ $

\begin{center}
\begin{tikzpicture}[
roundnode/.style={circle, draw=black!100, fill=gray!30,  thick, minimum size=7mm},
]
\node[roundnode]        (uppercircle) {$\Delta_{z_{12}}$};
\end{tikzpicture}
  $\qquad = \qquad $
\begin{tikzpicture}[thick,level/.style={sibling distance=30mm/#1}]
\node [vertex] {$z_{12}$}
    child { node {1}
     }
     child { node {2}
%
  };
  \end{tikzpicture}

\end{center}

\begin{center}
Figure 1.
\end{center}

$ $

\item We also see below the typical binary tree graphical representation of the 
parametric co-associativity condition for $n=3$ and $\Delta^{'(3)}_{z_{123}},$ i.e  
$\Delta^{'(3)}_{z_{123}}: = (\Delta'_{z_{12}} \otimes \id)\Delta'_{z_{13}} = 
(\id \otimes \Delta'_{z_{23}}) \Delta'_{z_{12}}:$

$ $

\begin{center}
\begin{tikzpicture}[
roundnode/.style={circle, draw=black!100, fill=gray!30, thick, minimum size=7mm},
]
\node[roundnode]        (uppercircle) {$\Delta^{(3)}_{z_{123}}$};
\end{tikzpicture}
  $\qquad = \qquad $
  \begin{tikzpicture}[thick,level/.style={sibling distance=30mm/#1}]
\node [vertex] {$z_{23}$}
  child {
   node [vertex] (a) {$z_{12}$}
    child { node {1}
     }
     child { node {2}
    }
  }
  child { node {3}
%
  };
\end{tikzpicture}
=
\begin{tikzpicture}[thick,level/.style={sibling distance=30mm/#1}]
\node [vertex] {$z_{13}$}
  child { node {1}
    }
  child { node{2}
    node [vertex] {$z_{23}$}
    child { node {2}
%
    }
    child { node {3}    }
   } ;
\end{tikzpicture}
\end{center}

\begin{center}
Figure 2.
\end{center}

$ $

\noindent Notice that $\Delta'_{z_{12}}(h_a)$ is still graphically represented by Figure 1, however the parametric coassociativity does not hold in this case and the three coproduct  $\Delta_{z_{123}}^{'(3)}$ is then represented by the right part of the graphical equation Figure 2. 

$ $

\item In general, in the case where coassociativity holds the $n^{th}$ coproduct $\Delta_{z_{12\ldots n}}^{'(n)}(q_a^{z_{kn}}),$ $a\in X,$ $z_{k,1,2, \ldots n} \in Y$ is depicted by $2^{n-2}$ equivalent diagrams:

$ $

\begin{center}
\begin{tikzpicture}[
roundnode/.style={circle, draw=black!100, fill=gray!30, thick, minimum size=7mm},
]
\node[roundnode]        (uppercircle) {$\Delta^{(n)}_{z_{12\ldots n}}$};
\end{tikzpicture}
$\qquad = \qquad $
  \begin{tikzpicture}[thick,level/.style={sibling distance=30mm/#1}]
\node [vertex] {$z_{n-1n}$}
  child {
   node [vertex] (a) {$\Delta^{(n-1)}_{z_{12 \ldots n-1}}$}
  }
  child { node {n}
%
  };
\end{tikzpicture}
=
\begin{tikzpicture}[thick,level/.style={sibling distance=30mm/#1}]
\node [vertex] {$z_{1n}$}
  child { node {1}
    }
  child { node{2}
    node [vertex] {$\Delta^{(n-1)}_{z_{23\ldots n}}$}
%
   };
\end{tikzpicture}
\end{center}
\begin{center}
Figure 3.
\end{center}

 $ $
 
 Unfolding $\Delta^{'(n-1)}$ in the LHS and RHS of Figure 3 yields $2^{n-2}$ binary tree diagrams. $\Delta^{'(n)}_{z_{12 \ldots n}}(h_a)$ on the other hand (no co associativity applies) is depicted by only one diagram, shown in the right part of Figures 2 $\&$ 3. 
 \end{enumerate}
\end{itemize}
\begin{itemize}

\item  {\bf Example.} The four equivalent binary trees that depict $\Delta_{z_{1234}}^{'(4)}(q_a^{z_{k4}}),$ emerging from Figures 2 and 3:
\begin{center}
\begin{tikzpicture}[thick,level/.style={sibling distance=40mm/#1}]
\node [vertex] {$z_{34}$}
  child {
    node [vertex] (a) {$z_{23}$}
    child {
      node [vertex] {$z_{12}$}
      child {
        node {$1$}
      }
      child {node {$2$}}
    }
    child {
      node  {$3$}
    }
  }
  child {
    node{$4$}
  };
\end{tikzpicture}
\begin{tikzpicture}[thick,level/.style={sibling distance=40mm/#1}]
\node [vertex] {$z_{14}$}
  child {
    node  {$1$}
    }
  child {
    node [vertex] {$z_{24}$}
    child {
      node  {$2$}
    }
    child {
      node [vertex] {$z_{34}$}
      child {node  {$3$}}
      child {node {$4$}}
    }
  };
\end{tikzpicture}

\begin{tikzpicture}[thick,level/.style={sibling distance=40mm/#1}]
\node [vertex] {$z_{34}$}
  child {
    node [vertex] (a) {$z_{13}$}
    child {
      node {$1$}
    }
    child {
      node [vertex] {$z_{23}$}
      child {node  {$2$}}
      child {node  {$3$}}
    }
  }
  child {
    node  {$4$}
  };
\end{tikzpicture}
\begin{tikzpicture}[thick,level/.style={sibling distance=40mm/#1}]
\node [vertex] {$z_{14}$}
  child {
    node  (a) {$1$}
  }
  child {
    node [vertex] {$z_{34}$}
    child {
      node [vertex] {$z_{23}$}
      child {node  {$2$}}
      child {node  {$3$}}    }
    child {
      node  {$4$}
      }
  };
\end{tikzpicture}
\end{center}
\begin{center}
Figure 4.
\end{center}
$\Delta_{z_{1234}}^{'(4)}(h_a)$ is depicted by the top right binary tree above.
\end{itemize}

\begin{defn}  \label{qualgd2} 
(The restricted $p$-rack algebra.) A $p$-rack algebra ${\cal Q}$ 
is called a restricted $p$-rack algebra if  for all $z_{i{,}j} \in Y$ there exits 
a binary operation $\bullet_{z_{ij}}: X\times X \to X,$  $(a,b) \mapsto a \bullet_{z_{ij}} b,$ such that,
$a\bullet_{z_{ij}},$ is
bijective and  $a\bullet_{z_{ji}}b = b \bullet_{z_{ij}}(b \triangleright_{z_{ij}} a),$ for all $a,b \in X,$ $z_{i,j}\in Y.$
\end{defn}
Recall Example \ref{bullet}, where the condition of Definition \ref{qualgd2} is satisfied.

\begin{thm} \label{basica1}  
Let ${\cal Q}$ 
be the restricted $p$-rack algebra  and 
${\cal R}^{z_{ij}}= \sum_{a\in X} h_a \otimes q^{z_{ij}}_a \in {\cal Q} \otimes {\cal Q}$ 
be a solution of the Yang-Baxter equation.
Moreover, assume that for all $z_{i,j,k} \in Y,$ 
$a,b\in X,$  $(b\triangleright_{z_{ij}} a_1) \bullet_{z_{jk}}(b\triangleright_{z_{ik}} a_2) = b \triangleright_{z_{ik}}(a_1 \bullet_{z_{jk}}a_{2}).$  We also define for $z_{i,j,k} \in Y,$ $\Delta_{z_{ij}}: {\cal Q} \to {\cal Q} \otimes {\cal Q},$ such that for all $a \in X,$ 
\[\Delta_{z_{jk}}((q^{z_{ik}}_a)^{\pm 1}) := (q^{z_{ij}}_a)^{\pm 1} \otimes (q^{z_{ik}}_a)^{\pm 1}, \quad \Delta_{z_{ij}}(h_a) := \sum_{b,c \in X} 
h_b \otimes h_c\Big |_{b\bullet_{z_{ij}} c = a}.\]
Then the following statements hold: 
\begin{enumerate}
\item $\Delta_{z_{ij}}$ is a ${\cal Q}$ algebra homomorphism for all $z_{i,j} \in Y.$
\item ${\cal R}^{z_{jk}} \Delta_{z_{jk}}(y) = \Delta^{(op)}_{z_{kj}}(y) {\cal R}^{z_{jk}},$ 
for all $z_{j,k} \in Y,$ $~y \in {\cal Q}.$ Recall $\Delta^{(op)}_{z_{ij}} := \pi \circ \Delta_{z_{ij}},$
where $\pi$ is the flip map.
\end{enumerate}
\end{thm}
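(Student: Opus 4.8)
\textbf{Proof plan for Theorem \ref{basica1}.}

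The plan is to prove both statements by direct computation using the defining relations (\ref{qualg}) of the restricted $p$-rack algebra, mirroring the structure of the proof of Proposition \ref{basic01} but now exploiting the additional binary operation $\bullet_{z_{ij}}$ and the compatibility hypothesis $(b\triangleright_{z_{ij}} a_1) \bullet_{z_{jk}}(b\triangleright_{z_{ik}} a_2) = b \triangleright_{z_{ik}}(a_1 \bullet_{z_{jk}}a_{2})$.

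\emph{Part (1).} To show $\Delta_{z_{ij}}$ is an algebra homomorphism, I would check it respects each defining relation of ${\cal Q}$. The relations involving only the $q$'s are handled exactly as in Proposition \ref{basic01}, since $\Delta_{z_{jk}}$ acts on the $q^{z_{ik}}_a$ identically to $\Delta'_{z_{jk}}$; in particular invertibility $\Delta_{z_{kl}}(q_a^{z_{jl}})\Delta_{z_{kl}}((q_a^{z_{jl}})^{-1}) = 1_{\cal Q}\otimes 1_{\cal Q}$ and the braiding-type relation $\Delta_{z_{kl}}(q_a^{z_{jl}})\Delta_{z_{kl}}(q_b^{z_{il}}) = \Delta_{z_{kl}}(q_b^{z_{il}})\Delta_{z_{kl}}(q_{b\triangleright_{z_{ij}}a}^{z_{jl}})$ follow verbatim. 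The genuinely new computation is the relation $q^{z_{ij}}_b h_{b\triangleright_{z_{ij}} a} = h_a q^{z_{ij}}_b$ applied after $\Delta_{z_{jk}}$: expanding $\Delta_{z_{jk}}(h_a) = \sum h_{a_1}\otimes h_{a_2}\big|_{a_1\bullet_{z_{jk}} a_2 = a}$ and commuting past $\Delta_{z_{jk}}(q_b^{z_{ik}}) = q_b^{z_{ij}}\otimes q_b^{z_{ik}}$ using (\ref{qualg}) componentwise moves the summation index to $(b\triangleright_{z_{ij}} a_1,\, b\triangleright_{z_{ik}} a_2)$, and precisely the hypothesis $(b\triangleright_{z_{ij}} a_1) \bullet_{z_{jk}}(b\triangleright_{z_{ik}} a_2) = b \triangleright_{z_{ik}}(a_1 \bullet_{z_{jk}}a_{2}) = b\triangleright_{z_{ik}} a$ reindexes the constraint correctly, yielding $\Delta_{z_{jk}}(q_b^{z_{ik}})\Delta_{z_{jk}}(h_{b\triangleright_{z_{ik}}a})$. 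Finally $\Delta_{z_{jk}}(h_a)\Delta_{z_{jk}}(h_b) = \delta_{a,b}\Delta_{z_{jk}}(h_a)$ follows from $h_{a_1}h_{b_1} = \delta_{a_1,b_1}h_{a_1}$ together with bijectivity of $a_1\bullet_{z_{jk}}$, so the two constraints $a_1\bullet_{z_{jk}} a_2 = a$ and $a_1\bullet_{z_{jk}} a_2 = b$ are simultaneously satisfiable only if $a=b$.

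\emph{Part (2).} For the intertwining relation it suffices, since both sides are algebra homomorphisms (by Part (1) and the fact that $\pi$ is an anti-homomorphism composed appropriately, or by checking directly), to verify it on the generators $h_a$ and $q_a^{z_{ik}}$. On $q_a^{z_{ik}}$ the identity is exactly statement (2) of Proposition \ref{basic01}, since $\Delta_{z_{jk}}$ and $\Delta'_{z_{jk}}$ agree on $q$'s, and one notes $\Delta^{(op)}_{z_{kj}}(q_b^{z_{ij}}) = \Delta_{z_{jk}}(q_b^{z_{ij}})$ as before. On $h_a$ one computes ${\cal R}^{z_{jk}}\Delta_{z_{jk}}(h_a) = (\sum_c h_c\otimes q_c^{z_{jk}})(\sum h_{a_1}\otimes h_{a_2}\big|_{a_1\bullet_{z_{jk}} a_2=a})$; using $h_c h_{a_1} = \delta_{c,a_1}h_c$ and $q_c^{z_{jk}} h_{c\triangleright_{z_{jk}}a_2} = h_{a_2} q_c^{z_{jk}}$ — i.e. $q_{a_1}^{z_{jk}} h_{a_2} = h_{a_1\triangleright_{z_{jk}}^{-1} a_2}$ wait, rather rewriting via (\ref{qualg}) as $q_{a_1}^{z_{jk}} h_{a_1\triangleright_{z_{jk}} a_2'} = h_{a_2'} q_{a_1}^{z_{jk}}$ and reparametrizing — one arrives at a sum of the form $\sum h_{a_1}\otimes h_{\bar a_2}\, (1\otimes q_{a_1}^{z_{jk}})$ with a transformed constraint, which matches $\Delta^{(op)}_{z_{kj}}(h_a){\cal R}^{z_{jk}} = (\sum h_{a_2}\otimes h_{a_1}\big|_{a_1\bullet_{z_{jk}} a_2 = a})(\sum_c h_c\otimes q_c^{z_{jk}})$ after the analogous manipulation on that side; the bijectivity of $a\bullet_{z_{jk}}$ and of $a\triangleright_{z_{jk}}$ let one match the index sets.

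\emph{Main obstacle.} The hardest part will be bookkeeping the reindexing of the double summation constraint $\{(a_1,a_2): a_1\bullet_{z_{jk}} a_2 = a\}$ under the simultaneous action of $b\triangleright_{z_{ij}}$ on the first factor and $b\triangleright_{z_{ik}}$ on the second — verifying that the new compatibility hypothesis is exactly what is needed to close the homomorphism check for the mixed $q$--$h$ relation, and confirming that all the relevant maps ($a\bullet_{z_{jk}}$, $a\triangleright_{z_{ij}}$) are bijective so the substitutions are legitimate. Once that single reindexing lemma is in place, everything else reduces to the already-established computations in Proposition \ref{basic01}.
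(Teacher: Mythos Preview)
Your treatment of Part (1) is correct and essentially identical to the paper's: the $q$--$q$ relations are lifted verbatim from Proposition \ref{basic01}, and the mixed $h$--$q$ relation is closed exactly by the hypothesis $(b\triangleright_{z_{ij}} a_1)\bullet_{z_{jk}}(b\triangleright_{z_{ik}} a_2)=b\triangleright_{z_{ik}}(a_1\bullet_{z_{jk}}a_2)$.

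There is, however, a genuine gap in your Part (2) for $y=h_a$. First, a parameter slip: $\Delta^{(op)}_{z_{kj}}(h_a)=\pi\big(\sum h_{a_1}\otimes h_{a_2}\big|_{a_1\bullet_{z_{kj}} a_2=a}\big)$, so the constraint involves $\bullet_{z_{kj}}$, \emph{not} $\bullet_{z_{jk}}$ as you wrote. More importantly, once you push $q_{a_1}^{z_{jk}}$ past $h_{a_2}$ on the left-hand side you get, after the substitution $a_2=a_1\triangleright_{z_{jk}}a_2'$, the constraint $a_1\bullet_{z_{jk}}(a_1\triangleright_{z_{jk}}a_2')=a$. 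To match this with the right-hand constraint $a_2'\bullet_{z_{kj}}a_1=a$ you need precisely the identity
\[
a_2'\bullet_{z_{kj}}a_1 \;=\; a_1\bullet_{z_{jk}}(a_1\triangleright_{z_{jk}}a_2'),
\]
which is the defining relation of the \emph{restricted} $p$-rack algebra (Definition \ref{qualgd2}). Mere bijectivity of $\bullet$ and $\triangleright$ does not give you this; it is an additional structural identity, distinct from the compatibility hypothesis you singled out in your ``Main obstacle'' paragraph. The paper's proof invokes exactly this identity at the final step. So your outline will go through once you recognise that Part (2) on $h_a$ hinges on the restricted-$p$-rack relation $a\bullet_{z_{ji}}b=b\bullet_{z_{ij}}(b\triangleright_{z_{ij}}a)$, not on the distributivity-type hypothesis (which is only used in Part (1)).
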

\begin{proof}
For the proof we use the definition of the restricted the $p$-rack algebra.
\begin{enumerate}
\item In order to show that  $\Delta_{z_{ij}}$ is an algebra homomorphism it 
suffices to show the following statements for all $z_{i,j,k,l} \in Y.$  
We note that $\Delta_{z_{kl}}(q_a^{z_{jl}})\Delta_{z_{kl}}((q_a^{z_{jl}})^{-1}) = 1_{\cal Q} \otimes 1_{\cal Q}$
and $\Delta_{z_{kl}}(q_a^{z_{jl}})\Delta_{z_{kl}}(q_b^{z_{il}})  =
    \Delta_{z_{kl}}(q_b^{z_{il}})\Delta_{z_{kl}}(q_{b \triangleright_{z_{ij}}a}^{z_{jl}})$ are shown as in the first part of the proof of Proposition \ref{basic01}.
Via the algebraic relations (\ref{qualg}) we obtain
\begin{eqnarray}
&& \Delta_{z_{jk}}(h_a) \Delta_{z_{jk}}(q_b^{z_{ik}}) = (h_{a_1} \otimes h_{a_2}\Big|_{a_1 \bullet_{z_{jk}} a_2 =a })(q_{b}^{z_{ij}} \otimes q_b^{z_{ik}}) = \nonumber\\
&& (q_{b}^{z_{ij}} \otimes q_b^{z_{ik}}) (h_{b\triangleright_{z_{ij}} a_1} \otimes h_{b\triangleright_{z_{ik}} a_2}\Big|_{a_1 \bullet_{z_{jk}} a_2 =a}) = \Delta_{z_{jk}}(q_b^{z_{ik}}) 
\Delta_{z_{jk}}(h_{b\triangleright_{z_{ik}}a}), \nonumber
\end{eqnarray}
where we have used $(b\triangleright_{z_{ij}} a_1) \bullet_{z_{jk}}(b\triangleright_{z_{ik}} a_2) = b \triangleright_{z_{ik}} (a_1 \bullet_{z_{jk}} a_{2}).$ Similarly, it is straightforward to show via (\ref{qualg}) that
$\Delta_{z_{jk}}(h_a)\Delta_{z_{jk}}(h_b) =\delta_{a,b} \Delta_{z_{jk}}(h_a).$ And this concludes the  proof of the first part.
\item  We recall the proof of the second part of the proof of Proposition \ref{basic01}. Also,
    \begin{eqnarray}
   && \Delta_{z_{ji}}^{(op)}(h_a){\cal R}^{z_{ij}} = ( \sum_{a_1, a_2 \in X }h_{a_2} \otimes h_{a_1}\Big |_{a_{1} \bullet_{z_{ji}} a_2=a }) (\sum_{b\in X} h_b \otimes q_{b}^{z_{ij}}) = \nonumber\\
   && (\sum_{b\in X} h_b \otimes q_{b}^{z_{ij}}) ( \sum_{a_1, a_2 \in X } h_{a_2} \otimes h _{a_2 \triangleright_{z_{ij} a_1}}\Big |_{a_{2}\bullet_{z_{ij}} (a_2 \triangleright_{z_{ij}}a_1) =a})    =  {\cal R}^{z_{ij}}\Delta_{z_{ij}}(h_a), \end{eqnarray}
   where we have used $a_1 \bullet_{z_{ji}} a_2 = a_{2}\bullet_{z_{ij}} (a_2 \triangleright_{z_{ij}}a_1).$
   \hfil \qedhere
\end{enumerate}
\end{proof}


\begin{pro} \label{basicb2} 
(Parametric (co)-associativity.) 
Let ${\cal Q}$ 
be the restricted $p$-rack algebra, assume also that for all 
$a, b,c\in X$ and $z_{i, j,k} \in Y,$  $(b\triangleright_{z_{ij}} a) 
\bullet_{z_{jk}}(b\triangleright_{z_{ik}} c) = 
b\triangleright_{z_{i k}}(a \bullet_{z_{jk}} c)$ and  $(a \bullet_{z_{ij}} b) 
\bullet_{z_{j k}} c = a \bullet_{z_{i k}}(b \bullet_{z_{jk}} c).$ 

We also define for $z_{i,1,2, \ldots, n} \in Y,$ 
$\Delta^{(n)}_{z_{1 2 \ldots n}}: {\cal Q}  
\to {\cal Q}^{\otimes n},$ such that
\begin{eqnarray}
    && \Delta^{(n)}_{z_{12\ldots n}}((q_a^{z_{in}})^{\pm 1})= 
    (q^{z_{i1}}_a)^{\pm 1} \otimes (q^{z_{i2}}_a)^{\pm 1}\otimes \ldots \otimes (q_a^{z_{in}})^{\pm 1}, \label{conb} \\ 
    && \Delta^{(n)}_{z_{12 \ldots n}}(h_a) := 
    \sum_{a_1, \ldots,  a_n \in X} 
h_{a_1} \otimes h_{a_2} \otimes \ldots \otimes h_{a_n}\Big |_{\Pi_{z_{1\ldots n}}(a_1, a_2, \ldots, a_n)= a}, \label{con2}
\end{eqnarray}
where for all $a_1, a_2, \ldots, a_n \in X,$ $z_1, \ldots, z_n \in Y:$ 
\begin{eqnarray}
\Pi_{z_{12}}(a_1, a_2): &=& a_1 \bullet_{z_{12}} a_2 \label{prod} \\
\Pi_{z_{12\ldots n}}(a_1, a_2, \ldots, a_n): &=& a_1 \bullet_{z_{1 n}}(a_2 \bullet_{z_{2 n}} (a_3 \ldots \bullet_{z_{n-2  n}}( a_{n-1} \bullet_{z_{n-1 n}} a_n )\ldots )) \nonumber\\
&=& ((\ldots ((a_1 \bullet_{z_{12}} a_2 ) \bullet_{z_{ 2 3}} a_3) \ldots )\bullet_{z_{n-2 n-1}} a_{n-1}) \bullet_{z_{ n-1 n}} a_n, ~~~n>2. \nonumber
\end{eqnarray}
Then: 
\begin{enumerate}
\item For all $a \in X,$ $z_{i, 1,2, \ldots n} \in Y,$ the ``parametric'' coassociatvity holds, $y \in \{q_a^{z_{in}}, h_a\}$:
\begin{eqnarray} 
&&\Delta_{z_{12...n}}^{(n)}(y)  := (\Delta^{(n-1)}_{z_{12...n-1}} \otimes \id)\Delta_{z_{n-1n}}(y) = (\id \otimes \Delta^{(n-1)}_{z_{23...n}}) \Delta_{z_{1n}}(y). \label{b2}
\end{eqnarray}

\item For all $a, b \in X,$ $z_{i, 1,2, \ldots n} \in Y,$ 
$\Delta^{(n)}_{z_{12\ldots n}}$ is an algebra homomorphism.
\end{enumerate}
\end{pro}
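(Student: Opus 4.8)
The plan is to establish the two claims by induction on $n$, mirroring the structure already used for the unrestricted coproduct $\Delta'$ in the preceding lemma, but now keeping careful track of the auxiliary binary operations $\bullet_{z_{ij}}$ and the two extra compatibility hypotheses. For part (1), the base case $n=2$ reduces to the definitions (\ref{conb})--(\ref{con2}) with $\Pi_{z_{12}}(a_1,a_2)=a_1\bullet_{z_{12}}a_2$, and $n=3$ is exactly the two displayed forms of $\Delta^{(3)}_{z_{123}}$, which on the $h$-generators amount to the identity $(a_1\bullet_{z_{12}}a_2)\bullet_{z_{23}}a_3 = a_1\bullet_{z_{13}}(a_2\bullet_{z_{23}}a_3)$; this is precisely the \emph{parametric associativity} of $\bullet$ assumed in the hypothesis. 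On the $q$-generators coassociativity is immediate since $\Delta_{z_{jk}}((q^{z_{ik}}_a)^{\pm1})$ just appends a tensor factor $(q^{z_{ij}}_a)^{\pm1}$ and nesting these operations clearly gives the same string $(q^{z_{i1}}_a)^{\pm1}\otimes\cdots\otimes(q^{z_{in}}_a)^{\pm1}$ regardless of the order of application. For the inductive step I would expand $(\Delta^{(n-1)}_{z_{12\ldots n-1}}\otimes\id)\Delta_{z_{n-1n}}(h_a)$ and $(\id\otimes\Delta^{(n-1)}_{z_{23\ldots n}})\Delta_{z_{1n}}(h_a)$ using the induction hypothesis, collecting in each case the constraint $\Pi_{z_{12\ldots n}}(a_1,\ldots,a_n)=a$; the equality of the two constraints follows by repeatedly applying parametric associativity of $\bullet$ to shift the parenthesization, exactly as the non-parametric associativity of $+$ or $\circ$ is used in the classical case.

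For part (2), that $\Delta^{(n)}_{z_{12\ldots n}}$ is an algebra homomorphism, I would again induct on $n$, using part (1) to write $\Delta^{(n)}_{z_{12\ldots n}} = (\id\otimes\Delta^{(n-1)}_{z_{23\ldots n}})\circ\Delta_{z_{1n}}$ as a composite of two algebra homomorphisms: $\Delta_{z_{1n}}$ is a homomorphism by Theorem \ref{basica1}(1), $\Delta^{(n-1)}_{z_{23\ldots n}}$ is one by the induction hypothesis, hence so is $\id\otimes\Delta^{(n-1)}_{z_{23\ldots n}}$, and the composition of homomorphisms is a homomorphism. The base case $n=2$ is Theorem \ref{basica1}(1) itself. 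The only subtlety is to check that the explicit formulas (\ref{conb})--(\ref{con2}) genuinely coincide with this recursive composite on both families of generators $q^{z_{in}}_a$ and $h_a$ — but that is exactly what part (1) provides, so part (2) follows formally once part (1) is in place.

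The main obstacle will be the bookkeeping in the inductive step of part (1) for the $h$-generators: one must verify that the nested constraint obtained from the left-bracketed recursion $(\Delta^{(n-1)}\otimes\id)\Delta_{z_{n-1n}}$ — namely $\Pi_{z_{1\ldots n-1}}(a_1,\ldots,a_{n-1})\bullet_{z_{n-1n}}a_n = a$ — matches the one from the right-bracketed recursion $(\id\otimes\Delta^{(n-1)})\Delta_{z_{1n}}$ — namely $a_1\bullet_{z_{1n}}\Pi_{z_{2\ldots n}}(a_2,\ldots,a_n)=a$. Unwinding both sides via the two equivalent expressions for $\Pi$ in (\ref{prod}) reduces this to finitely many applications of $(a\bullet_{z_{ij}}b)\bullet_{z_{jk}}c = a\bullet_{z_{ik}}(b\bullet_{z_{jk}}c)$; the care needed is purely in verifying that the subscripts on the $\bullet$'s line up correctly at each reassociation step, which is a combinatorial rather than conceptual difficulty. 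The other compatibility hypothesis $(b\triangleright_{z_{ij}}a)\bullet_{z_{jk}}(b\triangleright_{z_{ik}}c)=b\triangleright_{z_{ik}}(a\bullet_{z_{jk}}c)$ is not needed for coassociativity itself but is what guarantees, inside part (2), that the constraint defining $\Delta^{(n)}(h_a)$ transforms correctly when commuted past the $q$-generators, exactly as in the displayed computation in the proof of Theorem \ref{basica1}(1).
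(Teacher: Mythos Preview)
Your proposal is correct and follows essentially the same approach as the paper. The paper's proof is very terse --- for part (1) it simply records the key recursion $\Pi_{z_{12\ldots n}}(a_1,\ldots,a_n)=a_1\bullet_{z_{1n}}\Pi_{z_{2\ldots n}}(a_2,\ldots,a_n)=\Pi_{z_{12\ldots n-1}}(a_1,\ldots,a_{n-1})\bullet_{z_{n-1n}}a_n$ (which is exactly your observation that parametric associativity of $\bullet$ aligns the left- and right-bracketed constraints), and for part (2) it says to generalize the verification in Theorem~\ref{basica1}(1) directly. Your composite-of-homomorphisms argument for part (2) is a mild repackaging of that same check and is arguably cleaner; you are also right that the distributivity hypothesis $(b\triangleright_{z_{ij}}a)\bullet_{z_{jk}}(b\triangleright_{z_{ik}}c)=b\triangleright_{z_{ik}}(a\bullet_{z_{jk}}c)$ enters only in part (2), via the invocation of Theorem~\ref{basica1}(1).
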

\begin{proof}
The proof of both parts is straightforward.
\begin{enumerate}
\item  The $n$-coassociativity (\ref{b2}) for the coproducts of $q_a^{z_{ij}}$ for all $a\in X,$ $z_{i,j} \in Y$ is 
immediately shown by direct computation.
Also, using (\ref{prod}) we conclude for 
$n>2:$ \[\Pi_{z_{12\ldots n}}(a_1, a_2, \ldots, a_n) = a_1 \bullet_{z_{1 n}} \Pi_{z_{2\ldots n}}(a_2, \ldots, a_n) = \Pi_{z_{12\ldots n-1}}(a_1, a_2, \ldots, a_{n-1}) \bullet_{z_{n-1n}} a_n.\] 
Then (\ref{b2}) immediately follows for the $n$-coproduct of $h_a,$ $a \in X$.

\item Indeed, the $p$-rack algebra relations hold for the $n$-coproducts 
of the generators (see also Theorem \ref{basica1}). The algebra relations are shown 
as a generalization of the proof of part 1 of 
Theorem \ref{basica1} bearing also in mind expressions (\ref{con}), (\ref{con2}).
\hfill \qedhere
\end{enumerate}
\end{proof}

The $n$-coproducts as defined in Proposition \ref{basicb2} are naturally depicted by binary trees (see Figures 1-4, pages 17-18).
We provide in the following corollary a concrete statement of quantum integrability 
as we identify an explicit set of mutually commuting non-local quantities.

\begin{cor} \label{corc1} (Commuting non-local quantities.) 
We define for all $z_{i, k_1, \ldots, k_n} \in Y,$ 
\[{\mathfrak t}^{z_{ik_1 \ldots k_n}} := \sum_{a \in X} \Delta_{z_{k_1 k_2 \ldots k_n}}^{(n)}(q_a^{z_{ik_n}}), \]
then $~{\mathfrak t}^{z_{jk_1 \ldots k_n}} {\mathfrak t}^{z_{ik_1 \ldots k_n}} =
{\mathfrak t}^{z_{ik_1 \ldots k_n}}{\mathfrak t}^{z_{jk_1 \ldots k_n}},$ 
for all $z_{i,j, k_1, \ldots, k_n} \in Y.$
\end{cor}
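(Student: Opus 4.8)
The plan is to recognize the non-local operator $\mathfrak{t}^{z_{ik_1\ldots k_n}}$ as the image, under the iterated coproduct $\Delta^{(n)}_{z_{k_1 k_2\ldots k_n}}$, of the ``local'' commuting operator $t^{z_{ik_n}}=\sum_{a\in X}q_a^{z_{ik_n}}$ of \cref{lemmac2}, and then to transport the commutativity statement of \cref{lemmac2} through $\Delta^{(n)}_{z_{k_1 k_2\ldots k_n}}$ using that the latter is an algebra homomorphism.

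First I would note that, directly from the definition \eqref{conb} of $\Delta^{(n)}_{z_{k_1 k_2\ldots k_n}}$ on the generators $q_a^{z_{ik_n}}$ together with linearity,
\[
\mathfrak{t}^{z_{ik_1\ldots k_n}}=\sum_{a\in X}\Delta^{(n)}_{z_{k_1 k_2\ldots k_n}}\big(q_a^{z_{ik_n}}\big)=\Delta^{(n)}_{z_{k_1 k_2\ldots k_n}}\big(t^{z_{ik_n}}\big),
\]
and likewise with $i$ replaced by $j$. Since $\Delta^{(n)}_{z_{k_1 k_2\ldots k_n}}$ is an algebra homomorphism by part~2 of \cref{basicb2}, this gives
\begin{align*}
\mathfrak{t}^{z_{jk_1\ldots k_n}}\,\mathfrak{t}^{z_{ik_1\ldots k_n}}
&=\Delta^{(n)}_{z_{k_1 k_2\ldots k_n}}\big(t^{z_{jk_n}}\big)\,\Delta^{(n)}_{z_{k_1 k_2\ldots k_n}}\big(t^{z_{ik_n}}\big)
=\Delta^{(n)}_{z_{k_1 k_2\ldots k_n}}\big(t^{z_{jk_n}}t^{z_{ik_n}}\big)\\
&=\Delta^{(n)}_{z_{k_1 k_2\ldots k_n}}\big(t^{z_{ik_n}}t^{z_{jk_n}}\big)=\mathfrak{t}^{z_{ik_1\ldots k_n}}\,\mathfrak{t}^{z_{jk_1\ldots k_n}},
\end{align*}
where the middle equality is precisely \cref{lemmac2}.

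I do not expect a serious obstacle; the only point requiring care is the bookkeeping of the index data — one must check that $t^{z_{ik_n}}$, whose column label is exactly $k_n$, lies in the range on which \eqref{conb} prescribes the action of $\Delta^{(n)}_{z_{k_1\ldots k_n}}$, which it does. If one would rather avoid invoking the homomorphism property of $\Delta^{(n)}$ (for instance to sidestep subtleties with infinite sums), the same conclusion follows from a self-contained computation: expand
\[
\mathfrak{t}^{z_{jk_1\ldots k_n}}\,\mathfrak{t}^{z_{ik_1\ldots k_n}}=\sum_{a,b\in X}\big(q_a^{z_{jk_1}}q_b^{z_{ik_1}}\big)\otimes\cdots\otimes\big(q_a^{z_{jk_n}}q_b^{z_{ik_n}}\big),
\]
apply the relation $q_a^{z_{jk}}q_b^{z_{ik}}=q_b^{z_{ik}}q^{z_{jk}}_{b\triangleright_{z_{ij}}a}$ of \eqref{qualg} in every tensor slot simultaneously (the operation $\triangleright_{z_{ij}}$ is the same in all slots, since only the column labels $k_1,\ldots,k_n$ vary), and perform the single change of summation variable $a\mapsto b\triangleright_{z_{ij}}a$, legitimate since $b\triangleright_{z_{ij}}$ is a bijection of $X$ — $(X,\triangleright_{z_{ij}})$ being a $p$-rack. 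This reproduces $\mathfrak{t}^{z_{ik_1\ldots k_n}}\mathfrak{t}^{z_{jk_1\ldots k_n}}$, and is the $n$-fold tensor analogue of the proof of \cref{lemmac2}.
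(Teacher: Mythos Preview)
Your proposal is correct and matches the paper's own argument, which simply states that the corollary follows from \cref{lemmac2} together with the explicit form of the coproduct $\Delta_{z_{k_1\ldots k_n}}^{(n)}(q_a^{z_{ik_n}})$. Your second (direct) approach is precisely the unpacking of that sentence, and your first approach via the homomorphism property of $\Delta^{(n)}$ is the same content expressed more conceptually.
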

\begin{proof}
This is a consequence of Lemma \ref{lemmac2} and the form of the coproduct 
$\Delta_{z_{k_1 k_2 \ldots k_n}}^{(n)}(q_a^{z_{ik_n}}).$
\end{proof}

\begin{lemma} \label{quasit}
    Let ${\cal R}^{z_{ij}}$ be a solution of the parametric Yang-Baxter equation and define:
    \begin{eqnarray}
    T^{z_{12\ldots  n+1}}_{1,23\ldots n+1}:= {\cal R}_{1n+1}^{z_{1n+1}}  {\cal R}_{1n}^{z_{1n}}\ldots  {\cal R}^{z_{12}}_{12}, \qquad  T_{12\ldots n,n+1}^{z_{12\ldots n+1}}:= {\cal R}_{1n+1}^{z_{1n+1}}{\cal R}_{2n+1}^{z_{2n+1}} \ldots {\cal R}_{nn+1}^{z_{nn+1}}. \label{TT}   \end{eqnarray}
Let also ${\cal Q}$ be the restricted $p$-rack algebra, 
${\cal R}^{z_{ij}} = \sum_{a \in X} h_a \otimes q_{a}^{z_{ij}} \in {\cal Q} \otimes  {\cal Q}$ and 
for all $z_{i, j,k} \in Y,$ $a, b, c \in X,$ 
$(b\triangleright_{z_{ij}} a) \bullet_{z_{jk}}(b\triangleright_{z_{ik}} c) = b\triangleright_{z_{i k}}(a \bullet_{z_{jk}}c),$ $q^{z_{jk}}_{a}q^{z_{ik}}_{b} = q^{z_{ik}}_{a\bullet_{z_{ji}} b}$ and $(a \bullet_{z_{ij}} b) \bullet_{z_{j k}} c = a \bullet_{z_{i  k}}(b \bullet_{z_{jk}} c),$  then
\begin{eqnarray}
&& T_{1,2\ldots n+1}^{z_{12 \ldots n+1}} =\sum_{a\in X} h_a \otimes \Delta^{(n)}_{z_{23\ldots n+1}}(q_a^{z_{1n+1}}) = (\id \otimes \Delta_{z_{23\ldots n+1}}^{(n)}){\cal R}^{z_{1n+1}} \label{coco}\\ 
&& T_{12\ldots n,n+1}^{z_{12\ldots n+1}} =\sum_{a \in X} \Delta_{z_{12\ldots n}}^{(n)}(h_a) \otimes q^{z_{n n+1}}_{a} = (\Delta^{(n)}_{z_{12\ldots n}} \otimes \id){\cal R}^{z_{nn+1}}. \label{coco2}
\end{eqnarray}
\end{lemma}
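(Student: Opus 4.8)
The plan is to establish the two identities (\ref{coco}) and (\ref{coco2}) by a direct expansion of the products of ${\cal R}$-matrices into the generators $h_a$ and $q_a^{z_{ij}}$, followed by recognising the outcome as the image of ${\cal R}^{z_{ij}} = \sum_{a\in X} h_a \otimes q_a^{z_{ij}}$ under $\id \otimes \Delta^{(n)}_{z_{23\ldots n+1}}$, respectively $\Delta^{(n)}_{z_{12\ldots n}} \otimes \id$. In both cases the argument is bookkeeping of tensor legs together with one algebraic reduction on the distinguished leg.

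First I would treat (\ref{coco}). In the product $T_{1,23\ldots n+1}^{z_{12\ldots n+1}} = {\cal R}_{1n+1}^{z_{1n+1}}{\cal R}_{1n}^{z_{1n}}\cdots{\cal R}_{12}^{z_{12}}$ the $q$-factors sit in the pairwise distinct legs $2,\ldots,n+1$ and so commute, while leg $1$ carries the product $h_{a_{n+1}}h_{a_n}\cdots h_{a_2}$. By the relation $h_ah_b=\delta_{a,b}h_a$ (the orthogonality in (\ref{qualg}), normalised as in Lemma \ref{lemmac}) this forces $a_2=\cdots=a_{n+1}=:a$, whence $T_{1,23\ldots n+1}^{z_{12\ldots n+1}} = \sum_{a\in X} h_a \otimes q_a^{z_{12}}\otimes q_a^{z_{13}}\otimes\cdots\otimes q_a^{z_{1n+1}}$. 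By (\ref{conb}) the last $n$ legs are exactly $\Delta^{(n)}_{z_{23\ldots n+1}}(q_a^{z_{1n+1}})$, and applying $\id\otimes\Delta^{(n)}_{z_{23\ldots n+1}}$ to ${\cal R}^{z_{1n+1}} = \sum_a h_a\otimes q_a^{z_{1n+1}}$ gives the same thing, proving (\ref{coco}). This part uses only the idempotent orthogonality of the $h_a$.

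For (\ref{coco2}) I would expand $T_{12\ldots n,n+1}^{z_{12\ldots n+1}} = {\cal R}_{1n+1}^{z_{1n+1}}{\cal R}_{2n+1}^{z_{2n+1}}\cdots{\cal R}_{nn+1}^{z_{nn+1}}$: leg $k$ (for $1\le k\le n$) carries the single factor $h_{a_k}$ (distinct legs, no collapse), whereas leg $n+1$ carries the ordered product $q_{a_1}^{z_{1n+1}}q_{a_2}^{z_{2n+1}}\cdots q_{a_n}^{z_{nn+1}}$. I then use the assumed relation $q_a^{z_{jk}}q_b^{z_{ik}}=q_{a\bullet_{z_{ji}}b}^{z_{ik}}$ iteratively from the left: $q_{a_1}^{z_{1n+1}}q_{a_2}^{z_{2n+1}} = q_{a_1\bullet_{z_{12}}a_2}^{z_{2n+1}}$, then multiply on the right by $q_{a_3}^{z_{3n+1}}$ to get $q_{(a_1\bullet_{z_{12}}a_2)\bullet_{z_{23}}a_3}^{z_{3n+1}}$, and so on; by induction and the left-nested form of $\Pi_{z_{1\ldots n}}$ in (\ref{prod}) this product equals $q_{\Pi_{z_{1\ldots n}}(a_1,\ldots,a_n)}^{z_{nn+1}}$. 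Hence $T_{12\ldots n,n+1}^{z_{12\ldots n+1}} = \sum_{a_1,\ldots,a_n}h_{a_1}\otimes\cdots\otimes h_{a_n}\otimes q_{\Pi_{z_{1\ldots n}}(a_1,\ldots,a_n)}^{z_{nn+1}}$, and regrouping the sum by the value $a=\Pi_{z_{1\ldots n}}(a_1,\ldots,a_n)$ turns this, via (\ref{con2}), into $\sum_{a\in X}\Delta^{(n)}_{z_{12\ldots n}}(h_a)\otimes q_a^{z_{nn+1}} = (\Delta^{(n)}_{z_{12\ldots n}}\otimes\id){\cal R}^{z_{nn+1}}$.

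The computation is routine; the only point needing care --- and the nearest thing to an obstacle --- is checking that the iterated application of $q_a^{z_{jk}}q_b^{z_{ik}}=q_{a\bullet_{z_{ji}}b}^{z_{ik}}$ reproduces precisely the parenthesisation defining $\Pi_{z_{1\ldots n}}$, which is where the $\bullet$-associativity $(a\bullet_{z_{ij}}b)\bullet_{z_{jk}}c=a\bullet_{z_{ik}}(b\bullet_{z_{jk}}c)$ assumed in the statement enters (it is also what makes $\Pi$, and hence $\Delta^{(n)}$ in (\ref{con2}), well defined, so that the final identifications with the $n$-coproducts are legitimate).
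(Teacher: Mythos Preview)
Your proof is correct and follows essentially the same route as the paper: the paper's own argument simply declares the computation ``immediate'' and writes down the two expanded forms $\sum_{a} h_a\otimes q_a^{z_{12}}\otimes\cdots\otimes q_a^{z_{1\,n+1}}$ and $\sum_a\sum_{a_1,\ldots,a_n} h_{a_1}\otimes\cdots\otimes h_{a_n}\big|_{a=\Pi_{z_{1\ldots n}}(a_1,\ldots,a_n)}\otimes q_a^{z_{n\,n+1}}$, which is exactly what your expansion and the iterated use of $q_a^{z_{jk}}q_b^{z_{ik}}=q_{a\bullet_{z_{ji}}b}^{z_{ik}}$ produce. Your additional remark that the left-nested collapse matches the parenthesisation in (\ref{prod}) (and that the $\bullet$-associativity is what makes $\Pi$ well defined) is a useful clarification the paper leaves implicit.
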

\begin{proof}
The proof is immediate:
\begin{eqnarray}
&& T_{1,2\ldots n+1}^{z_{12 \ldots n+1}} = 
\sum_{a\in X} h_a \otimes q_a^{z_{12}} \otimes q_a^{z_{13}} \otimes \ldots \otimes q_a^{z_{1n+1}},\nonumber\\ 
&& T_{12\ldots n,n+1}^{z_{12\ldots n+1}} = \sum_{a \in X} \sum_{a_1, \ldots, a_n \in X}  h_{a_1} \otimes \ldots \otimes h_{a_n}\Big|_{a:= \Pi_{z_{12\ldots n}}(a_1, a_2, \ldots, n)}\otimes q^{z_{ n n+1}}_{a}. \nonumber
\end{eqnarray}    
Recalling the definitions of the $n$-coproducts in Proposition \ref{basicb2}, we arrive at (\ref{coco}), (\ref{coco2})\end{proof}

Lemma \ref{quasit} with Theorem \ref{basica1} and Proposition \ref{basicb2} provide a structure that generalizes in a sense the notion of a quasi-triangular (quasi)-bialgebra \cite{Drinfel'd, Drinfel'd2} to the parametric frame. In the parameter free case the structure formulated in Lemma \ref{quasit}, Theorem \ref{basica1} and Proposition \ref{basicb2} corresponds indeed to a quasi-triangular Hopf algebra if $(X, \bullet)$ is group \cite{DoRySt}. Recall that quantities (\ref{TT}) are tensor realizations of the algebras defined by (\ref{a}) via the FRT construction \cite{FRT}.

\begin{rem} Theorem  \ref{basica1} and Proposition \ref{basicb2} can be generalized as follows. 
Let ${\cal Q}$ 
be the restricted $p$-rack algebra, assume also that for all 
$a, b,c\in X$ and $z_{i, j,k} \in Y,$ there exist $z_{\hat o}, z_{o} \in Y,$ such that
$(b\triangleright_{z_{ij}} a) 
\bullet_{z_{jk}}(b\triangleright_{z_{ik}} c) = 
b\triangleright_{z_{i \hat o}}(a \bullet_{z_{jk}} c)$ and  $(a \bullet_{z_{ij}} b) 
\bullet_{z_{ok}} c = a \bullet_{z_{i o}}(b \bullet_{z_{jk}} c).$ 

We also define for $z_{i,1,2, \ldots, n} \in Y,$ 
$\Delta^{(n)}_{z_{1 2 \ldots n}}: {\cal Q}  
\to {\cal Q}^{\otimes n},$ such that
\begin{eqnarray}
    && \Delta^{(n)}_{z_{12\ldots n}}((q_a^{z_{in}})^{\pm 1})= 
    (q^{z_{i1}}_a)^{\pm 1} \otimes (q^{z_{i2}}_a)^{\pm 1}\otimes \ldots (\otimes q_a^{z_{in}})^{\pm 1}, \label{conb2} \\ 
    && \Delta^{(n)}_{z_{12 \ldots n}}(h_a) := 
    \sum_{a_1, \ldots,  a_n \in X} 
h_{a_1} \otimes h_{a_2} \otimes \ldots \otimes h_{a_n}\Big |_{\Pi_{z_{1\ldots n}}(a_1, a_2, \ldots, a_n)= a}, \label{con22}
\end{eqnarray}
where for all $a_1, a_2, \ldots, a_n \in X,$ $z_1, \ldots, z_n \in Y:$ 
\begin{eqnarray}
\Pi_{z_{12}}(a_1, a_2): &=& a_1 \bullet_{z_{12}} a_2 \label{prod2} \\
\Pi_{z_{12\ldots n}}(a_1, a_2, \ldots, a_n): &=& a_1 \bullet_{z_{1 o}}(a_2 \bullet_{z_{2 o}} (a_3 \ldots \bullet_{z_{n-2  o}}( a_{n-1} \bullet_{z_{n-1 n}} a_n )\ldots )) \nonumber\\
&=& ((\ldots ((a_1 \bullet_{z_{12}} a_2 ) \bullet_{z_{ o 3}} a_3) \ldots {)}\bullet_{z_{o n-1}} a_{n-1}) \bullet_{z_{ o n}} a_n, ~~~n>2. \nonumber
\end{eqnarray}
Then, it is shown by direct computation:
\begin{enumerate}
\item For all $a \in X,$ $z_{i, 1,2, \ldots n} \in Y,$ a parametric coassociativity holds as
\begin{eqnarray} 
&&\Delta_{z_{12...n}}^{(n)}((q_a^{z_{in}})^{\pm 1}) := (\Delta^{(n-1)}_{z_{12...n-1}} \otimes \id)\Delta_{z_{n-1n}}((q_a^{z_{in}})^{\pm 1})= (\id \otimes \Delta^{(n-1)}_{z_{23...n}}) \Delta_{z_{1n}}((q_a^{z_{in}})^{\pm 1}). \nonumber\\
&&\Delta_{z_{12...n}}^{(n)}(h_a) := (\Delta^{(n-1)}_{z_{12...n-1}} \otimes \id)\Delta_{z_{on}}(h_a)= (\id \otimes \Delta^{(n-1)}_{z_{23...n}}) \Delta_{z_{1o}}(h_a) \label{b22}
\end{eqnarray}

\item For all $a, b \in X,$ $z_{i, 1,2, \ldots n} \in Y,$ 
$\Delta^{(n)}_{z_{12\ldots n}}$ is a ``weak'' algebra homomorphism
i.e. almost all the $p$-rack algebra relations hold, but  
\begin{eqnarray}
&& \Delta^{(n)}_{z_{12 \ldots n}}(h_a)\Delta^{(n)}_{z_{12 \ldots n}}(q_b^{z_{in}}) =\Delta^{(n)}_{z_{12 \ldots n}}(q_b^{z_{in}}) \Delta^{(n)}_{z_{12 \ldots n}}(h_{b \triangleright_{z_{i \hat o}} a}). \label{c4}
\end{eqnarray}
\end{enumerate}\end{rem}

\begin{exa}
    \label{exa2} Consider the binary operations $~\bullet_{z_{ij}},\  \triangleright_{z_{ij}}: X \times X \to X,$ such that $a\bullet_{z_{ij}} b = a \circ z_i+ b \circ z_j$ and $a\triangleright_{z_{ij}} b=- a \circ z_i \circ z_j^{-1} +b + a \circ z_{i}\circ z_{j}^{-1},$ where $(X,+,\circ)$ is a skew brace (see also Example \ref{bullet}), then one shows by direct computation that for all
$a ,b, c\in X,$ $z_{i,j,k} \in Y,$
\[(a\triangleright_{z_{ij}} b) \bullet_{z_{jk}}(a\triangleright_{z_{ik}} c) = a \triangleright_{z_{i \hat o}}(b \bullet_{z_{jk}}c ), \quad (a \bullet_{z_{ij}} b) \bullet_{z_{o k}} c = a \bullet_{z_{i o}}(b \bullet_{z_{jk}} c),\] where $z_o = z_{\hat o} =1$
and 
\[ \Pi_{z_1\ldots z_n}(a_1, a_2, \ldots, a_n)= a_1 \circ z_1 + 
a_2 \circ z_2  + \ldots +  a_{n-1} \circ z_{n-1} + a_n \circ z_{n}.  \]
\end{exa}

    \begin{exa} \label{exabasic}
Consider the fundamental representation of the restricted $p$-rack algebra ${\cal Q},$
$\rho: {\cal Q} \to \EEnd({\mathbb C}^n),$ $q^{z_{ij}}_a \mapsto {\mathrm q}^{z_{ij}}_a:=  \sum_{b\in X} e_{b, a\triangleright_{z_{ij}} b}, $ 
$h_a \mapsto e_{a,a},$ $a \in X$ $z_{i,j} \in Y,$ then:
\begin{eqnarray}
{\mathrm q}_a^{z_{jk}} {\mathrm q}_b^{z_{ik}} =  \sum_{c\in X} e_{c, b\triangleright_{z_{ik}}(a\triangleright_{z_{jk}} c)}.\label{conj0}
\end{eqnarray}
Recall also Example \ref{bullet} for e.g. $\xi =1$, i.e. let also $(X,+, \circ)$ be a skew brace and recall the binary operations for $z_{i,j} \in Y,$ $\bullet_{z_{ij}}, \triangleright_{z_{ij}}: X\times X \to X,$ $a\bullet_{z_{ij}} b = { a \circ z_i  + b\circ z_j} $ and  $a \triangleright_{z_{ij}} b = -a \circ z_i \circ z_j^{-1} + b + a\circ z_i \circ z_j^{-1},$ also 
\begin{equation}
a\bullet_{z_{ji}} b =  b \bullet_{z_{ij}} (b \triangleright_{z_{ij}} a) \quad \mbox{and} \quad
b \triangleright_{z_{ik}}(a\triangleright_{z_{jk}} c) = (a\bullet_{z_{ji}} b)\triangleright_{z_{ok}} c,
\end{equation}
where $z_o =1,$ then via (\ref{conj0}) we conclude that 
${\mathrm q}_a^{z_{jk}} {\mathrm q}_b^{z_{ik}} =  
{\mathrm q}^{z_{ok}}_{a\bullet_{z_{ji}} b} = {\mathrm q}_b^{z_{ik}}{\mathrm q}_{b\triangleright_{z_{ij}}a}^{z_{jk}}.$ 
\end{exa}

We conclude the subsection  by noting that
the rack (co)-homology has been studied in \cite{Andru, coho2, LebMan}.
One of the natural future questions is the generalization of the (co)-homological analysis 
in the case of parametric racks, based on the parametric co-structure 
constructed here. This issue however will be addressed separately in a future work.

\subsection{\texorpdfstring{$p$-set Yang-Baxter algebras}{}}
In this subsection we suitably extend the $p$-rack algebra ${\cal Q}$ in order to construct 
the algebras associated to general set-theoretic solutions of the parametric Yang-Baxter equation.

We start our analysis by defining the {\it decorated $p$-shelf algebra}.
\begin{defn}  
\label{setalgd0} (Decorated $p$-shelf algebra.) Let ${\cal Q}$ 
be a $p$-shelf algebra and
$\sigma^{z_{ij}}_a, \ \tau_a^{z_{ij}}: X\to X,$ $a \in X,$ $z_{i,j}\in Y.$
We say that the unital, associative algebra $\hat {\cal Q}$ over $k,$
generated by indeterminates $q^{z_{ij}}_a, (q^{z_{ij}}_a)^{-1}, h_a, \in {\cal Q}$ and 
$w^{z_{ij}}_a, (w^{z_{ij}}_a)^{-1} \in \hat {\cal Q},$ $a \in X,$ $1_{\hat {\cal Q}}= 1_{\cal Q}$ 
(the unit element)
and relations, for $a,b \in X,$ $z_{i,j,k} \in Y:$
\begin{eqnarray}
&& q_a^{z_{ij}} (q_{a}^{z_{ij}})^{-1} =(q_{a}^{z_{ij}})^{-1}q_a^{z_{ij}} = 
1_{\hat {\cal Q}},\quad q_a^{z_{jk}} q^{z_{ik}}_b = q^{z_{ik}}_b  
q^{z_{jk}}_{b \triangleright_{z_{ij}} a}, \quad  h_a  h_b =\delta_{a, b} h^2_a, 
\nonumber\\ && q^{z_{ij}}_b  h_{b\triangleright_{z_{ij}} a}= h_a q^{z_{ij}}_b, \quad
 w^{z_{ij}}_a (w^{z_{ij}}_a)^{-1} =1_{\hat {\cal Q}}, \quad  
w_a^{z_{ki}} w_b^{z_{ji}}= 
w^{z_{ji}}_{\sigma^{z_{jk}}_a(b)} w_{\tau^{z_{kj}}_{b}(a)}^{z_{ki}} \nonumber\\
&&
w_a^{z_{ji}} h_b = h_{\sigma^{z_{ij}}_a(b)} w_a^{z_{ji}}, 
\quad  w_a^{z_{kj}}q_b^{z_{ij}}= q_{\sigma_a^{z_{ik}}(b)}^{z_{ij}} w_a^{z_{kj}} 
 \label{qualgbb}
\end{eqnarray}
is a {\it decorated $p$-shelf algebra}.
\end{defn}

\begin{pro} 
\label{qua2} Let $\hat {\cal Q}$ be the decorated $p$-shelf algebra, and for all $a,b \in X,$ $h_a = h_b \Rightarrow a =b.$
Then for all $a,b,c \in X,$ $z_{i,j,k}\in Y:$
\begin{eqnarray}
&& \sigma^{z_{ik}}_a(\sigma^{z_{ij}}_b(c)) = \sigma^{z_{ij}}_{\sigma^{z_{jk}}_a\left(b\right)}(\sigma^{z_{ik}}_{\tau^{z_{jk}}_b\left(a\right)}(c))  \quad \& \quad   
\sigma^{z_{ik}}_c(b) \triangleright_{z_{ij}} \sigma^{z_{jk}}_{c}(a) = \sigma^{z_{jk}}_c(b \triangleright_{z_{ij}} a) \label{basicre}
\end{eqnarray}
and $\sigma^{z_{ij}}_a$ is injective.
\end{pro}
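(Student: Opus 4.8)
The plan is to argue exactly as in the proof of Proposition~\ref{proro}: one writes down a suitable word in the generators $w^{z_{ij}}_a$, $q^{z_{ij}}_a$, $h_a$ of $\hat{\cal Q}$, reduces it in two different ways using only associativity and the exchange relations~(\ref{qualgbb}), cancels the invertible factors, and finally turns the resulting equality of $h$-generators into an equality in $X$ via the standing hypothesis $h_a=h_b\Rightarrow a=b$. Since the two identities in~(\ref{basicre}) are precisely conditions (1) and (2) of Definition~\ref{def:twist:shelf} (equivalently, the $\sigma$-part of~(\ref{1})), it is natural to extract them from words built out of the ``decoration'' generators $w^{z_{ij}}_a$ --- which carry $\sigma$ and $\tau$ through~(\ref{qualgbb}) --- together with a single $h_c$.

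For the first identity, consider $w^{z_{ki}}_a\, w^{z_{ji}}_b\, h_c$. Moving $h_c$ to the left through $w^{z_{ji}}_b$ and then through $w^{z_{ki}}_a$ by means of the relation $w^{z_{pq}}_x h_y=h_{\sigma^{z_{qp}}_x(y)}w^{z_{pq}}_x$ produces $h_{\sigma^{z_{ik}}_a(\sigma^{z_{ij}}_b(c))}\,w^{z_{ki}}_a w^{z_{ji}}_b$. On the other hand, first rewriting $w^{z_{ki}}_a w^{z_{ji}}_b$ by the $w$--$w$ exchange relation of~(\ref{qualgbb}), then moving $h_c$ through the two resulting $w$'s, and then rewriting the product of $w$'s back into $w^{z_{ki}}_a w^{z_{ji}}_b$ produces $h_{\sigma^{z_{ij}}_{\sigma^{z_{jk}}_a(b)}(\sigma^{z_{ik}}_{\tau^{z_{jk}}_b(a)}(c))}\,w^{z_{ki}}_a w^{z_{ji}}_b$. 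The element $w^{z_{ki}}_a w^{z_{ji}}_b$ is invertible, so the two $h$-prefactors agree, and the hypothesis yields $\sigma^{z_{ik}}_a(\sigma^{z_{ij}}_b(c))=\sigma^{z_{ij}}_{\sigma^{z_{jk}}_a(b)}(\sigma^{z_{ik}}_{\tau^{z_{jk}}_b(a)}(c))$.

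For the second identity, consider $w^{z_{kj}}_c\, q^{z_{ij}}_b\, h_{b\triangleright_{z_{ij}}a}$. Rewriting $q^{z_{ij}}_b h_{b\triangleright_{z_{ij}}a}=h_a q^{z_{ij}}_b$ and then moving $w^{z_{kj}}_c$ past $h_a$ (producing a factor $\sigma^{z_{jk}}_c$) and past $q^{z_{ij}}_b$ (producing a factor $\sigma^{z_{ik}}_c$) yields $h_{\sigma^{z_{jk}}_c(a)}\,q^{z_{ij}}_{\sigma^{z_{ik}}_c(b)}\,w^{z_{kj}}_c$; performing the two commutations in the opposite order (first $w^{z_{kj}}_c$ past $q^{z_{ij}}_b$, then past $h_{b\triangleright_{z_{ij}}a}$) yields $q^{z_{ij}}_{\sigma^{z_{ik}}_c(b)}\,h_{\sigma^{z_{jk}}_c(b\triangleright_{z_{ij}}a)}\,w^{z_{kj}}_c$. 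Cancelling the invertible $w^{z_{kj}}_c$ on the right and then using the $p$-shelf relation $q^{z_{ij}}_d h_{d\triangleright_{z_{ij}}e}=h_e q^{z_{ij}}_d$ with $d=\sigma^{z_{ik}}_c(b)$ and $e=\sigma^{z_{jk}}_c(a)$, together with invertibility of $q^{z_{ij}}_{\sigma^{z_{ik}}_c(b)}$, gives $h_{\sigma^{z_{ik}}_c(b)\triangleright_{z_{ij}}\sigma^{z_{jk}}_c(a)}=h_{\sigma^{z_{jk}}_c(b\triangleright_{z_{ij}}a)}$, hence by the hypothesis $\sigma^{z_{ik}}_c(b)\triangleright_{z_{ij}}\sigma^{z_{jk}}_c(a)=\sigma^{z_{jk}}_c(b\triangleright_{z_{ij}}a)$. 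Injectivity of $\sigma^{z_{ij}}_a$ is then immediate: if $\sigma^{z_{ij}}_a(b)=\sigma^{z_{ij}}_a(c)$, then $w^{z_{ji}}_a h_b=h_{\sigma^{z_{ij}}_a(b)}w^{z_{ji}}_a=h_{\sigma^{z_{ij}}_a(c)}w^{z_{ji}}_a=w^{z_{ji}}_a h_c$, and cancelling the invertible $w^{z_{ji}}_a$ gives $h_b=h_c$, whence $b=c$.

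The one point requiring genuine care --- and the only real obstacle --- is the bookkeeping of the ordered index pairs: in~(\ref{qualgbb}) the superscript carried by a $\sigma$ (or the $\triangleright$ controlled by a $q$) is written in the order \emph{opposite} to the superscript carried by the adjacent $w$, so the superscripts on the initial word must be chosen so that after reduction all the $\sigma$'s and $\triangleright$'s come out with exactly the labels $z_{ij},z_{ik},z_{jk}$ occurring in~(\ref{basicre}); beyond this, the argument is a purely mechanical iteration of the exchange relations. Note also that nothing beyond the $p$-\emph{shelf} structure (and the invertibility of the $q$'s and $w$'s) is used, consistently with the statement being about the decorated $p$-shelf algebra rather than a $p$-rack algebra.
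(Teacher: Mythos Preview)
Your argument is correct and follows the paper's own strategy: for the first identity you reduce $w^{z_{ki}}_a w^{z_{ji}}_b h_c$ in two ways, exactly as the paper does, and your injectivity argument is the same as the paper's. For the second identity there is a small but genuine difference worth noting: the paper starts from $h_a\, q^{z_{ij}}_b\, w^{z_{kj}}_c$ and, in order to commute $q^{z_{ij}}_b$ past $w^{z_{kj}}_c$ from the left, has to write $q^{z_{ij}}_b w^{z_{kj}}_c = w^{z_{kj}}_c q^{z_{ij}}_{(\sigma^{z_{ik}}_c)^{-1}(b)}$, i.e.\ it uses $(\sigma^{z_{ik}}_c)^{-1}$ and obtains first the ``inverse'' form $(\sigma^{z_{ik}}_c)^{-1}(b)\triangleright_{z_{ij}}(\sigma^{z_{jk}}_c)^{-1}(a)=(\sigma^{z_{jk}}_c)^{-1}(b\triangleright_{z_{ij}}a)$ before passing to the stated identity. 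Your choice of starting word $w^{z_{kj}}_c\, q^{z_{ij}}_b\, h_{b\triangleright_{z_{ij}}a}$ (with $w$ on the left) lets you use the $w$--$q$ relation in the forward direction $w^{z_{kj}}_c q^{z_{ij}}_b=q^{z_{ij}}_{\sigma^{z_{ik}}_c(b)}w^{z_{kj}}_c$ and arrive at the desired identity directly, without ever invoking an inverse of $\sigma$. Since the proposition only \emph{concludes} that $\sigma^{z_{ij}}_a$ is injective (bijectivity is not among the hypotheses of the decorated $p$-shelf algebra), your variant is in fact the cleaner of the two.
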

\begin{proof}
We compute $w^{z_{ik}}_a w^{z_{ij}}_b h_c$ using the associativity of the algebra,
\begin{eqnarray}
&& w^{z_{ki}}_a w^{z_{ji}}_b h_c =  w^{z_{ji}}_{\sigma^{z_{jk}}_a(b)} w^{z_{ki}}_{\tau^{z_{jk}}_b(a)} h_c =
h_{\sigma^{z_{ij}}_{\sigma^{z_{jk}}_{a}(b)}(\sigma^{z_{ik}}_{\tau^{z_{jk}}_{b}(a)}(c))}
w^{z_{ji}}_{\sigma^{z_{jk}}_b(c)} w^{z_{ki}}_{\tau^{z_{jk}}_{c}(b)},  \nonumber \label{wb1} \\
&& w^{z_{ki}}_a w^{z_{ji}}_b h_c =h_{\sigma^{z_{ik}}_a(\sigma^{z_{ij}}_b(c))}  w^{z_{ki}}_a w^{z_{ji}}_b  = h_{\sigma^{z_{ik}}_a(\sigma^{z_{ij}}_b(c))}
w^{z_{ji}}_{\sigma^{z_{jk}}_b(c)} w^{z_{ki}}_{\tau^{z_{jk}}_{c}(b)}. \label{wb2} \nonumber
\end{eqnarray}
From the equations above and the invertibility of $w^{z_{ij}}_a,$ for all $a\in X,$ $z_{i,j} \in Y,$ we conclude for all $a,b,c\in X,$ $z_{i,j,k}\in Y,$ 
\begin{equation}
h_{\sigma^{z_{ij}}_{\sigma^{z_{jk}}_{a}(b)}(\sigma^{z_{ik}}_{\tau^{z_{jk}}_{b}(a)}(c))}= h_{\sigma^{z_{ik}}_a(\sigma^{z_{ij}}_b(c))}\  
\Rightarrow \ \sigma^{z_{ij}}_{\sigma^{z_{jk}}_{a}(b)}(\sigma^{z_{ik}}_{\tau^{z_{jk}}_{b}(a)}(c))= \sigma^{z_{ik}}_a(\sigma^{z_{ij}}_b(c)),\    \label{basiko} \nonumber
\end{equation}
for all $a, b, c \in X,$ $z_{i,j,k} \in Y.$

We also compute $h_aq^{z_{ij}}_b w^{z_{kj}}_c$:
\begin{eqnarray}
&& h_aq^{z_{ij}}_b w^{z_{kj}}_c = h_aw^{z_{kj}}_cq^{z_{ij}}_{(\sigma^{z_{ik}}_c)^{-1}(b)} =   w^{z_{kj}}_c q^{z_{ij}}_{(\sigma^{z_{ik}}_c)^{-1}(b)}
h_{(\sigma^{z_{ik}}_c)^{-1}(b)\triangleright_{z_{ij}} (\sigma^{z_{jk}}_{c})^{-1}(a)} \nonumber\\
&&  h_aq^{z_{ij}}_b w^{z_{kj}}_c = q^{z_{ij}}_b h_{b \triangleright_{z_{ij}} a}w^{z_{kj}}_c = 
q^{z_{ij}}_b w^{z_{kj}}_c h_{(\sigma^{z_{ik}}_c)^{-1}(b \triangleright_{z_{ij}} a)} = w^{z_{kj}}_c q^{z_{ij}}_{(\sigma^{z_{ik}}_c)^{-1}(b)} h_{(\sigma^{z_{jk}}_c)^{-1}(b \triangleright_{z_{ij}} a)}. \nonumber
\end{eqnarray}
From the equations above and the invertibility of $q^{z_{ij}}_a,\ w^{z_{ij}}_a,$ for all $a\in X,$ $z_{i,j} \in Y,$ we conclude for all $a,b,c\in X,$ $z_{i,j,k}\in Y$:
\begin{equation}
h_{(\sigma^{z_{ik}}_c)^{-1}(b)\triangleright_{z_{ij}} (\sigma^{z_{jk}}_c)^{-1}(a)} = h_{(\sigma^{z_{jk}}_c)^{-1}(b \triangleright_{z_{ij}} a)}\ \Rightarrow \  
(\sigma^{z_{ik}}_c)^{-1}(b)\triangleright_{z_{ij}} (\sigma^{z_{jk}}_{c})^{-1}(a) = (\sigma^{z_{jk}}_c)^{-1}(b \triangleright_{z_{ij}} a).\nonumber
\end{equation}
From the latter it immediately follows, $ \sigma^{z_{ik}}_c(b) \triangleright_{z_{ij}} \sigma^{z_{jk}}_{c}(a) = \sigma^{z_{jk}}_c(b \triangleright_{z_{ij}} a),$ for all $a, b, c \in X,$ $z_{i,j,k} \in Y.$

We assume that $\sigma_a^{z_{ij}}(b) = \sigma_a^{z_{ij}}(c),$ then $h_{\sigma^{z_{ij}}_a(b)} w_a^{z_{ji}} =h_{\sigma^{z_{ij}}_a(c)} w_a^{z_{ji}},$ by the seventh equation in (\ref{qualgbb}), we obtain $w_a^{z_{ji}} h_b = w_a^{z_{ji}} h_c$ and by the invertibility of $w_a^{z_{ji}},$ we conclude that $h_b =h_c$ and hence $b=c.$
\end{proof}
It is interesting to note that the two key relations shown in Proposition \ref{qua2} are precisely the conditions  that appear in the definition of an admissible twist (see Definition \ref{def:twist:shelf}) and are intrinsic properties of the underlying  
associative algebra. That is, the  decorated $p$-shelf algebra guarantees the existence of an admissible twist and hence the existence of a generic invertible  set-theoretic solution.
\begin{defn}  
\label{setalgd} (Decorated $p$-rack algebra.) A decorated $p$-shelf algebra is a decorated $p$-rack algebra if $(X, \triangleright_{z_{ij}})$ is a $p$-rack and $\sigma_a^{z_{ij}}: X \to X$ is a bijection for all $a \in X,$ $z_{i,j}\in Y.$
\end{defn}
\begin{lemma} \label{lemmad}
Let ${\mathrm C}= \sum_{a\in X} h_a,$ then ${\mathrm C}$ is a central element of the decorated $p$-rack algebra $\hat {\cal Q}$.
\end{lemma}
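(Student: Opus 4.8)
The plan is to check that $\mathrm{C}=\sum_{a\in X}h_a$ commutes with each of the generators $q_a^{z_{ij}},\ (q_a^{z_{ij}})^{-1},\ h_a,\ w_a^{z_{ij}},\ (w_a^{z_{ij}})^{-1}$ of $\hat{\cal Q}$; since these generate $\hat{\cal Q}$ as an associative algebra, that suffices. The generators $q_a^{z_{ij}},(q_a^{z_{ij}})^{-1},h_a$ generate a homomorphic image of the $p$-rack algebra ${\cal Q}$ inside $\hat{\cal Q}$ (the relations (\ref{qualg}) are exactly the first four relations of (\ref{qualgbb})), so Lemma \ref{lemmac} already gives that $\mathrm{C}$ commutes with all of them; equivalently, one re-derives this directly from the fourth relation in (\ref{qualgbb}) together with the bijectivity of $a\mapsto b\triangleright_{z_{ij}}a$, which holds because $(X,\triangleright_{z_{ij}})$ is a $p$-rack. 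Hence only the decoration generators remain.

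To handle $w_a^{z_{ji}}$ I would start from the seventh relation in (\ref{qualgbb}), namely $w_a^{z_{ji}}h_b=h_{\sigma_a^{z_{ij}}(b)}w_a^{z_{ji}}$. In a decorated $p$-rack algebra $\sigma_a^{z_{ij}}\colon X\to X$ is a bijection (Definition \ref{setalgd}), so this rewrites as $h_c\,w_a^{z_{ji}}=w_a^{z_{ji}}\,h_{(\sigma_a^{z_{ij}})^{-1}(c)}$; summing over $c\in X$ and reindexing the right-hand sum by the bijection $(\sigma_a^{z_{ij}})^{-1}$ gives $\mathrm{C}\,w_a^{z_{ji}}=w_a^{z_{ji}}\,\mathrm{C}$. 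For $(w_a^{z_{ji}})^{-1}$ I would multiply the same relation $w_a^{z_{ji}}h_b=h_{\sigma_a^{z_{ij}}(b)}w_a^{z_{ji}}$ on the left and on the right by $(w_a^{z_{ji}})^{-1}$, obtaining $h_b(w_a^{z_{ji}})^{-1}=(w_a^{z_{ji}})^{-1}h_{\sigma_a^{z_{ij}}(b)}$, and then sum over $b\in X$, again using that $b\mapsto\sigma_a^{z_{ij}}(b)$ permutes $X$; this yields $\mathrm{C}\,(w_a^{z_{ji}})^{-1}=(w_a^{z_{ji}})^{-1}\,\mathrm{C}$.

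There is no real obstacle here: the argument is a bookkeeping exercise in the defining relations (\ref{qualgbb}). The one point that must not be skipped is the reindexing of the sum $\sum_{a\in X}h_a$ under $\sigma_a^{z_{ij}}$ (and, in the ${\cal Q}$-part, under $a\triangleright_{z_{ij}}$): this is precisely where the $p$-rack hypothesis — bijectivity of $\sigma_a^{z_{ij}}$ and of $a\triangleright_{z_{ij}}$ — is used, and it is the reason the statement is phrased for the decorated $p$-\emph{rack} algebra rather than the decorated $p$-shelf algebra.
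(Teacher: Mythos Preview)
Your proposal is correct and follows essentially the same approach as the paper, which merely states that the result is ``straightforward by means of the definition of the algebra $\hat{\cal Q}$ and the fact that $a\triangleright_{z_{ij}}$ and $\sigma_a^{z_{ij}}$ are bijective.'' You have simply filled in the details of that sketch, using exactly the two bijectivity hypotheses the paper highlights.
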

\begin{proof} The proof is straightforward by means of the definition of 
the algebra $\hat {\cal Q}$ and the fact that $a \triangleright_{z_{ij}}$ and $\sigma_a^{z_{ij}}$ are bijective.
\end{proof}
From now on we consider, without loss of generality,  $\sum_{a\in X}h_a=1_{\hat  {\cal Q}}$ (see also Lemma \ref{lemmac}).

\begin{pro} \label{basic02}  (Algebra homomorphism)
Let $\hat {\cal Q}$ 
be the decorated $p$-rack algebra and 
${\cal R}^{z_{ij}}= \sum_{a \in X} h_a \otimes q^{z_{ij}}_a \in {\cal Q} \otimes {\cal Q}$ 
be a solution of the Yang-Baxter equation.
We also define for $z_{i,j,k} \in Y,$ $\Delta'_{z_{ij}}: \hat {\cal Q} \to \hat {\cal Q} \otimes \hat {\cal Q},$ such that for all $a \in X,$ 
\[\Delta'_{z_{jk}}((y^{z_{ik}}_a)^{\pm 1}) := (y^{z_{ij}}_a)^{\pm 1} \otimes (y^{z_{ik}}_a)^{\pm 1}, \quad \Delta'_{z_{ij}}(h_a) := \sum_{b,c \in X} 
h_b \otimes h_c\Big |_{b\triangleright_{z_{ij}} c = a}, \quad y_a^{z_{ik}} \in \{q_a^{z_{ik}},\ w_a^{z_{ik}}\}.\]
Then the following statements hold: 
\begin{enumerate}
\item $\Delta'_{z_{ij}}$ is a $\hat {\cal Q}$ algebra homomorphism for all $z_{i,j} \in Y.$
\item ${\cal R}^{z_{jk}} \Delta'_{z_{jk}}(y_a^{z_{ik}}) = \Delta^{'(op)}_{z_{kj}}(y_a^{z_{ik}}) {\cal R}^{z_{jk}},$ 
for $y_a^{z_{ik}} \in \{q_a^{z_{ik}},\ w_a^{z_{ik}}\},$ $a\in X,$ $z_{i,j,k} \in Y.$ Recall, $\Delta^{'(op)}_{z_{ij}} := \pi \circ \Delta'_{z_{ij}},$ where $\pi$ is the flip map.
\end{enumerate}
\end{pro}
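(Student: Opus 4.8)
The plan is to follow the proof of Proposition \ref{basic01} and to isolate the new generators $w_a^{z_{ij}}$. The subalgebra of $\hat{\cal Q}$ generated by the $q$'s and $h$'s, equipped with the restriction of $\Delta'_{z_{ij}}$, is precisely the situation of Proposition \ref{basic01}: the relations of (\ref{qualgbb}) that involve only $q$'s and $h$'s are therefore preserved, and part (2) already holds for $y_a^{z_{ik}}=q_a^{z_{ik}}$. Hence it remains to check (i) that $\Delta'_{z_{ij}}$, extended multiplicatively to the free algebra on the generators, sends each of the four new relations of (\ref{qualgbb}) — the invertibility of $w_a^{z_{ij}}$, the $w$--$w$ braiding relation, the $w$--$h$ relation, and the $w$--$q$ relation — to a valid identity in $\hat{\cal Q}\otimes\hat{\cal Q}$; and (ii) the intertwining relation of part (2) for $y_a^{z_{ik}}=(w_a^{z_{ik}})^{\pm 1}$.

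For (i): since the coproduct on the $q$'s and $w$'s merely splits the superscript pair while leaving the dressing elements $\sigma^{z_{jk}}_a(b),\ \tau^{z_{kj}}_b(a),\ \sigma^{z_{ik}}_a(b)$ inert, the relation $w_a^{z_{ij}}(w_a^{z_{ij}})^{-1}=1_{\hat{\cal Q}}$, the $w$--$w$ braiding relation and the $w$--$q$ commutation relation are preserved slot by slot: applying $\Delta'_{z_{ij}}$ to both sides and using the same relation in each tensor factor (with the appropriate relabelling of the split index) closes the identity with no further input. The one relation requiring an extra ingredient is $w_a^{z_{ji}}h_b=h_{\sigma^{z_{ij}}_a(b)}w_a^{z_{ji}}$, because $\Delta'_{z_{mi}}(h_b)=\sum_{b_1\triangleright_{z_{mi}}b_2=b}h_{b_1}\otimes h_{b_2}$ couples the two slots through $\triangleright$. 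Applying the $w$--$h$ relation in each slot reduces the claim to the set identity
\[\{(\sigma^{z_{mj}}_a(b_1),\sigma^{z_{ij}}_a(b_2)) : b_1\triangleright_{z_{mi}}b_2=b\}=\{(c_1,c_2): c_1\triangleright_{z_{mi}}c_2=\sigma^{z_{ij}}_a(b)\},\]
i.e. that $(b_1,b_2)\mapsto(\sigma^{z_{mj}}_a(b_1),\sigma^{z_{ij}}_a(b_2))$ is a bijection intertwining the two constraints; this follows from $\sigma^{z_{mj}}_a(b_1)\triangleright_{z_{mi}}\sigma^{z_{ij}}_a(b_2)=\sigma^{z_{ij}}_a(b_1\triangleright_{z_{mi}}b_2)$ — property (2) of Definition \ref{def:twist:shelf}, which is guaranteed in the decorated $p$-rack algebra by Proposition \ref{qua2} — together with injectivity of $\sigma^{z_{ij}}_a$. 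The remaining idempotency/delta relations $\Delta'_{z_{ij}}(h_a)\Delta'_{z_{ij}}(h_b)=\delta_{a,b}\Delta'_{z_{ij}}(h_a)$ are verified verbatim as in Proposition \ref{basic01}.

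For (ii): write ${\cal R}^{z_{jk}}=\sum_{c\in X}h_c\otimes q_c^{z_{jk}}$ and $\Delta'_{z_{jk}}(w_a^{z_{ik}})=w_a^{z_{ij}}\otimes w_a^{z_{ik}}$, so that ${\cal R}^{z_{jk}}\Delta'_{z_{jk}}(w_a^{z_{ik}})=\sum_c h_cw_a^{z_{ij}}\otimes q_c^{z_{jk}}w_a^{z_{ik}}$. Moving $w_a^{z_{ij}}$ past $h_c$ via the $w$--$h$ relation and $w_a^{z_{ik}}$ past $q_c^{z_{jk}}$ via the $w$--$q$ relation renames the summation index $c$ in \emph{both} tensor factors by the same bijection $(\sigma^{z_{ji}}_a)^{-1}$; resumming over the new index rebuilds $(w_a^{z_{ij}}\otimes w_a^{z_{ik}})\,{\cal R}^{z_{jk}}=\Delta'_{z_{jk}}(w_a^{z_{ik}})\,{\cal R}^{z_{jk}}$, and since $\Delta^{'(op)}_{z_{kj}}(w_a^{z_{ij}})=\Delta'_{z_{jk}}(w_a^{z_{ik}})$ (the observation already used for the $q$'s in Proposition \ref{basic01}) this is the desired identity; the case of $(w_a^{z_{ik}})^{-1}$ follows by invertibility of $w_a^{z_{ik}}$ and of ${\cal R}^{z_{jk}}$. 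I expect the only genuine obstacle to be bookkeeping — keeping the permutations of the parameter labels $z_{i,j,k}$ straight through the superscript-splitting coproduct and the $\sigma$/$\tau$-dressed subscripts — since at the one non-formal point, the $w$--$h$ relation, the content is exactly the admissible-twist identity already secured by Proposition \ref{qua2}, so no new algebraic input is required.
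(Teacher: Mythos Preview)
Your proposal is correct and follows essentially the same approach as the paper: both reduce to Proposition~\ref{basic01} for the $q,h$ subalgebra, then verify the new $w$-relations slot by slot, singling out the $w$--$h$ relation as the only one requiring the extra identity $\sigma^{z_{ik}}_c(b)\triangleright_{z_{ij}}\sigma^{z_{jk}}_c(a)=\sigma^{z_{jk}}_c(b\triangleright_{z_{ij}}a)$ from Proposition~\ref{qua2}, and both dispatch part~(2) by the direct reindexing computation you describe. Your write-up is in fact more explicit than the paper's very terse proof; the only slip is the typo $\Delta^{'(op)}_{z_{kj}}(w_a^{z_{ij}})$ where you mean $\Delta^{'(op)}_{z_{kj}}(w_a^{z_{ik}})$, but the content is unaffected.
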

\begin{proof} Bearing in mind Proposition \ref{basic01} it suffices to show:\\ $(1)$ $ $ 
$(i)$ $~\Delta'_{z_{ij}}(w_a^{z_{kj}}) \Delta'_{z_{ij}}(w_b^{z_{lj}}) = \Delta'_{z_{ij}}(w_{\sigma_a^{z_{lk}}(b)}^{z_{lj}}) \Delta'_{z_{ij}}(w_{\tau_b^{z_{lk}}(a)}^{z_{kj}}),$\\ $~~~~(ii)$
$~\Delta'_{z_{ij}}(w_a^{z_{kj}}) \Delta'_{z_{ij}}(q_a^{z_{lj}}) = \Delta'_{z_{ij}}(q_{\sigma_a^{z_{lk}}(b)}^{z_{lj}}) \Delta'_{z_{ij}}(w_a^{z_{kj}}),$\\
$~~~~(iii)$ $~\Delta'_{z_{ij}}(w_a^{z_{kj}}) \Delta'_{z_{ij}}(h_b) = \Delta'_{z_{ij}}(h_{\sigma^{z_{jk}}_a(b)}) \Delta'_{z_{ij}}(w_a^{z_{kj}})$ and
$(2)$ $~{\cal R}^{z_{jk}} \Delta'_{z_{jk}}(w_a^{z_{ik}}) = \Delta^{'(op)}_{z_{kj}}(w_a^{z_{ik}}) {\cal R}^{z_{jk}}.$

\begin{enumerate}
\item First it is straightforward, via the algebraic relations of the decorated $p$-rack  
algebra and the form of the coproducts to show $(i),$ $(ii)$. To show $(iii)$ we use in addition the condition, $\sigma^{z_{ik}}_c(b) \triangleright_{z_{ij}} \sigma^{z_{jk}}_{c}(a) = \sigma^{z_{jk}}_c(b \triangleright_{z_{ij}} a)$ (see Proposition \ref{qua2}).

\item This part is also shown directly by means of the relations of the $p$-decorated algebra.
\hfill \qedhere
\end{enumerate}
\end{proof}

\begin{defn}  
\label{setalgd2} ($p$-set Yang-Baxter algebra.) 
Let ${\cal Q}$ 
be a restricted $p$-rack algebra. Let also $\sigma^{z_{ij}}_a, \ \tau^{z_{ij}}_b: X\to X,$ and $\sigma^{z_{ij}}_a$ 
be bijective for all $a\in X,$ $z_{i,j} \in Y.$  We say that the unital, associative algebra $\hat {\cal Q}$ over $k,$
generated by indeterminates $1_{\hat {\cal Q}}$ (unit element), $q^{z_{ij}}_a, (q^{z_{ij}}_a)^{-1}, h_a, \in {\cal Q},$ 
$w^{z_{ij}}_a, (w^{z_{ij}}_a)^{-1}\in \hat {\cal Q},$ for $a \in X,$ $z_{i,j} \in Y$
and relations, (\ref{qualgbb}) is a $p$-set Yang-Baxter algebra.
\end{defn}

\begin{pro} \label{basica2b} 
Let $\hat {\cal Q}$ 
be a $p$-set Yang-Baxter algebra and
 ${\cal R}^{z_{ij}} = \sum_{b \in X} h_b\otimes q^{z_{ij}}_b \in \hat {\cal Q} \otimes \hat {\cal Q}$ 
is a solution of the Yang-Baxter equation. Let also for all $a,b,c \in X,$  $z_{i,j,k} \in Y:$  
$(b\triangleright_{z_{ij}} a) \bullet_{z_{jk}}(b\triangleright_{z_{ik}} c) = b \triangleright_{z_{ik}}(a \bullet_{z_{jk}}c)$ and
\begin{equation}
 \sigma^{z_{ik}}_c(a) \bullet_{z_{ij}} \sigma^{z_{jk}}_c(b) = \sigma_c^{z_{jk}}(a\bullet_{z_{ij}} b)
 \label{condition0}
\end{equation} 
We also define for $z_{i,j,k} \in Y,$ $\Delta_{z_{ij}}: \hat {\cal Q} \to \hat {\cal Q} \otimes 
\hat {\cal Q},$ such that for all $a \in X:$
\[\Delta_{z_{ij}}(q^{z_{kj}}_a) := q^{z_{ki}}_a \otimes q^{z_{kj}}_a, \quad \Delta_{z_{ij}}(h_a) := \sum_{b,c \in X} 
h_b \otimes h_c\Big |_{b\bullet_{z_{ij}} c = a}, \quad \Delta_{z_{ij}}(w_a^{z_{kj}}) = w_{a}^{z_{ki}} \otimes w_a^{z_{kj}}.\]
Then the following statements hold: 
\begin{enumerate}
\item $\Delta_{z_{ij}}$ is an algebra homomorhism for all ${z_{i,j}} \in Y.$
\item  $\Delta^{(op)}_{z_{ji}}(y) {\cal R}^{z_{ij}} = {\cal R}^{z_{ij}} \Delta_{z_{ij}}(y),$ 
$y \in \{h_a,\ q_a^{kj},\ w_a^{kj}\},$ for all $a \in X,$ $z_{i,j,k} \in Y,$
recall also that $\Delta^{(op)}_{z_{ij}} := \pi \circ \Delta_{z_{ij}},$ where $\pi$ is the flip map.
\end{enumerate}
\end{pro}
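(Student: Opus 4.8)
The plan is to mirror the structure of the proofs of Proposition \ref{basic01} and Theorem \ref{basica1}, since the $p$-set Yang-Baxter algebra is the decorated $p$-rack algebra with $\bullet_{z_{ij}}$ replacing $\triangleright_{z_{ij}}$ in the coproduct of $h_a$. The extra generators $w_a^{z_{ij}}$ require verifying the homomorphism property on their defining relations (the last three relations in (\ref{qualgbb})), and this is precisely where the two new hypotheses --- the mixed self-distributivity $(b\triangleright_{z_{ij}} a)\bullet_{z_{jk}}(b\triangleright_{z_{ik}} c) = b\triangleright_{z_{ik}}(a\bullet_{z_{jk}}c)$ and the twist compatibility (\ref{condition0}) --- will be used.

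\medskip
\noindent First I would check part (1). For the $q$-generators and for the $h$-generators one argues exactly as in Theorem \ref{basica1}, part (1): $\Delta_{z_{kl}}(q_a^{z_{jl}})\Delta_{z_{kl}}((q_a^{z_{jl}})^{-1})=1\otimes 1$, the relation $q_a^{z_{jk}}q_b^{z_{ik}}=q_b^{z_{ik}}q_{b\triangleright_{z_{ij}}a}^{z_{jk}}$ is preserved componentwise, and the $h$--$q$ relation $q_b^{z_{ij}}h_{b\triangleright_{z_{ij}}a}=h_a q_b^{z_{ij}}$ is preserved using the mixed self-distributivity to rewrite the constraint $b_1\bullet_{z_{jk}}b_2=a$ appropriately. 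For the new relations I would verify, componentwise in ${\cal Q}^{\otimes 2}$:
\begin{enumerate}
\item $\Delta_{z_{ij}}(w_a^{z_{ki}})\,\Delta_{z_{ij}}(w_b^{z_{ji}})=\Delta_{z_{ij}}(w_{\sigma_a^{z_{jk}}(b)}^{z_{ji}})\,\Delta_{z_{ij}}(w_{\tau_b^{z_{kj}}(a)}^{z_{ki}})$, which follows directly from the $w$--$w$ relation applied in each tensor slot, since the coproduct of $w$ just copies the subscript;
\item $\Delta_{z_{ij}}(w_a^{z_{ji}})\,\Delta_{z_{ij}}(h_b)=\Delta_{z_{ij}}(h_{\sigma_a^{z_{ij}}(b)})\,\Delta_{z_{ij}}(w_a^{z_{ji}})$, where I push $w_a$ past each $h$-factor using the $w$--$h$ relation, producing a sum over $b_1,b_2$ with $b_1\bullet_{z_{ij}}b_2=b$ which must be reindexed to $\sigma_a^{z_{ij}}(b_1)\bullet_{z_{ij}}\sigma_a^{z_{ij}}(b_2)=\sigma_a^{z_{ij}}(b)$ --- this is exactly hypothesis (\ref{condition0});
\item $\Delta_{z_{ij}}(w_a^{z_{kj}})\,\Delta_{z_{ij}}(q_b^{z_{ij}})=\Delta_{z_{ij}}(q_{\sigma_a^{z_{ik}}(b)}^{z_{ij}})\,\Delta_{z_{ij}}(w_a^{z_{kj}})$, again componentwise from the $w$--$q$ relation.
\end{enumerate}

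\medskip
\noindent For part (2) I would compute ${\cal R}^{z_{jk}}\Delta_{z_{jk}}(y)$ and $\Delta^{(op)}_{z_{kj}}(y){\cal R}^{z_{jk}}$ directly with ${\cal R}^{z_{jk}}=\sum_a h_a\otimes q_a^{z_{jk}}$, treating the three cases $y=h_a$, $y=q_a^{z_{kj}}$, $y=w_a^{z_{kj}}$. The $q$ and $h$ cases are handled exactly as in Proposition \ref{basic01}(2) and Theorem \ref{basica1}(2) respectively, using $a_1\bullet_{z_{ji}}a_2=a_2\bullet_{z_{ij}}(a_2\triangleright_{z_{ij}}a_1)$ (the restricted $p$-rack axiom) to reindex. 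For $y=w_a^{z_{kj}}$ I would use the $w$--$h$ and $w$--$q$ relations to move the $h_b\otimes q_b^{z_{jk}}$ summand through $w_a^{z_{ki}}\otimes w_a^{z_{kj}}$, obtaining $\sum_b h_{\sigma_b^{z_{ij}}\cdots}\otimes q_{\sigma_b\cdots}^{z_{jk}}$ type terms that reassemble into $\Delta^{(op)}_{z_{kj}}(w_a^{z_{kj}}){\cal R}^{z_{jk}}$; the bijectivity of $\sigma_a^{z_{ij}}$ is what makes the reindexing legitimate. I expect the main obstacle to be the $h$-component bookkeeping in part (1)(ii) and in the $w$-case of part (2): keeping the three parameter labels $z_i,z_j,z_k$ aligned through the $w$--$h$ commutation and matching the constraint sets requires careful use of (\ref{condition0}) together with the mixed self-distributivity, and it is easy to misplace a $\sigma$ or swap an index; everything else is the same componentwise verification pattern already established in the preceding propositions.
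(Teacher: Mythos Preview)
Your proposal is correct and follows essentially the same approach as the paper: reduce part (1) to Theorem \ref{basica1} for the $q$- and $h$-relations, then verify the three $w$-relations componentwise (using (\ref{condition0}) only for the $w$--$h$ relation); and for part (2) reduce to Theorem \ref{basica1} for $y=h_a,q_a^{z_{kj}}$ and compute the $w$-case directly via the $w$--$h$, $w$--$q$ relations and bijectivity of $\sigma_a^{z_{ij}}$. The only cosmetic discrepancy is that in your item (i) the superscripts should end in $j$ (i.e.\ $\Delta_{z_{ij}}(w_a^{z_{lj}})\Delta_{z_{ij}}(w_b^{z_{kj}})$) to match the domain convention for $\Delta_{z_{ij}}$, and the paper phrases the $w$-case of (2) as actual commutation $\Delta_{z_{ij}}(w_a^{z_{kj}}){\cal R}^{z_{ij}}={\cal R}^{z_{ij}}\Delta_{z_{ij}}(w_a^{z_{kj}})$ since the coproduct of $w$ is group-like; both are the same computation.
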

\begin{proof}  
In our proof below we are using the definition of the $p$-set Yang-Baxter algebra and (\ref{condition0}).
\begin{enumerate}
\item Recalling part 1 of Theorem \ref{basica1} it is sufficient to check the 
consistency of the following algebraic relations using (\ref{condition0}), for all $a, b\in X,$ $z_{i,j,k,l} \in Y:$
\begin{eqnarray}
&& \Delta_{z_{ij}}(w^{z_{lj}}_a) \Delta_{z_{ij}}(w^{z_{kj}}_b) = \Delta_{z_{ij}}(w^{z_{kj}}_{\sigma^{z_{kl}}_a(b)}) \Delta_{z_{ij}}(w^{z_{lj}}_{\tau^{z_{kl}}_b(a)}),  \nonumber\\   
&&\Delta_{z_{ij}}(w^{z_{kj}}_a) \Delta_{z_{ij}}(h_b) = \Delta_{z_{ij}}(h_{\sigma^{z_{jk}}_a(b)}) \Delta_{z_{ij}}(w^{z_{kj}}_a). \nonumber\\
&&\Delta_{z_{ij}}(w^{z_{lj}}_a) \Delta_{z_{ij}}(q^{z_{kj}}_b) = \Delta_{z_{ij}}(q^{z_{kj}}_{\sigma^{z_{kl}}_a(b)}) \Delta_{z_{ij}}(w^{z_{lj}}_{a}).\nonumber 
\end{eqnarray}
By using the algebra $\hat {\cal Q}$ relations and the form of the coproducts we show by direct computation 
that the above equations hold. In order to show the second of the three equalities above we also use identity (\ref{condition0}).

\item  Given Theorem \ref{basica1} it suffices to show that for all $a \in X,$ $z_{i,j,k} \in Y,$
$~\Delta_{z_{ij}}(w^{z_{kj}}_a) {\cal R}^{z_{ij} }= {\cal R}^{z_{ij}} \Delta_{z_{ij}}(w^{z_{kj}}_a).$ Indeed, this is shown by a direct computation using the algebraic relations of 
Definition \ref{setalgd}.
\hfill \qedhere
\end{enumerate}
\end{proof}

\begin{lemma} \label{basicc2} (Parametric (co)-associativity.) 
Let $\hat {\cal Q}$ 
be the decorated $p$-rack algebra and consider the subalgebra $\hat {\cal Q}^-$ consisting of 
the elements $1_{\hat {\cal Q}}, \ (q^{z_{ij}}_a)^{\pm 1},\ (w^{z_{ij}}_a)^{\pm 1}.$ We also  
define for all $z_{i,1,2, \ldots, n} \in Y,$ $a \in X,$ and $y^{z_{ij}}_a =\{q^{z_{ij}}_a, 
w^{z_{ij}}_a\},$ the map
$\Delta^{(n)}_{z_{1 2 \ldots n}}: \hat {\cal Q}  
\to \hat {\cal Q}^{\otimes n},$ such that
\begin{eqnarray}
    && \Delta^{(n)}_{z_{12\ldots n}}((y^{z_{in}}_a)^{\pm 1}) := (y^{z_{i1}}_a)^{\pm 1} 
    \otimes (y^{z_{i2}}_a)^{\pm 1}\otimes \ldots \otimes (y_a^{z_{in}})^{\pm 1}.
\end{eqnarray}

Then, for all $z_{i, 1,2, \ldots n} \in Y,$ 
\begin{enumerate}

\item  $\Delta_{z_{12...n}}^{(n)}((y_a^{z_{in}})^{\pm 1}):= (\Delta^{(n-1)}_{z_{12...n-1}} 
\otimes \id)\Delta_{z_{n-1n}}((y_a^{z_{in}})^{\pm 1})  = (\id 
\otimes \Delta^{(n-1)}_{z_{23...n}}) \Delta_{z_{1n}}((y_a^{z_{in}})^{\pm 1}),$ $a \in X.$
\item $\Delta^{(n)}_{z_{12\ldots n}}$ is a $\hat {\cal Q}^-$ algebra homomorphism.
\end{enumerate}
\end{lemma}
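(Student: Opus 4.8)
The plan is to prove both parts by induction on $n$, exploiting the fact that on the generators $(y_a^{z_{ik}})^{\pm1}$ with $y\in\{q,w\}$ the two-fold coproduct $\Delta_{z_{ij}}$ of Propositions~\ref{basic02} and~\ref{basica2b} acts by merely \emph{splitting the right label} $k$ of the superscript pair over the two tensor slots, while leaving the left label $i$ and the subscript $a$ untouched. Iterating, an $n$-fold coproduct is forced to fan $k$ out over all $n$ slots, which both makes the recursion well defined (part~1) and lets every defining relation of $\hat{\cal Q}^-$ be verified one tensor factor at a time (part~2). The base case $n=2$ is the statement that $\Delta_{z_{ij}}$ restricted to $\hat{\cal Q}^-$ is an algebra homomorphism, which is contained in Propositions~\ref{basic02} and~\ref{basica2b} (the $h_a$-free relations there).

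For part~(1), I would show by induction on $n$ that the left-nested expression $\big(\Delta^{(n-1)}_{z_{12\ldots n-1}}\otimes\id\big)\Delta_{z_{n-1\,n}}\big((y_a^{z_{in}})^{\pm1}\big)$ and the right-nested expression $\big(\id\otimes\Delta^{(n-1)}_{z_{23\ldots n}}\big)\Delta_{z_{1n}}\big((y_a^{z_{in}})^{\pm1}\big)$ both equal the fully split element
\[
(y_a^{z_{i1}})^{\pm1}\otimes(y_a^{z_{i2}})^{\pm1}\otimes\cdots\otimes(y_a^{z_{in}})^{\pm1},
\]
which is precisely the displayed defining formula for $\Delta^{(n)}_{z_{12\ldots n}}$. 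Indeed $\Delta_{z_{n-1\,n}}\big((y_a^{z_{in}})^{\pm1}\big)=(y_a^{z_{i\,n-1}})^{\pm1}\otimes(y_a^{z_{in}})^{\pm1}$, and applying $\Delta^{(n-1)}_{z_{12\ldots n-1}}$ to the first leg together with the inductive hypothesis yields the fully split element; symmetrically, $\Delta_{z_{1n}}\big((y_a^{z_{in}})^{\pm1}\big)=(y_a^{z_{i1}})^{\pm1}\otimes(y_a^{z_{in}})^{\pm1}$ and applying $\Delta^{(n-1)}_{z_{23\ldots n}}$ to the second leg (now with the shifted index set) gives the same. Hence the two recursive prescriptions agree, so $\Delta^{(n)}_{z_{12\ldots n}}$ is well defined and the parametric coassociativity of part~(1) holds.

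For part~(2), using the explicit form just obtained, $\Delta^{(n)}_{z_{12\ldots n}}$ sends $(y_a^{z_{in}})^{\pm1}$ to the $n$-fold tensor $(y_a^{z_{i1}})^{\pm1}\otimes\cdots\otimes(y_a^{z_{in}})^{\pm1}$, so it is enough to check that it respects the defining relations of $\hat{\cal Q}^-$, i.e.\ those relations in (\ref{qualgbb}) not involving the $h_a$: the invertibility relations $q_a^{z_{ij}}(q_a^{z_{ij}})^{-1}=(q_a^{z_{ij}})^{-1}q_a^{z_{ij}}=1_{\hat{\cal Q}}$ and $w_a^{z_{ij}}(w_a^{z_{ij}})^{-1}=1_{\hat{\cal Q}}$, the braiding relations $q_a^{z_{jk}}q_b^{z_{ik}}=q_b^{z_{ik}}q_{b\triangleright_{z_{ij}}a}^{z_{jk}}$ and $w_a^{z_{ki}}w_b^{z_{ji}}=w_{\sigma_a^{z_{jk}}(b)}^{z_{ji}}\,w_{\tau_b^{z_{kj}}(a)}^{z_{ki}}$, and the cross relation $w_a^{z_{kj}}q_b^{z_{ij}}=q_{\sigma_a^{z_{ik}}(b)}^{z_{ij}}w_a^{z_{kj}}$. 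In each of these the label that $\Delta^{(n)}$ fans out — respectively $k$, $k$, $i$, $j$ for the braiding and cross relations — appears identically on both sides, whereas every set-theoretic decoration ($\triangleright_{z_{ij}}$, $\sigma_a^{z_{jk}}$, $\tau_b^{z_{kj}}$, $\sigma_a^{z_{ik}}$) depends only on the remaining, frozen labels. Consequently applying $\Delta^{(n)}_{z_{12\ldots n}}$ turns each relation into the tensor product over $m=1,\dots,n$ of the very same relation with the fanned label replaced by $m$, and each such slot-wise identity is one of the defining relations of $\hat{\cal Q}$; the $\pm1$ cases are handled uniformly via $y^{z_{im}}(y^{z_{im}})^{-1}=1_{\hat{\cal Q}}$ in each slot. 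Hence $\Delta^{(n)}_{z_{12\ldots n}}$ descends to an algebra homomorphism $\hat{\cal Q}^-\to\hat{\cal Q}^{\otimes n}$.

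The only genuine point of care — and thus the ``main obstacle'', albeit a mild one — is the index bookkeeping in part~(2): one must confirm that in each defining relation the argument labels of $\triangleright$, $\sigma$ and $\tau$ are disjoint from the label being fanned out, so that the relation really does descend factor by factor. This is exactly why the statement is confined to the subalgebra $\hat{\cal Q}^-$: the coproduct of $h_a$ (which involves the operation $\bullet_{z_{ij}}$) would additionally require the compatibility hypotheses of Propositions~\ref{basica2b} and~\ref{basicb2}, and its coassociativity and homomorphism property hold only under those extra assumptions, whereas the $q$- and $w$-sector treated here needs none of them.
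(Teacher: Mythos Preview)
Your proposal is correct and follows the same approach as the paper, which simply states that ``the proof of both (1) and (2) follows directly from the form of the $n$-coproduct and the algebraic relations.'' You have merely spelled out in detail what the paper leaves as a one-line remark: the induction in (1) and the slot-wise verification of the $h$-free relations in (2), including the key observation that the fanned-out index never appears in the decorations $\triangleright$, $\sigma$, $\tau$.
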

\begin{proof} 
The proof of both (1) and (2) follows directly from the form of the $n$-coproduct and the algebraic relations \ref{qualgd}.
\end{proof}

We come now to the derivation of the admissible Drinfel'd twist that will be used to provide 
the universal ${\cal R}$-matrix associated to the general set-theoretic solution. The $n$-fold twist will be also explicitly derived.
It is useful to introduce some practical notation that can be applied in the following propositions:  
let $i,j,k \in \{1,2,3\},$ then ${\cal F}^{z_{ijk}}_{jik} = \pi_{ij} \circ {\cal F}^{z_{ijk}}_{ijk}$ and 
${\cal F}^{z_{ijk}}_{ikj} = \pi_{jk} \circ {\cal F}^{z_{ijk}}_{ijk},$ where  $\pi$ is the flip map. 

\begin{thm} \label{twist2} (Drinfel'd twist \cite{Drinfel'd}) 
Let ${\cal R}^{z_{ij}} = \sum_{a\in X} h_a \otimes q^{z_{ij}}_a\in {\cal Q} \otimes {\cal Q}$ 
be the $p$-rack universal ${\cal R}-$matrix.  Let also $\hat {\cal Q}$ be the decorated $p$-rack algebra and
${\cal F}^{z_{ij}}\in \hat {\cal Q} \otimes \hat  {\cal Q},$
such that ${\cal F}^{z_{ij}} =\sum_{b\in X}h_b \otimes (w^{z_{ij}}_b)^{-1},$ for all $z_{i,j} \in Y$ and 
${\cal F}^{z_{ji}}_{ji} {\cal R}^{z_{ij}}_{ij} ={\cal R}^{Fz_{ij}}_{ij} {\cal F}^{z_{ij}}_{ij}.$
We also define:
\begin{eqnarray}
{\cal F}^{z_{123}}_{1,23} := \sum_{a\in X} h_a\otimes (w^{z_{12}}_a)^{-1} \otimes 
(w^{z_{13}}_a)^{-1},  \quad {\cal F}^{*z_{123}}_{12,3} : = \sum_{a,b \in X}h_a\otimes h_{\sigma^{z_{21}}_a(b)} \otimes (w^{z_{23}}_{b})^{-1} (w^{z_{13}}_a)^{-1}. \label{twi1}
\end{eqnarray}
Let also  for every
$a,b, \in X,$ $z_{i,j} \in Y,$ $~b\triangleright_{z_{ij}} a = 
\sigma^{z_{ji}}_b(\tau^{z_{ij}}_{(\sigma^{z_{ij}}_a)^{-1}(b)}(a)).$ 
Then, the following statements are true for $z_{1,2,3} \in Y:$
\begin{enumerate}
\item ${\cal F}^{z_{12}}_{12} {\cal F}^{*z_{123}}_{12,3} ={\cal F}^{z_{23}}_{23} {\cal F}^{z_{123}}_{1,23} =: 
{\cal F}^{z_{123}}_{123}.$
\item  (i) ${\cal F}^{z_{ikj}}_{ikj} {\cal R}^{z_{jk}}_{jk} ={\cal R}^{Fz_{jk}}_{jk} {\cal F}^{z_{ijk}}_{ijk}.$
\\
(ii)  ${\cal F}^{z_{jik}}_{jik} {\cal R}^{z_{ij}}_{ij} ={\cal R}^{Fz_{ij}}_{ij} {\cal F}^{z_{ijk}}_{ijk}.$
\end{enumerate}
 The element ${\cal F}^{z_{ij}}$ is called an admissible Drinfel'd twist.
\end{thm}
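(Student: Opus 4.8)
The plan is to establish the three claims by direct computation in the decorated $p$-rack algebra $\hat{\cal Q}$, systematically using the defining relations \eqref{qualgbb}, and always reducing multiplication of the $w$'s or of $w$ against $q,h$ to the normal-ordering rules provided there. First I would fix notation: recall that ${\cal F}^{z_{ij}}=\sum_{b\in X}h_b\otimes(w^{z_{ij}}_b)^{-1}$, so that ${\cal F}^{z_{ji}}_{ji}$ is obtained by applying the flip $\pi$ to both tensor legs, i.e. ${\cal F}^{z_{ji}}_{ji}=\sum_b (w^{z_{ji}}_b)^{-1}\otimes h_b$, and ${\cal R}^{Fz_{ij}}$ denotes the twisted $\cal R$-matrix defined by the hypothesis ${\cal F}^{z_{ji}}_{ji}{\cal R}^{z_{ij}}_{ij}={\cal R}^{Fz_{ij}}_{ij}{\cal F}^{z_{ij}}_{ij}$; this last relation is the two-site base case and should be verified first (or cited as an assumption of the theorem), since the three-site identities in (2) are its shifted copies.

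For part (1), I would compute both ${\cal F}^{z_{12}}_{12}{\cal F}^{*z_{123}}_{12,3}$ and ${\cal F}^{z_{23}}_{23}{\cal F}^{z_{123}}_{1,23}$ explicitly as elements of $\hat{\cal Q}^{\otimes 3}$. The right-hand product expands immediately to $\sum_a h_a\otimes (w^{z_{12}}_a)^{-1}h_?\otimes\cdots$; the key move is to slide the $h$'s from the first factor of ${\cal F}^{z_{23}}_{23}$ past the $w$'s in ${\cal F}^{z_{123}}_{1,23}$ using the relation $w_a^{z_{ji}}h_b=h_{\sigma^{z_{ij}}_a(b)}w_a^{z_{ji}}$ from \eqref{qualgbb} (equivalently $h_b (w_a^{z_{ji}})^{-1}=(w_a^{z_{ji}})^{-1}h_{\sigma^{z_{ij}}_a(b)}$). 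One then re-indexes the sum using that $\sigma_a^{z_{ij}}$ is a bijection (Definition \ref{setalgd}), and the result should match $\sum_{a,b}h_a\otimes h_{\sigma^{z_{21}}_a(b)}\otimes(w^{z_{23}}_b)^{-1}(w^{z_{13}}_a)^{-1}$ multiplied on the left by ${\cal F}^{z_{12}}_{12}$; the careful bookkeeping of which index the $\sigma^{z_{21}}$ acts on, and the order of the product $(w^{z_{23}}_b)^{-1}(w^{z_{13}}_a)^{-1}$ in the third leg, is exactly what defines ${\cal F}^{*z_{123}}_{12,3}$, so the two expressions agree by construction once the slide is done correctly. This gives the common value ${\cal F}^{z_{123}}_{123}$.

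For part (2), both (i) and (ii) are ``spectator-leg'' versions of the two-site relation ${\cal F}^{z_{ji}}_{ji}{\cal R}^{z_{ij}}_{ij}={\cal R}^{Fz_{ij}}_{ij}{\cal F}^{z_{ij}}_{ij}$: in (i) the pair of active legs is $(j,k)$ with $i$ a spectator, in (ii) the active pair is $(i,j)$ with $k$ a spectator, and the subscript permutations $ikj$, $jik$ on ${\cal F}^{z_{ijk}}_{ijk}$ are precisely the flips $\pi_{jk}$, $\pi_{ij}$ recorded in the ``practical notation'' paragraph. So (i) and (ii) follow from the base case by tensoring with the identity on the spectator leg — the only thing to check is that the spectator leg carries the matching $h$-weight and that nothing in the base-case computation touched it, which is automatic from the definitions ${\cal F}^{z_{123}}_{1,23}=\sum_a h_a\otimes(w^{z_{12}}_a)^{-1}\otimes(w^{z_{13}}_a)^{-1}$, ${\cal F}^{z_{123}}_{123}$, and the block structure of the $\cal R$-matrices.

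I expect the main obstacle to be the index bookkeeping in part (1): keeping straight which tensor slot each $h$, $w$, $\sigma^{z_{21}}$, $\tau^{z_{jk}}$ lives in, getting the left/right order of the two-$w$ product in the third leg right, and correctly applying the commutation rule $w_a^{z_{ji}}h_b=h_{\sigma^{z_{ij}}_a(b)}w_a^{z_{ji}}$ with the right index labels (note the label swap $ij\to ji$ in $\sigma$). The substitution $b\triangleright_{z_{ij}}a=\sigma^{z_{ji}}_b(\tau^{z_{ij}}_{(\sigma^{z_{ij}}_a)^{-1}(b)}(a))$ is the compatibility condition (cf. the corollary after Theorem \ref{le:lndsol}) that makes $\triangleright$ consistent with the $\sigma,\tau$ appearing in the $w$-relations, and it is what guarantees the re-indexings close; I would invoke it exactly where a term of the form $\sigma^{z_{ji}}_{\sigma^{z_{ij}}_a(b)}(\cdots)$ needs to be rewritten as $(\cdots)\triangleright_{z_{ij}}(\cdots)$. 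Once those identifications are made, the verification is mechanical and I would simply state that it follows ``by direct computation using the relations \eqref{qualgbb}''.
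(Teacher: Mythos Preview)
Your treatment of part (1) matches the paper's: compute both sides, slide $h_b$ past $(w^{z_{12}}_a)^{-1}$ using $w_a^{z_{ji}}h_b=h_{\sigma^{z_{ij}}_a(b)}w_a^{z_{ji}}$, and re-index using bijectivity of $\sigma$. Good. But note that the identity $b\triangleright_{z_{ij}} a = \sigma^{z_{ji}}_b(\tau^{z_{ij}}_{(\sigma^{z_{ij}}_a)^{-1}(b)}(a))$ is \emph{not} needed here; part (1) closes with only the $h$--$w$ relation.

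Your argument for part (2) has a genuine gap. The claim that (i) and (ii) are ``spectator-leg versions'' of the two-site relation, obtained by tensoring the base case with an identity on the third leg, is false: the three-fold twist ${\cal F}^{z_{123}}_{123}=\sum_{a,b}h_a\otimes h_b(w_a^{z_{12}})^{-1}\otimes(w_b^{z_{23}})^{-1}(w_a^{z_{13}})^{-1}$ couples all three legs nontrivially, so there is no leg carrying only $1_{\hat{\cal Q}}$. In the paper, (2)(i) is proved in two steps: first one shows ${\cal F}^{z_{132}}_{1,32}{\cal R}^{z_{23}}_{23}={\cal R}^{z_{23}}_{23}{\cal F}^{z_{123}}_{1,23}$, which is \emph{not} automatic --- it is exactly the intertwining ${\cal R}^{z_{jk}}\Delta_{z_{jk}}(w_a^{z_{ik}})=\Delta_{z_{jk}}(w_a^{z_{ik}}){\cal R}^{z_{jk}}$ from Proposition~\ref{basica2b} applied in legs $2,3$; then one multiplies on the left by ${\cal F}^{z_{32}}_{32}$ and uses the two-site base case together with part (1). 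For (2)(ii) the paper instead computes ${\cal F}^{*z_{213}}_{21,3}{\cal R}^{z_{12}}_{12}$ and ${\cal R}^{z_{12}}_{12}{\cal F}^{*z_{123}}_{12,3}$ directly, and matches them using the $w$--$w$ relation $w_a^{z_{ki}}w_b^{z_{ji}}=w^{z_{ji}}_{\sigma^{z_{jk}}_a(b)}w^{z_{ki}}_{\tau^{z_{kj}}_b(a)}$ together with the hypothesis $b\triangleright_{z_{ij}} a=\sigma^{z_{ji}}_b(\tau^{z_{ij}}_{(\sigma^{z_{ij}}_a)^{-1}(b)}(a))$; this is the place where that identity is actually used. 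So you have misplaced the key hypothesis (it goes into (2)(ii), not (1)) and the structural reduction you propose for (2) does not go through without the coproduct/intertwining input.
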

\begin{proof}
The proof is straightforward based on the underlying algebra $\hat {\cal Q}.$
\begin{enumerate}
\item Indeed, this is proved by a direct computation and use of the decorated $p$-rack algebra. 
In fact, ${\cal F}^{z_{123}}_{123} = \sum_{a,b\in X} h_a \otimes h_b (w_a^{z_{12}})^{-1} 
\otimes (w_b^{z_{23}})^{-1} (w_a^{z_{13}})^{-1}.$
\item Given the notation introduced before the theorem it  suffices to show  that
${\cal F}^{z_{132}}_{132}{\cal R}^{z_{23}}_{23} = {\cal R}^{Fz_{23}}_{23} {\cal F}^{z_{123}}_{123}$ and 
${\cal F}^{z_{213}}_{213}  {\cal R}^{z_{12}}_{12} = {\cal R}^{Fz_{12}}_{12} {\cal F}^{z_{123}}_{123}.$ \\
\noindent (i) 
Due to the fact that for all $a \in X,$ $z_{i,j} \in Y,$
$ \Delta_{z_{ij}}(w^{z_{ki}}_a){\cal R}^{z_{ij}} =  {\cal R}^{z_{ij}} \Delta_{z_{ij}}(w^{z_{ki}}_a)$ 
(see Proposition \ref{basica2b})
we arrive at ${\cal F}^{z_{132}}_{1,32} {\cal R}^{z_{23}}_{23} = {\cal R}^{z_{23}}_{23} F^{z_{123}}_{1,23},$ then
\[{\cal F}^{z_{132}}_{132} {\cal R}^{z_{23}}_{23} = {\cal F}^{z_{32}}_{32} {\cal F}^{z_{132}}_{1,32} {\cal R}^{z_{32}}_{23} ={\cal F}^{z_{32}}_{32} 
{\cal R}^{z_{32}}_{23} {\cal F}^{z_{123}}_{1,23} = {\cal R}_{23}^{Fz_{23}}{\cal F}^{z_{123}}_{123}. \]
\noindent (ii) By using the relations 
of the decorated  $p$-rack algebra $\hat {\cal Q}$ we compute:
\begin{eqnarray}
&& {\cal F}^{*z_{213}}_{21,3}{\cal R}^{z_{12}}_{12} = \sum_{a,c\in X} h_a \otimes q^{z_{12}}_a 
h_{a\triangleright_{z_{12}} c} \otimes 
 (w^{z_{23}}_c  w^{z_{13}}_{(\sigma^{z_{12}}_c)^{-1}(a)} )^{-1},\nonumber \\ 
&& {\cal R}^{z_{12}}_{12}{ \cal F}^{*z_{123}}_{12,3}= \sum_{a,b \in X} 
h_a \otimes q^{z_{12}}_a h _{\sigma^{z_{21}}_a(b)} \otimes (w^{z_{13}}_aw^{z_{12}}_b)^{-1}.\nonumber
\end{eqnarray}
Using also,  $~b\triangleright_{z_{12}} a = \sigma^{z_{21}}_b(\tau^{z_{12}}_{(\sigma^{z_{12}}_a)^{-1}(b)}(a))$ and $w^{z_{kj}}_a w^{z_{ij}}_b = w^{z_{ij}}_{\sigma^{z_{ik}}_a(b)} w^{z_{kj}}_{ \tau^{z_{ik}}_b(a)}$
we conclude that ${\cal F}^{*z_{213}}_{21,3}{\cal R}^{z_{12}}_{12} = 
{\cal R}_{12}{ \cal F}^{*z_{123}}_{12,3}$ and consequently (recall 
${\cal F}^{z_{213}}_{213} = {\cal F}^{z_{21}}_{21} {\cal F}^{*z_{213}}_{21,3}$)
\[{\cal F}^{z_{213}}_{213} {\cal R}^{z_{12}}_{12} = {\cal F}^{z_{21}}_{21} {\cal F}^{*z_{213}}_{21,3} {\cal R}^{z_{12}}_{12} 
={\cal F}^{z_{21}}_{21} {\cal R}^{z_{12}}_{12}{\cal F}^{*z_{123}}_{12,3} = {\cal R}_{12}^{Fz_{12}}{\cal F}^{z_{123}}_{123}. \hfill \qedhere
\]
\end{enumerate}
\end{proof}

\begin{cor} \cite{Drinfel'd, Drinfel'd2} Let $Y$ be a non-empty set and $z_{i,j}\in Y.$ 
Let also ${\cal F}^{z_{ij}}$ be an admissible twist and ${\cal R}^{z_{ij}}$ be a solution of the Yang-Baxter equation. 
Then ${\cal R}^{Fz_{ij}}:= ({\cal F}^{z_{ji}})^{(op)} {\cal R}^{z_{ij}} ({\cal F}^{z_{ij}})^{-1}$ 
($({\cal F}^{z_{ji}})^{(op)} = \pi \circ {\cal F}^{z_{ji}},$ $\pi$ is the flip map) 
is also a solution of the Yang-Baxter equation.
\end{cor}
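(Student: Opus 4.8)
The plan is to deduce the parametric Yang--Baxter equation for ${\cal R}^{Fz_{ij}}$ from the one already known for ${\cal R}^{z_{ij}}$ by conjugating the latter with the triple twist ${\cal F}^{z_{123}}_{123}$ of Theorem \ref{twist2}. As a preliminary I would record that ${\cal F}^{z_{ij}}=\sum_{b\in X}h_b\otimes(w^{z_{ij}}_b)^{-1}$ is invertible in $\hat{\cal Q}\otimes\hat{\cal Q}$, with inverse $\sum_{b\in X}h_b\otimes w^{z_{ij}}_b$ (using $\sum_b h_b=1_{\hat{\cal Q}}$ and the invertibility of the $w^{z_{ij}}_b$), and similarly that the permuted/flipped versions ${\cal F}^{z_{\pi(ijk)}}_{\pi(ijk)}$ occurring below are invertible; the defining relation ${\cal R}^{Fz_{ij}}=({\cal F}^{z_{ji}})^{(op)}{\cal R}^{z_{ij}}({\cal F}^{z_{ij}})^{-1}$ is then just the two--site identity ${\cal F}^{z_{ji}}_{ji}{\cal R}^{z_{ij}}_{ij}={\cal R}^{Fz_{ij}}_{ij}{\cal F}^{z_{ij}}_{ij}$ solved for ${\cal R}^{Fz_{ij}}_{ij}$.

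First I would use Theorem \ref{twist2}(2), in the three index assignments of $(i,j,k)$ needed, to write each of ${\cal R}^{Fz_{12}}_{12}$, ${\cal R}^{Fz_{13}}_{13}$, ${\cal R}^{Fz_{23}}_{23}$ as ${\cal R}^{z_{\cdot}}_{\cdot}$ conjugated by suitable flips of the triple twist so that $({\cal F}^{z_{123}}_{123})^{-1}$ stands at the far right of each factor; for example (ii) gives ${\cal R}^{Fz_{12}}_{12}={\cal F}^{z_{213}}_{213}\,{\cal R}^{z_{12}}_{12}\,({\cal F}^{z_{123}}_{123})^{-1}$ and (i) gives ${\cal R}^{Fz_{23}}_{23}={\cal F}^{z_{132}}_{132}\,{\cal R}^{z_{23}}_{23}\,({\cal F}^{z_{123}}_{123})^{-1}$, with the analogous relation for ${\cal R}^{Fz_{13}}_{13}$ chosen so that an inner ${\cal F}$--pair cancels. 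Substituting these into ${\cal R}^{Fz_{12}}_{12}{\cal R}^{Fz_{13}}_{13}{\cal R}^{Fz_{23}}_{23}$, I would cancel the adjacent ${\cal F}^{-1}{\cal F}$ blocks that become identities and rewrite the surviving ones using the cocycle factorizations of Theorem \ref{twist2}(1), ${\cal F}^{z_{12}}_{12}{\cal F}^{*z_{123}}_{12,3}={\cal F}^{z_{23}}_{23}{\cal F}^{z_{123}}_{1,23}={\cal F}^{z_{123}}_{123}$, together with the intertwining relations ${\cal R}^{z_{ij}}\Delta_{z_{ij}}(w^{z_{kj}}_a)=\Delta^{(op)}_{z_{ji}}(w^{z_{kj}}_a){\cal R}^{z_{ij}}$ of Proposition \ref{basica2b}, so as to slide the leftover twists past the individual ${\cal R}$--factors. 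The outcome should be ${\cal R}^{Fz_{12}}_{12}{\cal R}^{Fz_{13}}_{13}{\cal R}^{Fz_{23}}_{23}=\widetilde{\cal F}\,\big({\cal R}^{z_{12}}_{12}{\cal R}^{z_{13}}_{13}{\cal R}^{z_{23}}_{23}\big)\,({\cal F}^{z_{123}}_{123})^{-1}$, where $\widetilde{\cal F}$ is a fixed reordering of the triple twist that does not depend on the order of the three ${\cal R}$'s; performing the mirror computation on the other side gives ${\cal R}^{Fz_{23}}_{23}{\cal R}^{Fz_{13}}_{13}{\cal R}^{Fz_{12}}_{12}=\widetilde{\cal F}\,\big({\cal R}^{z_{23}}_{23}{\cal R}^{z_{13}}_{13}{\cal R}^{z_{12}}_{12}\big)\,({\cal F}^{z_{123}}_{123})^{-1}$ with the same $\widetilde{\cal F}$, and the Yang--Baxter equation for ${\cal R}^{z_{ij}}$ finishes the argument. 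Equivalently, and perhaps more transparently, one can observe that the relations of Theorem \ref{twist2}(1)--(2) and Proposition \ref{basica2b} encode exactly the hexagon- and intertwining-type identities of a (parametric) quasi-triangular structure for which ${\cal R}^{Fz_{ij}}$ is the $R$-matrix, whence the parametric Yang--Baxter equation follows by the standard quasi-triangular computation.

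The hard part is not conceptual but bookkeeping: one must track the numerous permuted and flipped twist components ${\cal F}^{z_{\pi(ijk)}}_{\pi(ijk)}$ and verify that, after all the cancellations, the left dressing $\widetilde{\cal F}$ produced from the two sides is literally the same element and that the right dressing is in both cases exactly $({\cal F}^{z_{123}}_{123})^{-1}$ --- this is where the parametric nature of the structure, in which only the parametric (co)associativity of Lemma \ref{basicc2} and Proposition \ref{basicb2} is available, must be used carefully in place of the ordinary one. Beyond that, everything reduces to repeated use of the defining relations (\ref{qualgbb}) of the decorated $p$-rack algebra and the invertibility of the generators $q^{z_{ij}}_a$, $w^{z_{ij}}_a$ and of ${\cal F}^{z_{ij}}$ itself.
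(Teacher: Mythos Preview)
Your approach is essentially the same as the paper's, but you have overcomplicated the bookkeeping. The paper simply multiplies both sides of the Yang--Baxter equation for ${\cal R}^{z_{ij}}$ on the left by ${\cal F}^{z_{321}}_{321}$ and then applies Theorem~\ref{twist2}(2) three times to push the twist through, obtaining
\[
{\cal F}^{z_{321}}_{321}\,{\cal R}^{z_{12}}_{12}{\cal R}^{z_{13}}_{13}{\cal R}^{z_{23}}_{23}
= {\cal R}^{Fz_{12}}_{12}{\cal R}^{Fz_{13}}_{13}{\cal R}^{Fz_{23}}_{23}\,{\cal F}^{z_{123}}_{123}
\quad\text{and}\quad
{\cal F}^{z_{321}}_{321}\,{\cal R}^{z_{23}}_{23}{\cal R}^{z_{13}}_{13}{\cal R}^{z_{12}}_{12}
= {\cal R}^{Fz_{23}}_{23}{\cal R}^{Fz_{13}}_{13}{\cal R}^{Fz_{12}}_{12}\,{\cal F}^{z_{123}}_{123},
\]
after which invertibility of ${\cal F}^{z_{123}}_{123}$ finishes the proof. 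The point you missed is that the six permuted instances of Theorem~\ref{twist2}(2)(i)--(ii) can be chosen so that \emph{every} inner ${\cal F}^{-1}{\cal F}$ pair telescopes exactly: for the left side one uses ${\cal F}^{z_{321}}_{321}{\cal R}^{z_{12}}_{12}={\cal R}^{Fz_{12}}_{12}{\cal F}^{z_{312}}_{312}$, then ${\cal F}^{z_{312}}_{312}{\cal R}^{z_{13}}_{13}={\cal R}^{Fz_{13}}_{13}{\cal F}^{z_{132}}_{132}$, then ${\cal F}^{z_{132}}_{132}{\cal R}^{z_{23}}_{23}={\cal R}^{Fz_{23}}_{23}{\cal F}^{z_{123}}_{123}$, and similarly on the other side. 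Your particular choice (starting from ${\cal R}^{Fz_{12}}_{12}={\cal F}^{z_{213}}_{213}{\cal R}^{z_{12}}_{12}({\cal F}^{z_{123}}_{123})^{-1}$) does not telescope against the available expressions for ${\cal R}^{Fz_{13}}_{13}$, which is why you were led to invoke the cocycle factorisation of Theorem~\ref{twist2}(1) and the intertwining relations of Proposition~\ref{basica2b}; none of that extra machinery is needed here.
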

\begin{proof}
The proof is quite straightforward,  \cite{Drinfel'd, Drinfel'd2} 
(see also proof in \cite{Doikou1} for set-theoretic solutions),  
we just give a brief outline here: if ${\cal F}^{z_{ij}}$ is admissible, 
then from the Yang-Baxter equation and due to Proposition \ref{twist2}, $z_{1,2,3}\in Y:$
\begin{equation}
{\cal  F}^{z_{321}}_{321} {\cal R}^{z_{12}}_{12} {\cal R}^{z_{13}}_{13} {\cal R}^{z_{23}}_{23} ={\cal  F}^{z_{321}}_{321} 
{\cal R}^{z_{23}}_{23}{\cal R}^{z_{13}}_{13}{\cal R}^{z_{12}}_{12}\ \Rightarrow\  {\cal R}^{Fz_{12}}_{12} 
{\cal R}^{Fz_{13}}_{13} {\cal R}^{Fz_{23}}_{23}{\cal  F}^{z_{123}}_{123} 
=  {\cal R}^{Fz_{23}}_{23}{\cal R}^{Fz_{13}}_{13}{\cal R}^{Fz_{12}}_{12}{\cal  F}^{z_{123}}_{123}. \nonumber
\end{equation}
But ${\cal F}^{z_{123}}_{123}$ is invertible, hence ${\cal R}^{Fz_{ij}}$ indeed satisfies the YBE.
\end{proof}

\begin{lemma} (The n-fold twist.)  
Let $\hat {\cal Q}$ be the decorated $p$-rack algebra. Let also ${\cal R}^{z_{ij}} = 
\sum_{a\in X }h_a \otimes q^{z_{ij}}_a \in \hat {\cal Q} \otimes \hat {\cal Q}$ 
be a solution of the Yang-Baxter equation and ${\cal F}^{z_{ij}} \in \hat {\cal Q} \otimes \hat {\cal Q},$ such that 
${\cal F}^{z_{ij}} = \sum_{a \in X} h_a\otimes (w_a^{z_{ij}})^{-1},$ $z_{i,j} \in X$.
Define also for all $z_{1,2, \ldots ,n} \in Y:$
\begin{eqnarray}
  {\cal F}^{z_{12\ldots n}}_{1, 2 3\ldots n} &:=& \sum_{a\in X} h_a \otimes \Delta_{z_{2\ldots n}}^{(n-1)}((w_a^{z_{12}})^{-1}) =\sum_{a\in X} h_a \otimes (w_a^{z_{12}})^{-1}  \otimes(w_a^{z_{13}})^{-1} \otimes   \ldots \otimes (w_a^{z_{1n}})^{-1}, \nonumber\\ 
    {\cal F}^{*z_{12\ldots n}}_{1 2 \ldots n-1, n} &:=& \sum_{a_1, a_2, \ldots, a_{n-1} \in X} 
   h_{a_1} \otimes h_{\sigma^{z_{21}}_{a_1}(a_2)} \otimes h_{\sigma^{z_{31}}_{a_1}(\sigma^{z_{32}}_{a_2}(a_3))} \otimes \ldots \nonumber \\ 
   & \otimes &
   h_{\sigma^{z_{n-11}}_{a_1}(\sigma^{z_{n-12}}_{a_2}(\ldots 
   \sigma^{z_{n-1 n-2}}_{a_{n-2}}(a_{n-1})) \ldots )} 
\otimes (w_{a_{n-1}}^{z_{n-1n}})^{-1}
(w_{a_{n-2}}^{z_{n-2n}})^{-1} \ldots (w_{a_{1}}^{z_{1n}})^{-1} \nonumber 
   \end{eqnarray}
Then,
\begin{enumerate}
\item ${\cal F}^{z_{2\ldots n}}_{2\ldots n}
{\cal F}^{z_{12\ldots n}}_{1,2\ldots n} =
{\cal F}^{z_{12\ldots n-1}}_{12\ldots n-1} {\cal F}^{*z_{12\ldots n}}_{12\ldots n-1,n}=: {\cal F}^{z_{1 2\ldots n}}_{12\ldots n}.$ 
\\
\item The explicit expression of the $n$-fold twist is given as
\begin{eqnarray}
{\cal F}^{z_{12\ldots n}}_{12\ldots n} 
&=&
\sum_{a_1, a_2, \ldots, a_{n-1} \in X} h_{a_1} \otimes h_{a_2} (w_{a_1}^{z_{12}})^{-1} \otimes 
h_{a_3} (w_{a_2}^{z_{23}})^{-1}(w_{a_1}^{z_{13}})^{-1} \otimes \ldots \otimes \nonumber \\
& & h_{a_{n-1}}(w_{a_{n-2}}^{z_{n-2 n-1}})^{-1} \ldots (w_{a_1}^{z_{1n-1}})^{-1} \otimes (w^{z_{n-1n}}_{a_{n-1}})^{-1}(w_{a_{n-2}}^{z_{n-2 n}})^{-1} \ldots (w_{a_1}^{z_{1n}})^{-1}. 
 \label{nfold}
\end{eqnarray}

\item ${\cal F}_{1,2 3 \ldots j+1 j\ldots n}^{z_{12 \ldots j+1 j \ldots n}} {\cal R}^{z_{jj+1}}_{j j+1} = {\cal R}^{z_{jj+1}}_{jj+1} {\cal F}_{1,2 3 \ldots j j+1\ldots n}^{z_{12 \ldots j j+1 \ldots n}},$ $~n-1 \geq j>1,$\\
\\
${\cal F}_{12 \ldots j+1 j\ldots n-1, n}^{z_{12 \ldots j+1 j \ldots n}} {\cal R}^{z_{jj+1}}_{j j+1} = {\cal R}^{z_{jj+1}}_{jj+1} {\cal F}_{1 2 \ldots j j+1\ldots n-1,  n}^{z_{12 \ldots j j+1 \ldots n}},$ $~n-1>j\geq 1,$
\\
\\
${\cal F}_{12 \ldots j+1 j\ldots n}^{z_{12 \ldots j+1 j \ldots n}} {\cal R}_{j j+1} = {\cal R}^{Fz_{jj+1}}_{jj+1} {\cal F}_{1 2 \ldots j j+1\ldots n}^{z_{12 \ldots j j+1 \ldots n}},$ $~n-1 \geq j\geq 1.$ \end{enumerate}
\end{lemma}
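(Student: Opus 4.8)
The plan is to prove the three identities of part (3) by induction on $n$, using the decomposition in part (1), the $n$-coproduct machinery of Lemma~\ref{basicc2}, and the intertwining relations of Proposition~\ref{basica2b}, exactly as the $n=3$ case was handled in Theorem~\ref{twist2}. First I would establish the two ``splitting'' relations of part (1): these follow by a direct computation inside $\hat{\cal Q}$, expanding $\Delta^{(n-1)}$ via its recursive definition and using the relation $w_a^{z_{ki}} w_b^{z_{ji}} = w^{z_{ji}}_{\sigma^{z_{jk}}_a(b)} w^{z_{ki}}_{\tau^{z_{kj}}_b(a)}$ from~(\ref{qualgbb}) together with the coproduct of $h_a$; this is the analogue of ${\cal F}^{z_{12}}_{12} {\cal F}^{*z_{123}}_{12,3} = {\cal F}^{z_{23}}_{23} {\cal F}^{z_{123}}_{1,23}$. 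The explicit formula~(\ref{nfold}) in part (2) is then obtained by substituting the recursive form of $\Delta^{(n-1)}_{z_{2\ldots n}}((w_a^{z_{12}})^{-1})$ into ${\cal F}^{z_{2\ldots n}}_{2\ldots n} {\cal F}^{z_{12\ldots n}}_{1,2\ldots n}$ and collecting terms, again reducing to repeated use of the $w$-exchange relation; I would phrase this as an induction where the inductive step peels off the last tensor factor.

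For part (3), the key observation is that the three stated relations are nested: the third ``twisted'' relation follows from the first two ``untwisted'' ones combined with part (1), precisely as in the proof of Theorem~\ref{twist2}(2). So the real content is the two untwisted intertwining relations. For the first one, ${\cal F}_{1,2\ldots j+1 j\ldots n}^{z} {\cal R}^{z_{jj+1}}_{jj+1} = {\cal R}^{z_{jj+1}}_{jj+1} {\cal F}_{1,2\ldots jj+1\ldots n}^{z}$, I would note that ${\cal R}^{z_{jj+1}}_{jj+1}$ acts only on tensor slots $j$ and $j+1$, while ${\cal F}_{1,23\ldots n}$ restricted to those slots is essentially $\Delta_{z_{jj+1}}$ applied to $(w_a^{z_{1\bullet}})^{-1}$-type factors; the commutation then reduces to the statement $\Delta_{z_{jj+1}}(w^{z_{kj+1}}_a) {\cal R}^{z_{jj+1}} = {\cal R}^{z_{jj+1}} \Delta_{z_{jj+1}}(w^{z_{kj+1}}_a)$, which is part of Proposition~\ref{basica2b}(2). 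The second relation, involving ${\cal F}^*_{12\ldots n-1,n}$, is the more delicate one: here the $h_{\sigma^{z}_{\cdots}}$-factors in slots $1,\ldots,n-1$ get permuted by ${\cal R}^{z_{jj+1}}_{jj+1}$ (for $j < n-1$), and one must check that the combined action of $\sigma$ on the arguments is compatible; for $j = n-1$ one uses instead $w^{z_{kj}}_a w^{z_{ij}}_b = w^{z_{ij}}_{\sigma^{z_{ik}}_a(b)} w^{z_{kj}}_{\tau^{z_{ik}}_b(a)}$ and the identity $b\triangleright_{z_{ij}} a = \sigma^{z_{ji}}_b(\tau^{z_{ij}}_{(\sigma^{z_{ij}}_a)^{-1}(b)}(a))$, exactly mirroring the computation of ${\cal F}^{*z_{213}}_{21,3}{\cal R}^{z_{12}}_{12} = {\cal R}_{12}{\cal F}^{*z_{123}}_{12,3}$ in Theorem~\ref{twist2}.

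The induction itself would run as follows: assume parts (1)--(3) hold up to $n-1$; use part (1) at level $n$ to write ${\cal F}^{z_{12\ldots n}}_{12\ldots n}$ as a product of a factor acting on slots $\{2,\ldots,n\}$ (or $\{1,\ldots,n-1\}$) and a factor of ${\cal F}$ or ${\cal F}^*$ type. For a given $j$ with $1\le j\le n-1$, ${\cal R}^{z_{jj+1}}_{jj+1}$ lands inside one of these two blocks, and we apply the inductive hypothesis there, then reassemble using part (1) again — this is the same bootstrapping used to derive ${\cal F}^{z_{132}}_{132}{\cal R}^{z_{23}}_{23} = {\cal R}^{Fz_{23}}_{23}{\cal F}^{z_{123}}_{123}$ from ${\cal F}^{z_{132}}_{1,32}{\cal R}^{z_{23}}_{23} = {\cal R}^{z_{23}}_{23}{\cal F}^{z_{123}}_{1,23}$ in Theorem~\ref{twist2}.

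I expect the main obstacle to be the bookkeeping in the ${\cal F}^*$-block when $j$ is in the ``interior'' ($1 < j < n-1$): one has to verify that conjugating the tower of iterated $\sigma$-maps $h_{\sigma^{z_{j1}}_{a_1}(\cdots(a_j)\cdots)} \otimes h_{\sigma^{z_{j+1,1}}_{a_1}(\cdots(a_{j+1})\cdots)}$ by the flip coming from ${\cal R}^{z_{jj+1}}_{jj+1}$, together with the relabeling forced on the $w$-factors, precisely reproduces ${\cal F}^*$ with indices $j$ and $j+1$ swapped. This should come down to the relation $\sigma^{z_{ik}}_c(b) \triangleright_{z_{ij}} \sigma^{z_{jk}}_{c}(a) = \sigma^{z_{jk}}_c(b \triangleright_{z_{ij}} a)$ and the first relation of~(\ref{basicre}) — both guaranteed by Proposition~\ref{qua2} for the decorated $p$-shelf algebra — but disentangling the nested parameter dependences $z_{jk}$ versus $z_{j+1,k}$ correctly is where the combinatorial care is needed. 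The untwisted-to-twisted reduction and the $j=n-1$ boundary case should be essentially routine once the pattern from Theorem~\ref{twist2} is in hand.
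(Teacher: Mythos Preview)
Your proposal is correct and follows essentially the same approach as the paper: the paper's own proof is the single sentence ``These statements are proven by iteration and direct computation by using the $\hat{\cal Q}$ algebra relations. Part (2) of Theorem~\ref{twist2} is also used in proving (3),'' and what you have written is a careful unpacking of exactly that --- induction on $n$, peeling via the decomposition of part (1), and reducing the intertwining relations to the $n=3$ base case of Theorem~\ref{twist2}(2) together with the coproduct-commutation $\Delta_{z_{jj+1}}(w_a){\cal R}^{z_{jj+1}} = {\cal R}^{z_{jj+1}}\Delta_{z_{jj+1}}(w_a)$. One minor remark: since the Lemma is stated for the decorated $p$-rack algebra rather than the $p$-set Yang-Baxter algebra, the cleaner reference for the $w$-intertwining is Proposition~\ref{basic02}(2) rather than Proposition~\ref{basica2b}(2); for the $w$-generators the two coproducts $\Delta'$ and $\Delta$ coincide, so the argument is unaffected (and indeed the paper itself cites \ref{basica2b} in the proof of Theorem~\ref{twist2}), but it is worth being aware that no $\bullet$-structure is actually needed here.
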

\begin{proof} 
These statements are proven by iteration and direct computation by using the $\hat {\cal Q}$ algebra relations. Part (2) of Theorem \ref{twist2} is also used in proving (3).
\end{proof}

\begin{cor} \label{corf}
    Let ${\cal F}^{z_{12\ldots n}}_{12\ldots n}$ be the $n$-fold twist \ref{nfold}. 
    Let also $(X, \circ)$ be a group, 
    $z_i \circ z_j = z_j \circ z_i$ and 
    $w_a^{z_{jk}}w_b^{z_{ik}} = w_{a\circ b}^{z_{i\circ j k}}$ 
    for all $a,b \in X,$ $z_{i,j,k} \in Y,$ 
    where $z_{i\circ j k}$ 
    denotes dependence on $(z_{i \circ j},  z_k)$. Then for $z_{1,2,\ldots,n} \in Y,$
    \[{\cal F}^{z_{12\ldots n}}_{12\ldots n} = 
    \sum_{a_1, \ldots, a_{n}\in X } h_{a_1} \otimes h_{a_2} (w^{z_{12}}_a)^{-1}\otimes \ldots \otimes h_{a_{n-1}} (w^{z_{1\circ 2 \ldots \circ n-2 n-1}}_{a_1\circ a_2 \circ \ldots \circ a_{n-1} })^{-1} \otimes (w^{z_{1\circ 2 \ldots \circ n-1 n}}_{a_1\circ a_2 \circ \ldots \circ a_{n} })^{-1}.\]
\end{cor}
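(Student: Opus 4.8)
The plan is to take the closed-form expression of the $n$-fold twist established in the preceding lemma, namely~(\ref{nfold}), as the starting point and to collapse, tensor slot by tensor slot, the strings of $w$-inverses occurring in it by means of the assumed group-type relation. Recall that~(\ref{nfold}) exhibits ${\cal F}^{z_{12\ldots n}}_{12\ldots n}$ as a sum over $a_1,\ldots,a_{n-1}\in X$ whose $m$-th tensor factor, for $2\le m\le n-1$, is $h_{a_m}\,(w_{a_{m-1}}^{z_{m-1\,m}})^{-1}(w_{a_{m-2}}^{z_{m-2\,m}})^{-1}\cdots(w_{a_1}^{z_{1m}})^{-1}$, and whose last factor is $(w_{a_{n-1}}^{z_{n-1\,n}})^{-1}(w_{a_{n-2}}^{z_{n-2\,n}})^{-1}\cdots(w_{a_1}^{z_{1n}})^{-1}$. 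Since the $h$'s are already in the desired shape, the whole task reduces to simplifying each product of $w$-inverses.

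First I would record the inverted form of the hypothesis: from $w_a^{z_{jk}}w_b^{z_{ik}}=w_{a\circ b}^{z_{i\circ j\,k}}$ one gets $(w_b^{z_{ik}})^{-1}(w_a^{z_{jk}})^{-1}=(w_{a\circ b}^{z_{i\circ j\,k}})^{-1}$. Then, for a fixed value of the last index $m$, I would prove by induction on $k\ge 1$ that
\[
(w_{a_{k}}^{z_{k\,m}})^{-1}(w_{a_{k-1}}^{z_{k-1\,m}})^{-1}\cdots(w_{a_1}^{z_{1m}})^{-1}
=\big(w^{z_{1\circ 2\circ\cdots\circ k\,m}}_{a_1\circ a_2\circ\cdots\circ a_k}\big)^{-1},
\]
the case $k=1$ being trivial. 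In the inductive step one merges the leading factor $(w_{a_k}^{z_{k\,m}})^{-1}$ with the already collapsed tail $\big(w^{z_{1\circ\cdots\circ k-1\,m}}_{a_1\circ\cdots\circ a_{k-1}}\big)^{-1}$ via the inverted relation, then uses associativity of $\circ$ to write $(a_1\circ\cdots\circ a_{k-1})\circ a_k=a_1\circ\cdots\circ a_k$ and the commutativity $z_i\circ z_j=z_j\circ z_i$ to replace the resulting superscript $z_k\circ(z_1\circ\cdots\circ z_{k-1})$ by $z_1\circ\cdots\circ z_k$. Feeding this identity with $k=m-1$ back into~(\ref{nfold}) converts the $m$-th factor ($2\le m\le n-1$) into $h_{a_m}\big(w^{z_{1\circ\cdots\circ m-1\,m}}_{a_1\circ\cdots\circ a_{m-1}}\big)^{-1}$ and the last factor into $\big(w^{z_{1\circ\cdots\circ n-1\,n}}_{a_1\circ\cdots\circ a_{n-1}}\big)^{-1}$, which is precisely the asserted expression, the summation range remaining $a_1,\ldots,a_{n-1}\in X$.

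The only genuinely delicate point, and the one I would be most careful about, is the bookkeeping of the composite superscripts: within a given slot the last superscript index must be held fixed (equal to $m$, respectively to $n$) when the inverted relation is applied, and commutativity of the $z_i$ must be invoked to guarantee that the index $z_1\circ z_2\circ\cdots\circ z_k$ produced by the successive merges does not depend on the order of merging; one should also note that the relation $w_a^{z_{jk}}w_b^{z_{ik}}=w_{a\circ b}^{z_{i\circ j\,k}}$ is being used with $z_j$ taken to be the group product $z_1\circ\cdots\circ z_{k-1}$, which is legitimate provided these composite indices remain among those over which the generators $w$ are defined, as is implicitly assumed here. Beyond this, no obstacle arises and the remaining verification is a routine iteration on top of~(\ref{nfold}).
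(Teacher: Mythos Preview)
Your approach is exactly the paper's: start from the explicit $n$-fold twist~(\ref{nfold}) and iteratively collapse each string of $w$-inverses using the assumed relation $w_a^{z_{jk}}w_b^{z_{ik}}=w_{a\circ b}^{z_{i\circ j\,k}}$ (in its inverted form), with the commutativity of the $z_i$ handling the superscript bookkeeping. Your derivation is simply a spelled-out version of the paper's one-line proof, and in fact your careful index tracking (sum over $a_1,\ldots,a_{n-1}$, subscript $a_1\circ\cdots\circ a_{m-1}$ in slot $m$) silently corrects what appear to be minor typos in the displayed formula of the corollary.
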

\begin{proof}
    This is a consequence of the form of the 
    $n$-twist (\ref{nfold}) and relation $w_a^{z_{jk}}w_b^{z_{ik}} = w_{a\circ b}^{z_{i\circ j k}}$ for all $a,b \in X,$ $z_{i,j,k} \in Y.$ 
    \end{proof}

\begin{rem} \label{rtwi}
(Twisted universal ${\cal R}$-matrix) 
We recall the admissible twist ${\cal F}^{z_{12}} = \sum_{b\in X}h_b \otimes (w_b^{z_{12}})^{-1},$ and the universal $p$-rack ${\cal R}$-matrix is ${\cal R}^{z_{12}} = \sum_{a\in X} h_a\otimes q_a^{z_{12}}$, then we obtain:
\begin{itemize}

\item The twisted ${\cal R}-$matrix: 
\[{\cal R}^{Fz_{12}} = ({\cal F}^{z_{21}})^{(op)} {\cal R}^{z_{12}} ({\cal F}^{z_{12}})^{-1} \]

\item The twisted coproducts: for $z_{12} \in Y,$
$\Delta_{z_{12}}^F(y) = 
{\cal F}^{z_{12}} \Delta_{z_{12}}(y) 
({\cal F}^{z_{12}})^{-1},$ $y\in \hat {\cal Q}$  
and we recall ({in the case of the $p$-set theoretic Yang-Baxter algebra, Proposition (\ref{basica2b})}), for $a\in X,$ $z_{i,1,2} \in Y$
\[ \Delta_{z_{12}}(w^{z_{i2}}_a) =w^{z_{i1}}_a\otimes w^{z_{i2}}_a,\quad \Delta_{z_{12}}(h_a) = 
\sum_{b,c\in X} h_b \otimes h_c\Big |_{b\bullet_{z_{12}} c =a}, \quad \Delta_{z_{12}}(q^{z_{i2}}_a) =q^{z_{i1}}_a \otimes q^{z_{i2}}_a.\]
\end{itemize}
Moreover it  follows that ${\cal R}^{Fz_{21}} \Delta_{z_{12}}^F(y) = \Delta_{z_{12}}^{F(op)}(y) {\cal R}^{Fz_{12}},$ $y \in \hat {\cal Q},$ $z_{1,2} \in Y.$
\end{rem}

\begin{rem} \label{remfu2b}  
{\bf Fundamental representation $\&$ the set-theoretic solution:}\\  
Let $\hat {\cal Q}$ be the decorated $p$-rack algebra and $\rho: \hat  {\cal Q} \to {\EEnd({\mathbb C}^n)},$ such that
\begin{equation}
q^{z_{ij}}_a \mapsto \sum_{x \in X} e_{x, a \triangleright_{z_{ij}} x}, \quad h_a\mapsto e_{a,a}, \quad  
w^{z_{ij}}_a \mapsto \sum_{b \in X} e_{\sigma^{z_{ji}}_a(b),b}.\label{repbb}
\end{equation}
This is a representation of the decorated $p$-rack algebra if and only if $(X, \triangleright_{z_{ij}})$ is a rack for all $z_{i,j} \in Y$ and $\sigma^{z_{ik}}_{a}(\sigma^{z_{ij}}_b(c)) = \sigma^{z_{ij}}_{\sigma^{z_{jk}}_{a}(b)}(\sigma^{z_{ik}}_{\tau^{z_{jk}}_b(a)}(c))$ for all $a,b,c \in X.$
Then $ {\cal R}^{Fz_{ij}} \mapsto R^{Fz_{ij}}= 
\sum_{a,b\in X} e_{b,\sigma^{z_{ij}}_a(b)} \otimes e_{a, \tau^{z_{ij}}_b(a)},$
where we recall that $\tau^{z_{ij}}_b(a):=
\sigma^{z_{ji}}_{(\sigma^{z_{ij}}_a)^{-1}(b)}
(\sigma^{z_{ij}}_a(b) 
\triangleright_{z_{ij}} a).$ We note that $R^{Fz_{ij}}$ is the linearized version of the set-theoretic solution, {and we also notice that $(R^{Fz_{12}})^{-1} = (R^{Fz_{12}})^T,$ where $^T$ denotes total transposition.}

The $n$-fold twist \ref{nfold} in the fundamental representation becomes:
\begin{eqnarray}
&& {\cal F}^{z_{12\ldots n}}_{12 \ldots n} \mapsto F^{z_{12\ldots n}}_{12 \ldots n} =\sum_{a_1, \ldots a_n \in X} e_{a_1, a_1} \otimes e_{a_2, \sigma_{a_1}^{z_{21}}(a_2)}  \otimes \ldots \otimes \nonumber \\ 
&& e_{a_{n-1}, \sigma^{z_{n-11}}_{a_1}
 (\sigma^{z_{n-12}}_{a_2}( \ldots \sigma^{z_{n-1n-2}}_{a_{n-2}}(a_{n-1}) \ldots ))} \nonumber \otimes e_{a_n, \sigma^{z_{n1}}_{a_1}(\sigma^{z_{n2}}_{a_2}( \ldots \sigma^{z_{nn-1}}_{a_{n-1}}(a_n) \ldots ))}. \\ & &\label{repf} 
 \end{eqnarray}
In the absence of parameters (non-parametric case) expression (\ref{repf}) reduces to the expression derived in \cite{Doikou1}
and should also coincide with the linearized version of the non-local transformation introduced in \cite{Sol}.
\end{rem}

Let the universal twisted ${\cal R}$-matrix of Remark \ref{rtwi} expressed in the compact form 
${\cal R}^{Fz_{ij}} = \sum_{a,b \in X} {\mathrm f}^{z_{ij}}_{b,a} \otimes {\mathrm g}^{z_{ij}}_{a,b},$ then from the Yang-Baxter equation, and after recalling the representations 
(\ref{repbb}):
$(\rho \otimes \id){\cal R}^{Fz_{ij}} := L^{Fz_{ij}} = \sum_{a\in X}e_{b, \sigma^{z_{ij}}_a(b)} \otimes {\mathrm g}^{z_{ij}}_{a,b},$ 
$~(\id \otimes \rho){\cal R}^{Fz_{ij}}:= \hat L^{Fz_{ij}} = \sum_{a\in X} {\mathrm f}^{z_{ij}}_{b,a} \otimes e_{a, \tau^{z_{ij}}_b(a)},$ and 
$(\rho \otimes \rho){\cal R}^{Fz_{ij}} :=R^{Fz_{ij}} = \sum_{a,b \in X} e_{b,\sigma^{z_{ij}}_a(b)} \otimes e_{a, \tau^{z_{ij}}_b(a)},$ 
the consistent algebraic relations (\ref{a}) are satisfied (the interested reader is also 
referred to \cite{DoiRyb22} for detailed computations). 
These  lead to the $p$-set Yang-Baxter algebra and provide a consistency check on the associated algebraic relations.

\begin{exa}
Consider the fundamental representation of the $p$-set Yang-Baxter algebra $\hat {\cal Q}$ (recall also Examples \ref{exa11} and \ref{exabasic}, e.g. for $\xi =1$), 
$\rho: \hat {\cal Q} \to \EEnd({\mathbb C}^n),$ $w^{z_{ij}}_a 
\mapsto \omega^{z_{ij}}_a:= 
\sum_{b \in X} e_{\sigma^{z_{ji}}_a(b) ,b}$ $a \in X$ $z_{i,j} \in Y,$ then:
\begin{eqnarray}
 \omega^{z_{jk}}_a \omega^{z_{ik}}_b =  \sum_{c\in X} e_{\sigma^{z_{kj}}_a(\sigma^{z_{ki}}_b(c)),c}. \label{conj}
\end{eqnarray}
Recall also the map for $z_{i,j} \in Y,$ $\sigma_a^{z_{ij}}: X \to X,$ $\sigma^{z_{ij}}_a(b) = z_i^{-1} -a\circ z_i^{-1} \circ z_j +a\circ b\circ z_j$ and
\begin{equation}
a\circ b = \sigma^{z_{ij}}_a(b) \circ \tau_b^{z_{ij}}(a) \quad \mbox{and} \quad  \sigma^{z_{kj}}_a(\sigma^{z_{ki}}_b(c)) = \sigma^{z_{k i\circ j}}_{a\circ b}(c),
\end{equation}
then via (\ref{conj}), we conclude that $\omega^{z_{jk}}_a \omega^{z_{ik}}_b = \omega^{z_{i\circ jk}}_{a\circ b} : = \omega^{z_{ik}}_{\sigma^{z_{ij}}_a(b)} \omega^{z_{jk}}_{\tau^{z_{ij}}_b(a)},$ where recall  the  shorthand notation $z_{i\circ j k}$ denotes dependence on $(z_i \circ z_j, z_k).$  

The $n$-fold twist in this case becomes (see also related Corollary \ref{corf}):
\begin{eqnarray}
 F^{z_{12\ldots n}}_{12 \ldots n} &=&\sum_{a_1, \ldots a_n \in X} e_{a_1, a_1} \otimes e_{a_2, \sigma_{a_1}^{z_{21}}(a_2)}  \otimes \ldots   \otimes  e_{a_{n-1}, \sigma^{z_{n-1 1\circ 2\circ \ldots \circ n-2}}_{a_1 \circ a_2\circ \ldots \circ a_{n-2}}
 (a_{n-1})} \nonumber \otimes e_{a_n, \sigma^{z_{n 1\circ 2\circ \ldots \circ n-1}}_{a_1 \circ a_2\circ \ldots \circ a_{n-1}}
 (a_{n})}.  \nonumber \label{repf2}
 \end{eqnarray}

\end{exa}

\subsection*{\texorpdfstring{The special $p$-set algebra}{}}
Recall the decorated $p$-rack algebra and consider the special case, where for all $a,b \in X,$ $z_{i,j}\in Y$ $a\triangleright_{z_{ij}} b = b,$ and consequently
$\sigma^{z_{ji}}_{\sigma^{z_{ij}}_a(b)}(\tau_b^{z_{ij}}(a)) = a.$ Let also, $q_a^{z_{ij}} = 1_{\hat{\cal Q}}$, then the decorated $p$-rack algebra reduces to the {\it special $p$-set algebra}.
\begin{defn}  
\label{setalgd22}  Let 
$\sigma^{z_{ij}}_a, \ \tau_b^{z_{ij}}: X\to X,$ and $\sigma^{z_{ij}}_a$ be a bijection for all $a\in X,$ $z_{i,j} \in Y$.
We say that the unital, associative algebra $\hat {\cal Q}$ over $k,$
generated by intederminates 
$h_a, w^{z_{ij}}_a, (w^{z_{ij}}_a)^{-1} \in \hat {\cal Q},$ $a \in X,$ 
$1_{\hat {\cal Q}}$ (the unit element)
and relations, for $a,b \in X,$ $z_{i,j,k} \in Y:$
\begin{eqnarray}
h_a  h_b =\delta_{a, b} h_a, 
\quad w^{z_{ij}}_a (w^{z_{ij}}_a)^{-1} =1_{\hat {\cal Q}}, ~~ 
w_a^{z_{ki}} w_b^{z_{ji}}= 
w^{z_{ji}}_{\sigma^{z_{jk}}_a(b)} w_{\tau^{z_{kj}}_{b}(a)}^{z_{ki}} ~~  w_a^{z_{ji}} h_b = h_{\sigma^{z_{ij}}_a(b)} w_a^{z_{ji}},   \label{qualgbb2}
\end{eqnarray}
is a {\it special $p$-set algebra}.
\end{defn}

In this case the $p$-rack universal ${\cal R}$-matrix reduces to the identity map, ${\cal R}^{z_{ij}} = \id,$ and the twisted ${\cal R}$-matrix is
\[{\cal R}^{Fz_{12}} = ({\cal F}^{z_{21}})^{(op)} ({\cal F}^{z_{12}})^{-1} = \sum_{a,b \in X} (w_a^{z_{ji}})^{-1} h_b \otimes h_a w_b^{z_{ij}}. \]
This is an reversible solution as it satisfies, ${\cal R}_{12}^{Fz_{12}}{\cal R}_{21}^{Fz_{21}} = \id.$ In the fundamental representation (Remark \ref{remfu2b}), ${\cal R}^{Fz_{ij}} \mapsto  R^{Fz_{ij}} = \sum_{a,b \in X} e_{b, \sigma^{z_{ij}}_{a}(b)} \otimes e_{a, \tau^{z_{ij}}_b(a)},$ i.e. it reduces to the linearized version of the reversible set-theoretic solution.

\subsection*{Acknowledgments}
\noindent 
Support from the EPSRC research grant  EP/V008129/1 is  acknowledged.

\end{document}